\newtheorem{theorem}{Theorem}[section]
\newtheorem{corollary}[theorem]{Corollary}
\newtheorem{lemma}[theorem]{Lemma}
\newtheorem{definition}[theorem]{Definition}
\newtheorem{observation}[theorem]{Observation}
\newtheorem{remark}[theorem]{Remark}
\setlist{itemsep=0pt,topsep=0pt}
\newcommand{\Oh}{O}
\newcommand{\prob}[1]{\mathop{\mathrm{Pr} \left[#1\right]}}
\newcommand{\polylog}{\mathop{\mathrm{polylog}}}
\newcommand{\eps}{\varepsilon}
\newcommand{\nochar}{\diamond}
\begin{document}
\title{Improved bounds for testing Dyck languages}
\author[1]{Eldar Fischer}
\author[2]{Fr\'{e}d\'{e}ric Magniez}
\author[3]{Tatiana Starikovskaya}
\affil[1]{Technion~- Israel Institute of Technology, Israel\\\texttt{eldar@cs.technion.ac.il}}
\affil[2]{CNRS, IRIF, Univ Paris Diderot, France\\\texttt{magniez@irif.fr}}
\affil[3]{IRIF, Univ Paris Diderot, France\\\texttt{tat.starikovskaya@gmail.com}}

\date{\empty}

\begin{titlepage}
\clearpage\maketitle
\thispagestyle{empty}

\begin{abstract}
In this paper we consider the problem of deciding membership in Dyck languages, a fundamental family of context-free languages, comprised of well-balanced strings of parentheses. In this problem we are given a string of length $n$ in the alphabet of parentheses of $m$ types and must decide if it is well-balanced. We consider this problem in the property testing setting, where one would like to make the decision while querying as few characters of the input as possible.

Property testing of strings for Dyck language membership for $m=1$, with a number of queries independent of the input size $n$, was provided in [Alon, Krivelevich, Newman and Szegedy, SICOMP 2001]. Property testing of strings for Dyck language membership for $m \ge 2$ was first investigated in [Parnas, Ron and Rubinfeld, RSA 2003]. They showed an upper bound and a lower bound for distinguishing strings belonging to the language from strings that are far (in terms of the Hamming distance) from the language, which are respectively (up to polylogarithmic factors) the $2/3$ power and the $1/11$ 
power of the input size $n$.

Here we improve the power of $n$ in both bounds. For the upper bound, we introduce a recursion technique, that together with a refinement of the methods in the original work provides a test for any power of $n$ larger than $2/5$. For the lower bound, we introduce a new problem called Truestring Equivalence, which is easily reducible to the $2$-type Dyck language property testing problem. For this new problem, we show a lower bound of $n$ to the power of $1/5$.
\end{abstract}
\end{titlepage}

\section{Introduction}
\label{sec:intro}
\subsection{Background}
Initially identified as one of the ingredients for the proof of the PCP theorem~\cite{Arora1998}, property testing is nowadays one of the successful paradigms of computation for handling massive data sets. In property testing one would like to decide whether an input has a global property by performing only few local checks. The goal is to distinguish with sufficient confidence the inputs which satisfy the property from those that are far from satisfying it.
In this sense, property testing is a notion of approximation for the corresponding decision problem. Property testers, under the name of self-testers and with a slightly different objective, were first considered for programs computing functions with some algebraic properties~\cite{bk95,blr93,rs96,rub99}.
The notion in its full generality was defined by Goldreich, Goldwasser and Ron, and successfully applied to topics including testing properties of graphs~\cite{ggr98,gr02}, 
monotonicity~\cite{Goldreich2000}, 
group and field operations~\cite{ERGUN2000},
geometrical objects~\cite{Czumaj2000},
formal languages~\cite{D1-test}, 
and probability distributions~\cite{batu2013}. 
The setting of property testing has also been addressed for quantum computers, see~\cite{Mont16} for a survey.

Formally, property testers have random access to their input in the query model. They can read the input, one piece at a time, by submitting a query with the selected index. Ideally, property testers should perform a number of queries that depends only on the approximation parameter (and not on the input length $n$), but also an algorithm making a number of queries that is sublinear in $n$ (for every fixed approximation parameter) is considered a legitimate property test. 
Whereas the complexity of formal language membership is quite well-understood, both in term of space and time complexity (see for instance~\cite{VONBRAUNMUHL1983}), even on parallel architectures~\cite{Dymond2000} and in the streaming model~\cite{Magniez2015}, very little is known in the context of property testing, except when both models of streaming and property testing are combined~\cite{Chu2007,Francois2016}.

The study of property testers for formal languages was initiated by Alon et al.~\cite{D1-test} under the Hamming distance.
In this context, two strings of length $n$ are $\eps$-far if the Hamming distance between them is larger than $\eps n$. An $\eps$-tester for a language $L$ must distinguish, with  probability at least $2/3$, the strings that are in $L$ from those that are $\eps$-far from $L$, using as few queries as possible. Alon et al. showed that regular languages, as well as the Dyck language $D_1$ of well-parenthesized expressions with a single type of parentheses, can be $\eps$-tested with a number of queries independent of the input size $n$. However, by the work of \cite{Dm-test}
the query complexity of an $\eps$-tester for Dyck languages $D_m$ with $m\geq 2$ types of parentheses is between $\Omega(n^{1/11})$ and $\tilde{O}(n^{2/3})$, and by \cite{D1-test} it becomes linear, even for one type of parentheses, when one-sided error is required. When the distance allows sufficient modifications of the input, such as moves of arbitrarily large factors, it has been shown that any context-free language is testable with a constant number of queries~\cite{FMR10}.

\subsection{Motivation and our results}
Dyck languages are not only some of the simplest context free languages, but they are also universal in some sense, since any context-free language can be expressed as an intersection of a regular language with $D_m$, for some integer $m$ and up to some homomorphic map (Chomsky-Sch\"{u}tzenberger representation theorem~\cite{Autebert1997}).
Moreover, Dyck languages have been used in many real life applications over the years, and some of their extensions, such as visibly pushdown languages or nested strings~\cite{AM09}, are heavily used to handle semi-structured documents such as massive databases, or to capture safety properties of programs from their execution traces. 

Motivated by the new applications and the prevalence of massive data, there is a renewed interest in studying the complexity of testing membership in context free languages and in particular in Dyck languages. 
As an illustration, we mention some of the recent works pioneered by Saha
for estimating the edit distance of a string to a Dyck language~\cite{Saha2014} and to context free languages~\cite{Saha2015,Bringmann2016}. 

In this paper we revisit the complexity of property testing for Dyck languages $D_m$, with $m \ge 2$, and improve the previously known upper and lower bounds
(respectively $\tilde{O}(n^{2/3})$ and $\Omega(n^{1/11})$), 
significantly narrowing the gap between the two. Our contribution is twofold:

\begin{enumerate}
\item Our first main result consists of a new property testing algorithm for Dyck languages. In particular, we show that for any $m \ge 2$ and for any constant $\delta,\eps > 0$ there is an $\eps$-tester with complexity $\tilde{O}(n^{2/5+\delta})$, up to some polylogarithmic factors (Corollary~\ref{cor:D_m_upper}).
\item Our second main result is an improved lower bound of $\Omega(n^{1/5})$ (Corollary~\ref{cor:D_m_lower}) for some constant~$\eps$. To show this, we introduce a new formalism which can potentially simplify further establishments of lower bounds in property testing.
\end{enumerate}

\subsection{Overview of the paper}
\paragraph{New algorithms.}
In order to improve the previously known algorithms, we introduce two recursion techniques, that together with a refinement of the methods from previous works provide a better complexity.
We start with an easy reduction from \textsc{$D_m$-membership} testing to \textsc{$D_m$-consistency} testing (Section~\ref{sec:prelim}, Lemma~\ref{lm:red0}), where the later problem simply asks if the input string is (close to) a substring of a member of $D_m$. The reduction is built upon the tester for \textsc{$D_1$-membership} of~\cite{D1-test}, and was implicit in the analysis of the tester of~\cite{Dm-test}. Next, in Section~\ref{sec:D_m_upper}, we show an algorithm for \textsc{$D_m$-consistency} testing. 
The algorithm (Algorithm~\ref{alg:C_m}) partitions the input string into non-overlapping blocks. Note that if the input string is $D_m$-consistent, each of the blocks is $D_m$-consistent as well. A natural part of the algorithm therefore consists of selecting several blocks at random and testing if they are $D_m$-consistent. Instead of querying all characters of the selected blocks as in~\cite{Dm-test}, we design a careful analysis allowing a recursive call here (Theorem~\ref{thm:D_m_consistency}). 

\textit{Inter-block matching.} We call the parentheses of a block that must be matched with parentheses in a different block  \emph{excess}. The crucial part of the algorithm consists of checking that the excess parentheses of the blocks can be matched correctly. In Section~\ref{sec:mg} we construct a dependency graph in order to identify candidate pairs of blocks having many matching excess parentheses between them. 
This graph is an approximation of the \emph{matching graph} introduced in~\cite{Dm-test}
and it can be computed efficiently (Algorithm~\ref{alg:AMG}).
We provide a new property of this graph in the context of $D_m$-consistency testing. 
Namely, the total weight of its edges accounts for almost all excess parentheses of the blocks that are not excess in the input string (Lemma~\ref{lm:total_weight_approx_matching_graph}).
Once such a candidate pair of blocks has been identified, we (approximately) locate their substrings containing the excess parentheses that are required to match. We now need to check whether the excess parentheses of one substring can be matched with the excess parentheses of the second substring. This task is performed by an external algorithm (Algorithm~\ref{alg:substring_matching}), which is itself recursive. We call this task \textsc{Substring $\eps$-matching}. 

\textit{Substring $\eps$-matching.} Our solution for \textsc{Substring $\eps$-matching} exploits a new methodology that combines a recursive approach and the birthday paradox. In order to recurse, we would like to divide each substring into smaller subblocks, and recurse on a random pair of the subblocks that contain matching parentheses. However, since we could only locate the substrings approximately, it is hard to identify efficiently the pairs of subblocks to be tested. We overcome this technical hurdle by trying to guess the real borders and testing each of our guesses. Potentially, this can be very expensive in terms of query complexity. To avoid this, our solution uses the birthday paradox. The idea is not to query a separate subset of subblocks for each pair of tested substrings, but to query a square root number of subblocks from both substrings and then re-use them to test each of our guesses. In the end, we show that Algorithm~\ref{alg:substring_matching} solves \textsc{Substring $\eps$-matching} on $n$-length substrings with query complexity $O(n^{\frac{1}{2}+\delta})$, for any constant $0<\delta\leq 1/2$ (Theorem~\ref{th:substring_matching}).

\paragraph{New lower bound and methodology.} 
For the improved lower bound (Corollary~\ref{cor:D_m_lower}), we first introduce a new problem called \textsc{Truestring equivalence} (see  Section~\ref{sec:prelim}), that highlights one particular aspect of testing Dyck languages. In this problem we must decide if two given binary strings with dummy characters ``$\nochar$'' are equal after deleting all ``$\nochar$'' characters. The ``$\nochar$'' characters hide the indexes of the meaningful bits, just as the indexes of excess parentheses are hidden in a parenthesized expression.
After a quick reduction from \textsc{Truestring equivalence} to \textsc{$D_m$-membership} (Lemma~\ref{lm:reduction}), we proceed to prove a bound for \textsc{Truestring equivalence} using the traditional Yao's method (Theorem~\ref{th:TE_lower})
in Section~\ref{sec:lb}. 
Namely, we produce a distribution $\mathcal{D}_P$ over inputs satisfying \textsc{Truestring equivalence}, and a distribution $\mathcal{D}_N$ over inputs that are (mostly) far from satisfying \textsc{Truestring equivalence} (Lemma~\ref{lem:far}), and then show that any (possibly adaptive) deterministic algorithm will have only a small difference in its acceptance probability when it is fed either an input drawn according to $\mathcal{D}_P$ or one drawn according to $\mathcal{D}_N$ (Lemma~\ref{lem:undistinguishable}).

\textit{A new formalism.} To facilitate the analysis of the behaviour of a deterministic algorithm $\mathcal{A}$, when fed an input drawn according to either distribution, we formalize the technique (at times used informally) of ``revealing unrequested information to the algorithm'' for the purpose of separating the ``irrelevant'' probabilistic dependencies, and allowing a probabilistic analysis that exploits the remaining ``relevant independency'' between the participating random variables.

For this purpose we define the notion of a {\em super-oracle}, which in response to a requested query may output additional information about the input in addition to the query itself (essentially adding additional queries to the requested query and answering all of them). Then, we analyze the distribution over the transcript of any algorithm that is run against the super-oracle, when the input is drawn by either $\mathcal{D}_P$ or $\mathcal{D}_N$.

A special event that we define below is an event in which the super-oracle provides the input in its entirety, essentially ``giving-up'' on confounding the algorithm. For a carefully chosen trigger for this event, we show that the distributions of the algorithm's behaviour (with respect to either $\mathcal{D}_P$ or $\mathcal{D}_N$) both ``underlie'' a common distribution (by which we mean that all outcome probabilities outside the giving-up event are bounded by the respective probabilities of the common distribution), while the probability of the giving-up event itself is small. This implies that the difference between the respective acceptance probabilities is also small, which allows us to invoke Yao's method to conclude the lower bound argument.

\section{Preliminaries}\label{sec:prelim}
\subsection{Basic definitions and some reductions}
Hereafter $n$ will denote the input size, and $\tilde{\Oh}(f(n))$ stands for $\Oh(f(n) \cdot \polylog n)$.
The \emph{distance} between two strings $T,T'$ of length $n$ is the {Hamming distance}, that
is the number of indexes in which they differ. 
We say that $T,T'$ are \emph{$\eps$-close} when 
their distance is at most $\eps n$,
and that they are \emph{$\eps$-far} otherwise.

\begin{definition}[$\eps$-Tester]
Let $L$ be a language over a constant-size alphabet.
A randomized algorithm $A$ is an \emph{$\eps$-tester for $L$}
with bounded error $\eta\geq 0$, 
if $A$ accepts all inputs $T\in L$ with probability at least $1-\eta$, 
and rejects all inputs $T$ that are $\eps$-far from all members of $L$ with probability at least $1-\eta$.
\end{definition}

Usually, the notion of property testing is studied in the context of \emph{query complexity}.
In this model, the algorithm is given the size of the input, but not the input string itself: the algorithm can only access the string by querying it locally, one character at a time.
The query complexity of the algorithm is defined as the number of performed queries. In this work we study the worst case query complexity of the algorithms and disregard  other notions of space and time complexity. 

The Dyck language $D_m$ is the language of strings of properly balanced parentheses of $m$ types. For example, a string ``$( _0(_1 )_1 )_0$'' is in $D_2$, while ``$(_0 (_1 )_0 )_0$'' and ``$(_0 (_1)_1 (_0$'' are not. 
A string $S$ is \emph{$D_m$-consistent}, if it is a substring of a string $S' \in D_m$.
We now define the three main problems that we will consider, and state some reductions between them.
The notion of $\eps$-testing is implicitly extended to those problems by considering the respective languages they define.
\begin{quote}
\textsc{$D_m$-membership$(n)$}\\
Input: String of even length $n$ on an alphabet of parentheses of $m$ types\\
Output: Decide if it is in $D_m$ \smallskip\\
\textsc{$D_m$-consistency$(n)$}\\
Input: String of length $n$ on an alphabet of parentheses of $m$ types\\
Output: Decide if it is $D_m$-consistent
\end{quote}

The last problem is defined in a slightly different but related context.
Given a string $w\in\{0,1,\nochar\}^*$, its {\em truestring} $T(w)$ is the subsequence resulting from deleting all $\nochar$ characters. Two strings $w$ and $v$ are called {\em truestring equivalent} if $T(w)=T(v)$.
\begin{quote}
\textsc{Truestring equivalence$(n)$}\\
Input: Two strings of length $n$ over alphabet $\{0,1,\nochar\}$\\
Output: Decide if they are truestring equivalent
\end{quote}

We will need a tester for \textsc{$D_1$-membership} by Alon et al.~\cite{D1-test}.

\begin{lemma}[\cite{D1-test}]\label{lm:D_1_upper}
There is an $\eps$-tester for \textsc{$D_1$-membership$(n)$} with bounded error $1/6$
and query complexity $\Oh(\eps^{-2}\log (1/\eps))$.
\end{lemma}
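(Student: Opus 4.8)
The plan is to build on the structural characterization of $D_1$ from the work of Alon et al., namely that a binary string of parentheses is in $D_1$ if and only if it has a balanced number of opening and closing parentheses and, in every prefix, the number of opening parentheses is at least the number of closing ones. Equivalently, mapping "(" to $+1$ and ")" to $-1$, the partial sums form a walk that ends at $0$ and never goes negative. A string is $\eps$-far from $D_1$ precisely when this walk must be modified in at least $\eps n$ positions to satisfy both conditions, and the key quantitative fact (implicit in the analysis of the $D_1$ tester) is that if the string is $\eps$-far, then either the total imbalance is large, or there is a prefix where the walk dips well below zero — in fact a constant fraction of the "depth violations" can be localized to short intervals.

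\begin{proof}[Proof sketch]
We recall the tester of Alon et al.~\cite{D1-test} and its guarantee. View the input $T\in\{(,)\}^n$ as a walk $h_0=0$, $h_i = h_{i-1} + 1$ if $T_i = \,($, and $h_i = h_{i-1} - 1$ if $T_i = \,)$. Then $T\in D_1$ iff $h_n = 0$ and $h_i\geq 0$ for all $i$. The tester samples $O(\eps^{-2}\log(1/\eps))$ positions according to a carefully chosen distribution that mixes uniform sampling with sampling at a hierarchy of scales (intervals of length $2^j$ for $j = 0,\dots,\log n$), reads the corresponding characters, and rejects if among the sampled characters it detects a local certificate of non-membership — concretely, two sampled positions $i<j$ in a common dyadic interval with $h_j - h_i$ so negative that no completion can avoid a dip below $0$, or a global imbalance detected by the uniform samples.

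\textbf{Completeness.} If $T\in D_1$, every prefix sum is nonnegative and the total sum is zero, so no local certificate is ever present, and the tester accepts with probability $1$. This already gives one-sided error on the accept side; the bounded error $\eta = 1/6$ is entirely on the reject side.

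\textbf{Soundness.} The main content is to show that if $T$ is $\eps$-far from $D_1$, then a random sample of the above size finds a certificate with probability at least $5/6$. One argues that being $\eps$-far forces one of two situations: (i) $|h_n|$ itself is $\Omega(\eps n)$, in which case $\Omega(\eps)$ of all positions are "excess" parentheses of the majority type and uniform sampling of $O(1/\eps)$ positions hits two of them on the same side, yielding a certificate; or (ii) the walk has total "negative excursion mass" $\Omega(\eps n)$, meaning $\sum_i \max(0,-h_i)$-type quantity is large, which by a dyadic decomposition argument implies that for some scale $2^j$ there are $\Omega(\eps n / (2^j \log n))$ disjoint intervals of length $2^j$ across each of which the walk drops by $\Omega(2^j)$; sampling at scale $2^j$ with $O(\eps^{-2}\log(1/\eps))$ queries then hits such an interval and, within it, two positions witnessing the drop, with probability $\Omega(1)$. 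Boosting the constant by the choice of hidden constants in the sample size gives success probability $\geq 5/6$, hence bounded error $1/6$. The query complexity is the sample size, $O(\eps^{-2}\log(1/\eps))$.
\end{proof}

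\textbf{The main obstacle} is the soundness analysis, specifically the dyadic localization step: translating "$\eps n$ edits are needed" into "a constant fraction of the needed edits, at some single scale, live in short disjoint intervals each carrying a large local drop." This requires care because the cheapest repair of the walk may interact globally (fixing a deep dip near the start changes all later prefix sums), so one must argue via the standard decomposition of the "violation region" into maximal negative excursions and then charge each excursion to a dyadic scale, losing only the unavoidable $\log n$ factor. Since this is exactly the theorem proved in~\cite{D1-test}, the plan here is to cite it; but were we to reprove it, the excursion-decomposition and the union bound over the $\log n$ scales would be where the real work lies.
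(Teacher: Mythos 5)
This lemma is stated in the paper as a citation to Alon, Krivelevich, Newman and Szegedy~\cite{D1-test}; the paper gives no proof for it, so there is no internal proof to compare your sketch against. Citing it, as you do at the end, is indeed what is intended.

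That said, there is a concrete flaw in the mechanism you describe, which would matter if you actually tried to reprove the result. You state that the tester ``rejects if among the sampled characters it detects \ldots\ two sampled positions $i<j$ in a common dyadic interval with $h_j - h_i$ so negative that \ldots''. But the walk increment $h_j - h_i$ is the net imbalance of the entire substring $T[i+1..j]$, and it is not a function of the two sampled characters $T[i]$ and $T[j]$; knowing those two characters tells you nothing about $h_j-h_i$. To obtain information about $h_j - h_i$ in a query-efficient tester you must \emph{estimate} the imbalance of the interval $[i+1,j]$ by further sampling inside it, which is indeed roughly what the AKNS tester does (sample at $O(\log(1/\eps))$ scales, estimate the balance of each sampled interval to additive precision $\Theta(\eps)$ using $O(\eps^{-2})$ samples inside it). Two consequences follow for your write-up. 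First, the ``certificate'' is a statistical estimate, not an exact combinatorial witness, so the test is not one-sided as you claim: a string in $D_1$ can, with small but nonzero probability, produce an estimate that crosses the rejection threshold, and the bounded error $1/6$ is two-sided. Second, the soundness argument cannot be phrased as ``hit two positions witnessing the drop''; it has to be phrased as ``hit an interval whose true imbalance is $\Omega(\eps)$-bad, and then correctly estimate that imbalance.'' Your high-level structural picture (balance condition plus nonnegativity of the walk, decomposition of an $\eps$-far string into a global imbalance case and a case of many large negative excursions at some dyadic scale) is consistent with the AKNS analysis; it is the query/detection mechanism that needs repair.
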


\begin{definition}[\cite{Dm-test}]
For a string $S$ on the alphabet of parentheses of $m$ types,  let $\mu(S)$ be a string obtained from $S$ by removing the types of the parentheses.  

Let $k$ and $\ell$ be the smallest integers such that $(^k \mu(S) )^\ell \in D_1$. We call $e_1 (S) = k$ the \emph{excess number of closing parentheses} in $S$, and $e_0(S) = \ell$ the \emph{excess number of opening parentheses} in $S$.

If $S$ is a substring of the input string $T$, its excess numbers indicate how many parentheses cannot be matched with other parentheses in $S$ and must be matched with parentheses outside $S$. Such parentheses are called \emph{excess parentheses} of $S$.  
\end{definition} 

For example, if $S = $ ``$)_0 (_1 )_1 )_0 (_1 )_1$'', then $\mu(S)$ is ``$)())()$'', $e_1(S) = 2$ and $e_0(S) = 0$, and the excess parentheses are the first and the fourth ones. Let $n_0(S')$ be the number of opening parentheses in a substring $S'$ of $S$, and $n_1(S'')$ be the number of closing parentheses in a substring $S''$ of $S$. It is not hard to see that the following equations hold, where we assume that the empty prefix and the empty suffix are included:
\begin{equation}\label{eq:excess}
\begin{split}
e_1(S) =  \max_{S'-\mbox{prefix of }S} \bigl( n_1(S') - n_0(S')\bigr), \quad and \quad
e_0(S) =  \max_{S''-\mbox{suffix of }S} \bigl( n_0(S'') - n_1(S'')\bigr)
\end{split}
\end{equation}

We can now state our reductions.
\begin{lemma}\label{lm:red0}
Given an $\eps$-tester $A$ for \textsc{$D_m$-consistency$(n)$} with bounded error $1/6$, one can design an $\Theta(\eps)$-tester $B$ for \textsc{$D_m$-membership$(n)$} with bounded error $1/3$ and query complexity equal to that of $A$ with an additional term of $\Oh(\eps^{-2}\log (1/\eps))$.
\end{lemma}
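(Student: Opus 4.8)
The plan is to realize $B$ as the conjunction of two tests: run the given tester $A$ on $T$ with proximity parameter $\eps$, and run the \textsc{$D_1$-membership} tester of Lemma~\ref{lm:D_1_upper} with parameter $\eps$ on $\mu(T)$, answering each of its queries by reading the corresponding position of $T$ and discarding the parenthesis type; $B$ accepts iff both tests accept. Then the query complexity of $B$ is exactly that of $A$ plus the $\Oh(\eps^{-2}\log(1/\eps))$ queries of the \textsc{$D_1$} tester, and the bounded error $1/3$ will follow from a union bound over the two sub-tests, each of which errs with probability at most $1/6$. Everything else hinges on the structural identity $T\in D_m$ iff $T$ is $D_m$-consistent and $\mu(T)\in D_1$. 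The forward implication is immediate. For the converse, if $\mu(T)\in D_1$ and $S'\in D_m$ contains $T$ as a substring, a prefix-count argument shows that $S'$ cannot match any parenthesis of $T$ to a parenthesis outside $T$ (otherwise a suffix of $\mu(T)$, being an initial segment of a balanced substring of $S'$, would have at least as many opening as closing parentheses, contradicting that $\mu(T)$ is balanced while the character of $\mu(T)$ just before that suffix is an opening parenthesis); hence $S'$ induces on $T$ the unique non-crossing perfect matching of $\mu(T)$, which respects types because $S'$ does, so $T\in D_m$.

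Completeness is then immediate: if $T\in D_m$ then $T$ is a substring of itself, hence $D_m$-consistent, and $\mu(T)\in D_1$, so each sub-test accepts with probability at least $5/6$ and $B$ accepts with probability at least $2/3$.

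The core of the argument is a robust form of the identity: \emph{if $\mu(T)$ is $\eps$-close to $D_1$ and $T$ is $\eps$-close to being $D_m$-consistent, then $T$ is $\Oh(\eps)$-close to $D_m$.} To prove this, I would fix a $D_m$-consistent $T''$ with $\dist(T,T'')\le\eps n$. Since $\dist(\mu(T),\mu(T''))\le\dist(T,T'')$, the string $\mu(T'')$ lies within $2\eps n$ of $D_1$; because a single character change moves $e_0+e_1$ by at most $2$ and members of $D_1$ have $e_0=e_1=0$, the number $E:=e_0(T'')+e_1(T'')$ of excess parentheses of $T''$ satisfies $E\le 4\eps n$, and $E$ is even since $e_0(T'')-e_1(T'')=n_0(T'')-n_1(T'')\equiv n\pmod 2$. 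Write $T''=B_0c_1B_1c_2\cdots c_EB_E$, where $c_1,\dots,c_E$ are the excess parentheses of $T''$ in left-to-right order (which, by definition of excess parentheses, are all the excess closing ones followed by all the excess opening ones) and each $B_i$ is the substring of $T''$ strictly between consecutive $c_i$'s, with $B_0$ the part before $c_1$ and $B_E$ the part after $c_E$. Each $B_i$ has $\mu(B_i)\in D_1$ (its parentheses are all non-excess and match within $B_i$) and is a substring of a member of $D_m$, hence lies in $D_m$ by the structural identity. Replacing $c_1\cdots c_E$ by an arbitrary length-$E$ member $w_1\cdots w_E$ of $D_m$ (possible since $E$ is even) and using that $D_m$ is closed under inserting a member of $D_m$ at any position (the matchings of the inserted member and of the host do not cross), the string $T^*=B_0w_1B_1\cdots w_EB_E$ belongs to $D_m$, has length $n$, and differs from $T''$ in at most $E$ positions; thus $\dist(T,D_m)\le\dist(T,T'')+E\le 5\eps n$.

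To conclude, set the proximity parameter of $B$ to $5\eps$, which is the $\Theta(\eps)$ in the statement. By the contrapositive of the robust identity, any $T$ that is $5\eps$-far from $D_m$ has either $\mu(T)$ being $\eps$-far from $D_1$ — in which case the \textsc{$D_1$} tester rejects with probability at least $5/6$ — or $T$ being $\eps$-far from $D_m$-consistency — in which case $A$ rejects with probability at least $5/6$; either way $B$ rejects with probability at least $5/6\ge 2/3$, so $B$ is a $\Theta(\eps)$-tester with bounded error $1/3$. I expect the main obstacle to be this soundness step, and in particular making the repair argument clean: decomposing $T''$ as $B_0c_1B_1\cdots c_EB_E$ with each $B_i\in D_m$ and invoking insertion-closure of $D_m$ so as to fix the boundedly many excess parentheses of $T''$ without touching anything else, all resting on a careful proof of the structural equivalence between $D_m$-membership and the conjunction of $D_m$-consistency with balancedness of $\mu(T)$.
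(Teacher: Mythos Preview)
Your proposal is correct and takes essentially the same approach as the paper: combine the $D_1$-tester on $\mu(T)$ with $A$, and for soundness show that a $D_m$-consistent string within $\eps n$ of $T$ has only $\Oh(\eps n)$ excess parentheses, which can then be repaired to land in $D_m$. Your repair step (decomposing $T''$ into balanced blocks $B_i$ separated by the excess characters and invoking insertion-closure of $D_m$) is just a more explicit rendering of the paper's terser ``change the first half of excess parentheses to $(_0$ and the second half to $)_0$'', and your proof of the structural identity $T\in D_m \Leftrightarrow (\mu(T)\in D_1$ and $T$ is $D_m$-consistent$)$ fills in a detail the paper states without proof.
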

\begin{proof}
Our tester for \textsc{$D_m$-membership$(n)$} on $T$ consists of two steps: first, we apply the tester for \textsc{$D_1$-membership$(n)$} of Lemma~\ref{lm:D_1_upper} on $\mu(T)$, and 
then the tester $A$. Observe that $T$ is in $D_m$ if and only if $\mu(T)$ is in $D_1$ and $T$ is $D_m$-consistent, and therefore if $T$ is in $D_m$ it will be accepted with probability at least $ 2/3$.

We now prove by contrapositive that when $T$ is $4\eps$-far from $D_m$ then either $\mu(T)$ is $\eps$-far from $D_1$ or $T$ is $\eps$-far from any $D_m$-consistent string. Suppose that $\mu(T)$ is $\eps$-close to $D_1$ and $T$ is $\eps$-close to a $D_m$-consistent string. First note that since $\mu(T)$ is $\eps$-close to $D_1$, then $T$ contains at most $2\eps n$ excess parentheses. Indeed, if $T$ contains more than $2\eps n$ excess parentheses, then we have to modify at least $\eps n$ of them to obtain a string $\tilde{T}$ such that $\mu(\tilde{T}) \in D_1$, a contradiction. Since $T$ is $\eps$-close to a $D_m$-consistent string, we can modify $\le \eps n$ characters in it so that the resulting string $T'$ is $D_m$-consistent. By modifying $\le \eps n$ characters of $T$ we change its excess numbers by at most $\eps n$ (see Equation~\ref{eq:excess}). 
Therefore, the number of excess parentheses in $T'$ is at most $3 \eps n$. It must be even as well. We change the first half of excess parentheses to ``$(_0$'', and the second half to ``$)_0$'', obtaining a string in $D_m$. 
\end{proof}

\label{sec:reduction}
\begin{lemma}\label{lm:reduction}
Given an $\eps$-tester $A$ for \textsc{$D_2$-membership$(4n)$}, one can design an $\Theta(\eps)$-tester $B$ for \textsc{Truestring equivalence$(n)$} with the same query complexity. % than  $A$.
\end{lemma}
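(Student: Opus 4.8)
The plan is to build a \emph{local} reduction: on input $(w,v)\in(\{0,1,\nochar\}^n)^2$, the tester $B$ will simulate $A$ on a length-$4n$ string $Z$ over the two-type parenthesis alphabet $\{\texttt{(},\texttt{)},\texttt{[},\texttt{]}\}$, obtained from $(w,v)$ by a fixed two-character-per-symbol encoding, answering every query $A$ makes to $Z$ with a single query to $w$ or $v$, and accepting iff $A$ accepts. Concretely, encode each symbol of $w$ by $0\mapsto\texttt{((}$, $1\mapsto\texttt{[[}$, $\nochar\mapsto\texttt{()}$, and each symbol of $v$ by $0\mapsto\texttt{))}$, $1\mapsto\texttt{]]}$, $\nochar\mapsto\texttt{()}$; then set $Z=\mathrm{enc}(w_1)\cdots\mathrm{enc}(w_n)\,\mathrm{enc}(v_n)\cdots\mathrm{enc}(v_1)$ (note that $v$ is read right-to-left). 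This has length $2n+2n=4n$, each position of $Z$ is a deterministic function of exactly one symbol of $w$ or $v$, so $B$ makes exactly as many queries as $A$, and since $B$ just runs $A$ on $Z$ it inherits $A$'s bounded error.

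For completeness I would run the standard stack simulation. Reading $\mathrm{enc}(w_1)\cdots\mathrm{enc}(w_n)$, every $\nochar$ contributes a self-matched \texttt{()}, while every bit $b$ pushes two copies of an opening parenthesis whose type encodes $b$; hence after this prefix the unmatched opening parentheses, read from the bottom of the stack, spell $T(w)$ with every symbol doubled. Reading $\mathrm{enc}(v_n)\cdots\mathrm{enc}(v_1)$, the $\nochar$'s are again self-matched while bit $b$ emits two closing parentheses of the type encoding $b$; these must pop the stack from the top, so $Z\in D_m$ iff the emitted closing sequence is exactly the reverse of the pushed opening sequence, which — precisely because we reversed $v$ in the encoding — happens iff $T(w)=T(v)$. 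In particular $T(w)=T(v)\Rightarrow Z\in D_2$, so $A$, hence $B$, accepts with the required probability.

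For soundness I prove the contrapositive distance statement: if $Z$ is $\eps$-close to $D_2$ then $(w,v)$ is $O(\eps)$-close to truestring equivalence; rescaling $\eps$ by the right constant then makes $B$ a $\Theta(\eps)$-tester. Fix $Z^\ast\in D_2$ closest to $Z$, put $D=\dist(Z,Z^\ast)$, and let $M^\ast$ be the (unique, non-crossing) perfect matching of $Z^\ast$. Group the positions of $Z$ into its $2n$ two-character blocks; call a block of the $w$-part a \emph{clean push-block} if in $Z^\ast$ it equals \texttt{((} or \texttt{[[} and $M^\ast$ matches its two opening parentheses to the two closing parentheses of a single block of the $v$-part of the \emph{same} type, and define \emph{clean pop-blocks} symmetrically. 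Decode a pair $(w^\ast,v^\ast)$ from $(Z^\ast,M^\ast)$: a $w$-symbol is the bit encoded by its block if that block is a clean push-block, and $\nochar$ otherwise, and symmetrically for $v$ (keeping in mind that the $j$-th block of the $v$-part encodes $v_{\,n+1-j}$). Since $M^\ast$ is non-crossing and all push-block opens precede all pop-block closes, the clean push-blocks $B_{i_1},\dots,B_{i_k}$ (with $i_1<\dots<i_k$) are matched bijectively and in index-reversing fashion to the clean pop-blocks; combined with the reversal of $v$ in the encoding this forces $T(w^\ast)=T(v^\ast)$, so $(w^\ast,v^\ast)$ is truestring equivalent.

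It remains to bound $\dist\bigl((w,v),(w^\ast,v^\ast)\bigr)=O(D)$, which I expect to be the main obstacle. Call a $w$-block (resp.\ $v$-block) \emph{regular} if in $Z^\ast$ it is \texttt{()} or it is a clean push-block (resp.\ clean pop-block); for a regular block the decoded symbol equals the naive decoding of the block's content, so a symbol on which $(w,v)$ and $(w^\ast,v^\ast)$ disagree must belong either to a block in which $Z$ and $Z^\ast$ already differ (at most $D$ positions total), or to an \emph{irregular} block, which — since $Z$'s blocks only range over the six encoding patterns — must be a block that is \texttt{((} or \texttt{[[} (resp.\ \texttt{))} or \texttt{]]}) in both $Z$ and $Z^\ast$ yet is not matched cleanly by $M^\ast$. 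For such a block I would follow its two matching edges of $M^\ast$: non-cleanliness forces one of a short list of local anomalies — a closing parenthesis of $Z^\ast$ inside the $w$-part, a $v$-block whose two closes are matched to opens of two different $w$-blocks, a matched $w$/$v$ block pair of mismatched type, or an edge straddling a block boundary — and tracing each anomaly back through $M^\ast$ pins down a position where $Z^\ast$ deviates from the block structure of $Z$; since $M^\ast$ is a matching, each corrupted position is charged only a bounded number of times. This gives $\dist\bigl((w,v),(w^\ast,v^\ast)\bigr)=O(D)=O(\dist(Z,D_2))$, which closes the reduction; recalling that $Z$ has length $4n$ while $(w,v)$ has $2n$ symbols, the constants work out to an $\Theta(\eps)$-tester $B$ with the same query complexity and bounded error as $A$.
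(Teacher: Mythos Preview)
Your reduction is the same as the paper's: the encoding $0\mapsto\texttt{((}$, $1\mapsto\texttt{[[}$, $\nochar\mapsto\texttt{()}$ for $w$ and the closing analogues for $v$, with $v$ reversed, is exactly what the paper does (the paper reverses $v'$ character-by-character rather than block-by-block, so its $\nochar$ in the $v$-part becomes $\texttt{)(}$ rather than $\texttt{()}$, but this is immaterial). Completeness is identical.

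For soundness the two arguments are dual rather than different. The paper \emph{marks} each corrupted position of $\tilde u=Z^\ast$, its $M^\ast$-partner, and all $\nochar$ blocks, then observes that unmarked positions are matched among themselves, lie in bit-blocks, and (by the parity fact $|i-j+1|$ even for any matched pair in $D_2$) pair an open in the $w$-half with a close in the $v$-half; deleting the fully-marked bit-blocks costs $O(D)$. You instead define ``clean'' blocks and charge each unclean uncorrupted block back to a corruption. Both work, but your charging sketch is the one place that needs tightening: the anomaly list you give is imprecise (e.g.\ ``mismatched type'' cannot occur in $Z^\ast$, and ``edge straddling a block boundary'' is not the right invariant). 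The clean way to finish your argument is the same parity observation the paper uses: in any $D_2$ word matched positions have opposite parities, so for an uncorrupted $w$-block at odd/even positions $p,p{+}1$ the matches $q_1,q_2$ are even/odd; if $q_2\le 2n$ then $Z^\ast[q_2]$ is a close at an odd $w$-position, forcing a corruption at $q_2$; otherwise $q_2>2n$ and irregularity means $q_1>q_2+1$, so $Z^\ast[q_2{+}1]$ begins a balanced block and is an open at an even $v$-position, forcing a corruption at $q_2{+}1$. Since $q_2$ is determined by $p{+}1$ via $M^\ast$, this charging is injective (up to the symmetric $v$-side, giving multiplicity at most $2$). With this in place your argument is complete and equivalent to the paper's.
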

\begin{proof}
Let $w,v \in \{0, 1, \nochar\}^n$. Define $w'$ from $w$ where we replace ``$0$'' by ``$(_0(_0$'', ``$1$'' by ``$(_1(_1$'', and ``$\nochar$'' by ``$(_0)_0$'', and $v'$ from $v$ where we replace  ``$0$'' by ``$)_0)_0$'', ``$1$'' by ``$)_1)_1$'', and ``$\nochar$'' by ``$(_0)_0$''.
We perform the reduction of a pair $(w,v)$ to a string of parentheses $u$ by concatenating $w'$ and the reverse of $v'$. 
It is clear that this maps a pair of truestring equivalent strings to a $4n$-length string in $D_2$, as well as that a query to $u$ can be simulated using a single query to $w$ or $v$. 

We now show that if $u$ is $\eps$-close to $D_2$, then $(w,v)$ is $\Oh(\eps)$-close to a pair of truestring equivalent strings. It suffices to show that we can delete $\Oh(\eps n)$ characters of $T(w)$ and $T(v)$ so that the resulting strings are equal, because we can simulate a deletion from $T(w)$ or $T(v)$ by replacing the corresponding character of $w$ or $v$ with ``$\nochar$''. By definition, there is a string $\tilde{u} \in D_2$ such that the Hamming distance between $u$ and $\tilde{u}$ is $k \le \eps \cdot (4n)$. Moreover, there is a perfect matching on the parentheses of $\tilde{u}$ such that each two matched parentheses $\tilde{u}[i], \tilde{u}[j]$ are of the same type and $|i-j+1|$ is even. We now mark some characters of $\tilde{u}$. Namely, we mark each character $\tilde{u}[i] \neq u[i]$ and its matching parenthesis. Also, if $\tilde{u}[i]$ was marked and $u[i-1, i]$ or $u[i, i+1]$ was obtained by replacing a ``$\nochar$'' character with the sequence ``$(_0)_0$'' in $w$ or $v$, we mark $\tilde{u}[i-1]$ or $\tilde{u}[i+1]$ respectively, as well as its matching parenthesis in $\tilde{u}$ (some such characters might have been already marked before). Finally, we mark all untouched pairs of ``$(_0)_0$'' corresponding to ``$\nochar$'' characters.

Consider a character of $T(w)$ or $T(v)$ and the corresponding sequence $\tilde{u}[i, i+1]$.
If both $\tilde{u}[i]$ and $\tilde{u}[i+1]$ are marked, we delete the character. In total, we delete $\Oh(k) = \Oh(\eps n)$ non-``$\nochar$'' characters. To show that the resulting strings are equal, note that the set of unmarked characters of $\tilde{u}$ is comprised of matching parentheses (because each time we marked a pair of matching parentheses), and contains only those characters where $u$ and $\tilde{u}$ agree. Moreover, each unmarked character $\tilde{u}[i]$, $i\leq 2n$, is an opening parenthesis that matches some unmarked closing parenthesis $\tilde{u}[j]$, $j>2n$, where $|j-i+1|$ is even. 
\end{proof}

\subsection{Excess parentheses preprocessing}
\label{sec:excess}
Parnas et al.~\cite{Dm-test} showed that it suffices to query $\tilde{\Oh}(n^{2/3}/\eps^2)$ indexes of the input string $T$ to compute the excess numbers of any substring of $T$ of length $\geq n^{2/3}$ with precision $\eps n^{2/3}$. Below we show a new approach that will allow us to approximate excess numbers of any substring independent of its length, which is important for our recursive tester. From Equation~\ref{eq:excess} it follows that to estimate the excess numbers it suffices to estimate the number of opening and closing parentheses in each prefix and suffix of~$S$. 

\begin{lemma}\label{lm:HN_all_substrings}
By querying $\tilde{\Oh}(x^2 /\Delta^2)$ indexes of a string $S'$ of length $x \le n$, there is an algorithm computing the number of opening or closing parentheses in any substring $S$ of $S'$ with precision $\Delta$ correctly with probability at least $1-1/n^3$.
\end{lemma}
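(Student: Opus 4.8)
The plan is to sample a set of positions from $S'$ and use the empirical counts within each prefix/suffix to estimate the true counts, with the sampling rate chosen so that the additive error is at most $\Delta$ uniformly over all $x+1$ prefixes (and symmetrically all suffixes). First I would fix the target: it suffices to produce, for every prefix $S'[1,t]$ of $S'$, an estimate $\widehat{n_0}(t)$ of the number $n_0(S'[1,t])$ of opening parentheses in that prefix, such that $|\widehat{n_0}(t)-n_0(S'[1,t])|\le \Delta/2$ simultaneously for all $t$; then the count of opening parentheses in an arbitrary substring $S=S'[a,b]$ is $n_0(S'[1,b])-n_0(S'[1,a-1])$, and plugging in the estimates gives precision $\Delta$. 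The closing-parenthesis case and the suffix case are identical by symmetry, and by Equation~\ref{eq:excess} this is exactly what is needed to approximate the excess numbers.

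The sampling step: query each index of $S'$ independently with probability $p = c\,\log n / (\Delta^2/x)$ for a suitable constant $c$ (equivalently, take a uniform random sample of size $\tilde\Oh(x^2/\Delta^2)$), recording which sampled positions hold opening parentheses. For a fixed prefix length $t$, the estimator is $\widehat{n_0}(t) = \frac{1}{p}\cdot |\{\text{sampled opening parentheses in }S'[1,t]\}|$. This is an unbiased estimator of $n_0(S'[1,t])$ (its expectation is exactly $p\cdot n_0(S'[1,t])$ divided by $p$), and since it is a sum of at most $n_0(S'[1,t]) \le t \le x$ independent indicator variables scaled by $1/p$, a Chernoff/Hoeffding bound gives
\[
\prob{\,|\widehat{n_0}(t)-n_0(S'[1,t])| > \Delta/2\,} \le 2\exp\!\left(-\Omega\!\left(p\,\Delta^2/x\right)\right) = 2\exp(-\Omega(c\log n))\le 1/n^{5}
\]
for $c$ large enough. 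A union bound over the at most $x+1 \le n+1$ prefixes (and the same for suffixes, and for the closing-parenthesis estimators) keeps the total failure probability below $1/n^3$. The query complexity is the sample size, which is $\tilde\Oh(x^2/\Delta^2)$ as claimed.

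The main obstacle — or rather the point that needs care rather than real difficulty — is making sure that a single fixed sample works for \emph{all} substrings simultaneously with only a polylog overhead, rather than resampling per substring: this is handled precisely by the union bound over the linearly many prefix/suffix cutoffs above, which is why the failure probability is driven down to $1/n^3$ (and could be pushed to any $1/\mathrm{poly}(n)$). A secondary point is the additive, length-independent nature of the guarantee: because we estimate prefix/suffix counts rather than the substring counts directly, the error does not degrade for short substrings — the price is simply that the error $\Delta$ is a global parameter, and the sample size scales as $x^2/\Delta^2$ with $x$ the length of the \emph{outer} string $S'$, which is exactly the regime the recursive tester needs. One should also note the standard subtlety that the number of queries is itself a random variable under independent $p$-sampling; this is cosmetic and can be removed either by conditioning on the (high-probability) event that the sample size is $\tilde\Oh(x^2/\Delta^2)$, or by taking a sample of exactly that size without replacement and using the corresponding concentration inequality.
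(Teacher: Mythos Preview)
Your proof is correct and, in fact, cleaner than the paper's own argument. The essential difference is in how the ``for all substrings simultaneously'' guarantee is obtained.

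The paper works directly with an arbitrary substring $S$ of length $y\Delta$: it first uses Chebyshev (with repetition) to guarantee that $S$ contains enough sampled indexes, then splits those samples into $\log n$ groups, applies an additive Chernoff bound to each group separately, and finally takes a median to amplify. Short substrings ($|S|\le\Delta$) are handled by the trivial estimate. This per-substring analysis is more involved and, as written, the union bound over the $\Theta(x^2)$ substrings is left implicit.

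Your approach sidesteps all of this by estimating only the $x{+}1$ prefix counts and recovering every substring count as a difference of two prefix estimates. A single Bernstein/Chernoff bound per prefix plus a union bound over $O(n)$ prefixes immediately gives the uniform guarantee; no Chebyshev step, no median-of-groups, and no separate treatment of short substrings is needed (the additive error $\Delta$ is length-independent by construction). The only minor omissions are edge cases you already flagged: when $p\ge 1$ one simply reads all of $S'$ (still $\le x^2/\Delta^2$ queries since then $\Delta^2\le cx\log n$), and the random sample size is handled either by conditioning or by sampling without replacement. Both approaches yield the same $\tilde O(x^2/\Delta^2)$ query bound; yours is the more direct route.
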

\begin{proof}
We query a subset of $(2 x^2 \log n) /\Delta^2$ indexes of $S'$ uniformly at random. For each substring $S'$ of length $\le \Delta$ we can output $\Delta$ as an approximation of the number of opening or closing parentheses. Consider now any substring $S$ of $S'$ of length $y \cdot \Delta$, where $1 < y \le x / \Delta$. By Chebyshev's inequality, it contains $\ge y \cdot (x \log n /\Delta)$ queried indexes with probability $\ge 1/2$. We repeat this step $\log (2n^3)$ times to amplify the probability. As a corollary, $S$ will contain $\ge y \cdot (x \log n /\Delta)$ queried indexes with probability $\ge 1-1/2n^3$. We divide the samples into $\log n$ subsets of size $y \cdot (x/\Delta)$. Consider one such subset of indexes $p_1, \ldots, p_{y \cdot (x/\Delta)}$. Setting $X_i = 1$ if $S[p_i]$ is an opening parenthesis and $X_i=0$ otherwise, for $X=\sum_{i=1}^{y \cdot (x/\Delta)} X_i$ we have $\mathbb{E}[X] = n_0 \cdot \frac{y \cdot (x/\Delta)}{y \cdot \Delta}$. By the additive Chernoff bound we then obtain 

$$\prob{|X - n_0 \cdot \frac{y \cdot (x/\Delta)}{y \cdot \Delta}| \ge \sqrt{y \cdot (x/\Delta)}} \le 2e^{-2} < 1/3$$

Dividing the inequality under the probability by $x/\Delta^2$ we obtain 

$$\prob{|X\cdot (\Delta^2/x) - n_0| \ge \frac{\sqrt{y \cdot (x/\Delta)}}{x/\Delta^2}} \le 2e^{-2} < 1/3$$

Since $\frac{\sqrt{y \cdot (x/\Delta)}}{x/\Delta^2} \le \Delta$ (recall that $y \le x/\Delta$), we obtain that $\hat{n}_0 = X\cdot (\Delta^2/x)$ is a $\Delta$-approximation of $n_0$ with probability $> 2/3$. We amplify the probability by taking the median of the values computed over all subsets of indexes.
\end{proof}

\begin{lemma}\label{lm:excess_approx}
By querying $\tilde{\Oh}(x^2 /\Delta^2)$ random indexes of a string $S'$ of length $x \le n$,
there is an algorithm computing the excess numbers of any substring $S$ of $S'$ with precision $\Delta$ correctly with probability at least $1-1/n^3$.
\end{lemma}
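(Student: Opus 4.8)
The plan is to deduce Lemma~\ref{lm:excess_approx} from Lemma~\ref{lm:HN_all_substrings} essentially for free, by the observation recorded just before the statement: by Equation~\ref{eq:excess}, the excess numbers $e_0(S)$ and $e_1(S)$ are both maxima of differences $n_1(S')-n_0(S')$ (over prefixes $S'$ of $S$) and $n_0(S'')-n_1(S'')$ (over suffixes $S''$ of $S$), where the empty prefix and empty suffix are allowed. So it suffices to be able to approximate, for \emph{every} prefix and \emph{every} suffix of \emph{every} substring $S$ of $S'$, the counts $n_0$ and $n_1$, each with additive precision $\Delta/2$; taking the maximum over the (signed) differences then yields a $\Delta$-approximation of each excess number.

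Concretely, first I would invoke Lemma~\ref{lm:HN_all_substrings} once, with precision parameter $\Delta/2$ in place of $\Delta$: this queries $\tilde{\Oh}(x^2/(\Delta/2)^2) = \tilde{\Oh}(x^2/\Delta^2)$ indexes of $S'$ and, with probability at least $1-1/n^3$, produces an estimator $\hat{n}_b(\cdot)$ that is simultaneously a $(\Delta/2)$-approximation of the number of opening ($b=0$) or closing ($b=1$) parentheses of \emph{every} substring of $S'$. (A single application already handles all substrings; if one prefers, one can apply it twice, once for opening and once for closing parentheses, at the cost of only a constant factor, still $\tilde{\Oh}(x^2/\Delta^2)$ queries, and then take a union bound over the two $1/n^3$ failure events, which is absorbed into the stated bound after adjusting the polylogarithmic factor hidden in $\tilde\Oh$.) Note every prefix of a substring $S$ of $S'$ is itself a substring of $S'$, and likewise every suffix, so these estimators cover all the quantities appearing in Equation~\ref{eq:excess}.

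Second, given any substring $S$ of $S'$, I would define the estimate $\hat{e}_1(S) = \max_{S'\text{ prefix of }S}\bigl(\hat{n}_1(S') - \hat{n}_0(S')\bigr)$ and $\hat{e}_0(S) = \max_{S''\text{ suffix of }S}\bigl(\hat{n}_0(S'') - \hat{n}_1(S'')\bigr)$, where the maxima include the empty prefix/suffix (contributing $0$) to match Equation~\ref{eq:excess}; these can be computed from the already-queried data without further queries. Conditioned on the (high-probability) event that all of $\hat{n}_0,\hat{n}_1$ are within $\Delta/2$ of the true counts, for each prefix $S'$ the quantity $\hat{n}_1(S')-\hat{n}_0(S')$ is within $\Delta$ of $n_1(S')-n_0(S')$, and since taking a maximum of uniformly $\Delta$-close functions preserves $\Delta$-closeness of the maxima, $|\hat{e}_1(S) - e_1(S)| \le \Delta$; the argument for $e_0$ is identical. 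Thus with probability at least $1-1/n^3$ the algorithm computes the excess numbers of every substring $S$ of $S'$ with precision $\Delta$, as claimed.

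There is no real obstacle here — the lemma is a corollary of Lemma~\ref{lm:HN_all_substrings} plus Equation~\ref{eq:excess}. The only point requiring a little care is bookkeeping: one must make sure that halving the precision (to split the $\Delta$ budget between the opening- and closing-parenthesis errors) only changes the query count by a constant factor, that the failure probabilities from any union bound still sum to at most $1/n^3$ after adjusting the polylog factor implicit in $\tilde\Oh$, and that the empty prefix and empty suffix are included in the maxima so that the estimate is a genuine upper-bounding-style quantity consistent with Equation~\ref{eq:excess}. None of these affects the asymptotics.
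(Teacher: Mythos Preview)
Your proposal is correct and matches the paper's own proof almost verbatim: the paper simply states that the lemma follows immediately from Equation~\ref{eq:excess} and Lemma~\ref{lm:HN_all_substrings} applied with precision $\Delta/2$. Your write-up just unpacks that one-line argument carefully, and the extra bookkeeping you note (constant-factor blowup, union bound, inclusion of empty prefix/suffix) is exactly the routine verification implicit in the paper's terse proof.
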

\begin{proof}
The lemma follows immediately from Equation~\ref{eq:excess} and Lemma~\ref{lm:HN_all_substrings} for $\Delta = \Delta / 2$.
\end{proof}

\subsection{Matching graph} \label{sec:mg}
Let us first remind the notion of a matching graph introduced by Parnas et al.~\cite{Dm-test}. 
Let $k, \ell$ be the excess numbers of $T$, i.e. the smallest integers such that $T' = (^k \; \mu(S) \; )^\ell \in D_1$. Since $T' \in D_1$, there is a unique perfect matching on its characters. Let $b = n^{4/5}$. We divide $T$ into non-overlapping blocks of length $b$ (the last block may be shorter).

\begin{definition}[Matching graph]\label{def:matching_graph}
The matching graph $G = (V,E)$ of $T$ is a weighted graph where $V$ is a set of the blocks of $T$. If $w(i,j)$ parentheses in block $i$ match parentheses in block $j$, then the two blocks $i, j$ are connected by an edge $(i,j)$ of weight $w(i,j)$.
\end{definition}

In other words, the matching graph tells if the blocks $i,j$ contain matching parentheses, and also the number of such parentheses. Compared to the definition given in~\cite{Dm-test}, we changed the size of the blocks, which will allow us to use recursion and to improve the upper bound. This change does not affect the properties of the matching graph stated in~\cite{Dm-test}.

\begin{remark}[\cite{Dm-test}]
\label{rm:matching_graph_planarity}
The matching graph $G$ is planar and therefore has at most $3 n/b$ edges.
\end{remark}

We say that blocks $i$ and $j$ are neighbours if there is an edge between them. Let $T_{i,j}$ be the substring of $T$ containing blocks $i$-th to $j$-th inclusively. 

\begin{lemma}[\cite{Dm-test}]\label{cor:weight}
Let $i \neq j$ be two blocks of $T$ and define $\sigma (i,j) = \min\{e_0(T_{i,i}), e_1(T_{i+1,j})\} - e_1(T_{i+1,j-1})$. The following is true: (a) If $\sigma (i,j) > 0$, then $i,j$ are neighbours; (b) If $i, j$ are neighbours, $w(i,j) = \sigma (i,j)$. 
\end{lemma}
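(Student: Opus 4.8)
The plan is to work directly with the unique arc‑matching of $T'=(^k\,\mu(T)\,)^\ell\in D_1$ and with the prefix‑height function $H$ of $T$, where $H(r)$ is the number of opening minus closing parentheses among the first $r$ characters of $T$ (so $H(0)=0$, and the matched partner of an opening parenthesis at position $p$ is the first $q>p$ with $H(q)=H(p)-1$, provided such a $q$ lies within $T$; otherwise that parenthesis is matched to one of the appended $)^\ell$ and contributes to no edge of $G$). I will assume $i<j$; the expression $\sigma(i,j)$ is written with this ordering in mind, and $w(i,j)=w(j,i)$. For a block $T_{t,t}$ write $H_t=H(\mathrm{end}(t))$ and $\ell_t=\min\{H(r):\mathrm{end}(t-1)\le r\le\mathrm{end}(t)\}$. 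Reading Equation~\ref{eq:excess} through $H$ gives $e_0(T_{t,t})=H_t-\ell_t$ and, more generally, $e_1(T_{s+1,t})=H_s-\min\{H(r):\mathrm{end}(s)\le r\le\mathrm{end}(t)\}$, where all these minima are taken over the appropriate half‑open ranges of positions, including the endpoints coming from the empty prefix/suffix conventions.

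\textbf{Step 1 (excess openings of a block).} I first record that the excess opening parentheses of a block $T_{t,t}$ are exactly the positions $p$ in the block with $H(p)=\min\{H(r):p\le r\le\mathrm{end}(t)\}$, and that $p\mapsto H(p)$ is a bijection from this set onto the integer interval $\{\ell_t+1,\dots,H_t\}$. This follows by scanning the block from right to left: these positions are precisely those at which the suffix surplus $(\#\text{closings})-(\#\text{openings})$ attains a new record, and at consecutive records $H$ takes consecutive values. Note also that any parenthesis matched (in $T'$) into a block other than its own is necessarily such an excess opening of its own block.

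\textbf{Step 2 (which openings of block $i$ reach block $j$) and counting.} For $i<j$ set $a=\min\{H(r):\mathrm{end}(i)\le r\le\mathrm{end}(j-1)\}$ and $c=\min\{H(r):\mathrm{end}(i)\le r\le\mathrm{end}(j)\}$, so that $e_1(T_{i+1,j-1})=H_i-a$, $e_1(T_{i+1,j})=H_i-c$, and $c\le a\le H_i$ (the degenerate case $j=i+1$ gives $a=H_i$). Using the ``first return to level $H(p)-1$'' description of partners, I will show that an opening parenthesis $p$ of block $i$ is matched to a closing parenthesis inside block $j$ if and only if $p$ is an excess opening of block $i$ and $c<H(p)\le a$: given that $p$ is an excess opening of block $i$, the condition $H(p)\le a$ says $H$ never returns to level $H(p)-1$ before the end of block $j-1$, and then $H(p)>c$ forces such a return inside block $j$; conversely $H(p)>a$ puts the partner in an earlier block and $H(p)\le c$ puts it in a later block or among the appended $)^\ell$. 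Combining this with Step~1, $w(i,j)$ is the number of integers $h$ with $h\in\{\ell_i+1,\dots,H_i\}$ and $c<h\le a$, which (since $a\le H_i$) equals $\max\{0,\ a-\max(\ell_i,c)\}$. On the other hand,
\[
\sigma(i,j)=\min\{e_0(T_{i,i}),e_1(T_{i+1,j})\}-e_1(T_{i+1,j-1})=\min\{H_i-\ell_i,\ H_i-c\}-(H_i-a)=a-\max(\ell_i,c),
\]
so $w(i,j)=\max\{0,\sigma(i,j)\}$. Part~(a) follows since $\sigma(i,j)>0$ gives $w(i,j)=\sigma(i,j)>0$, hence an edge $(i,j)$; part~(b) follows since if $i,j$ are neighbours then $w(i,j)>0$, which forces $\sigma(i,j)=w(i,j)$.

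The main obstacle I expect is Step~2: getting the boundary cases exactly right. In particular one must handle the degenerate case $j=i+1$ (where $T_{i+1,j-1}$ is empty and $a=H_i$), correctly account for the appended $(^k$ and $)^\ell$ — openings of block $i$ matched to the suffix $)^\ell$ are exactly the global excess openings of $T$, and one must check they contribute to no $w(i,j)$ — and be scrupulous that every minimum is taken over the right half‑open range of positions so that the identities $e_0(T_{t,t})=H_t-\ell_t$ and $e_1(T_{s+1,t})=H_s-\min\{\cdots\}$ hold verbatim, including the empty‑prefix and empty‑suffix cases.
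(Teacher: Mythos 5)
The paper does not prove this lemma at all — it is stated as a citation from \cite{Dm-test} (Parnas, Ron and Rubinfeld), so there is no in-paper proof for me to compare against. As a self-contained argument, your prefix-height derivation is correct and does in fact yield the stronger identity $w(i,j)=\max\{0,\sigma(i,j)\}$ from which (a) and (b) both fall out, and you correctly dispatch the boundary cases ($j=i+1$, globally excess openings matched to the appended $)^\ell$).

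One imprecision in Step~1 should be fixed before this is watertight. As written, ``the excess opening parentheses of $T_{t,t}$ are exactly the positions $p$ with $H(p)=\min\{H(r):p\le r\le\mathrm{end}(t)\}$'' is false: a closing parenthesis can also realize the suffix minimum (e.g.\ in a block ``$(\,)\,)\,($'' the third position, a closing, satisfies the height condition), and over such a block the claimed bijection onto $\{\ell_t+1,\dots,H_t\}$ then fails. The correct statement adds the clause ``and $T[p]$ is an opening parenthesis''; equivalently, these are the positions where the suffix record drops strictly, i.e.\ $H(p-1)<\min\{H(r):p\le r\le\mathrm{end}(t)\}$, which is automatically an opening. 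Your right-to-left ``new record'' reading and your subsequent use of ``excess opening of block $i$'' in Step~2 already carry the right meaning, so this is a local patch rather than a structural gap, but as literally stated the displayed characterization and the bijection claim are incorrect.
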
 

We will compute the matching graph approximately using Algorithm~\ref{alg:AMG}.
It relies on the approximation of excess parentheses with precision $\eps b$ from Lemma~\ref{lm:excess_approx}. 
We call the resulting output graph the \emph{approximate matching graph $\hat{G}$}.

\begin{algorithm}
\caption{Approximate matching graph $\hat{G}$}
\label{alg:AMG}
Input: string $T$ of size $n$
\begin{enumerate}
\item  Divide $T$ into non-overlapping blocks of length $b= n^{4/5}$ 
\item Run the excess parentheses preprocessing for precision $\eps b$ (Lemma~\ref{lm:excess_approx})
\item For each $i, j \in \{1, \ldots, n/b\}$, $i \neq j$:
\begin{enumerate}
\item Get $\hat{e}_0(T_{i,i})$, $\hat{e}_1(T_{i+1,j})$, and $\hat{e}_1(T_{i+1,j-1})$
\item Compute $\hat{\sigma}(i,j) = \min\{\hat{e}_0(T_{i,i}), \hat{e}_1(T_{i+1,j})\} - \hat{e}_1(T_{i+1,j-1})$
\end{enumerate}
\item Construct the weighted graph $\hat{G} = (V, \hat{E})$ where $V$ is a set of blocks of $T$ 
and $\hat{E}$ is the set of edges $(i,j)$ such that $\hat{\sigma}(i,j) > 8 \eps b$, with respective weights
$\hat{w}(i,j)=\hat{\sigma}(i,j)$
\end{enumerate}
\end{algorithm}

The approximate matching graph satisfies the following property.

\begin{lemma}[\cite{Dm-test}]\label{lm:approx_matching_graph}
With probability at least $1 - 1/n$, the approximate matching graph $\hat{G}$ is a subgraph of the matching graph $G$, and every vertex in $\hat{G}$ has degree $\Oh(1/\eps)$. 
\end{lemma}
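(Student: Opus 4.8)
The plan is to prove the two claims of Lemma~\ref{lm:approx_matching_graph} separately, both by a union bound over the $O((n/b)^2)$ triples of block-indices on which the preprocessing of Lemma~\ref{lm:excess_approx} is queried, and both conditioned on the (high-probability) event that \emph{all} the excess-number estimates $\hat e_0(T_{i,i})$, $\hat e_1(T_{i+1,j})$, $\hat e_1(T_{i+1,j-1})$ are within $\eps b$ of their true values. By Lemma~\ref{lm:excess_approx} applied with $x=n$ and $\Delta=\eps b$, each individual estimate is good with probability at least $1-1/n^3$; since only $O(n^2)$ such estimates are ever used, a union bound gives that all of them are simultaneously good with probability at least $1-1/n$. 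From now on I would work deterministically, assuming this event holds.

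Under this event, each computed quantity $\hat\sigma(i,j)=\min\{\hat e_0(T_{i,i}),\hat e_1(T_{i+1,j})\}-\hat e_1(T_{i+1,j-1})$ differs from the true $\sigma(i,j)$ of Lemma~\ref{cor:weight} by at most $2\eps b$ (the $\min$ of two $\eps b$-approximations is an $\eps b$-approximation of the $\min$, and subtracting another $\eps b$-approximation adds at most another $\eps b$). Hence if $(i,j)\in\hat E$, i.e.\ $\hat\sigma(i,j)>8\eps b$, then $\sigma(i,j)>6\eps b>0$, so by part (a) of Lemma~\ref{cor:weight} the blocks $i,j$ are genuinely neighbours in $G$, and by part (b) the true weight is $w(i,j)=\sigma(i,j)$. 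This shows $\hat G$ is a subgraph of $G$ as a graph; (the weights need not match, but that is not asserted).

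For the degree bound, fix a block $i$ and consider its $\hat G$-neighbours $j$. For each such $j$ we have shown $w(i,j)=\sigma(i,j)\ge\hat\sigma(i,j)-2\eps b>6\eps b$, so block $i$ has strictly more than $6\eps b$ parentheses matched into each neighbouring block $j$. Since block $i$ has length $b$, it contains at most $b$ parentheses in total, so at most $b$ of them can be matched outside $i$; these are partitioned among the neighbours, each of which receives more than $6\eps b$ of them. Therefore the number of $\hat G$-neighbours of $i$ is less than $b/(6\eps b)=1/(6\eps)=O(1/\eps)$, as claimed. (Strictly, one should note the matched parentheses of block $i$ going to distinct blocks are disjoint, which is immediate since each parenthesis has a unique match in the perfect matching on $T'$.)

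The only genuinely delicate point is the first one: making sure the union bound is over the right, polynomially-bounded collection of events. The preprocessing of Lemma~\ref{lm:excess_approx} queries a \emph{fixed} random sample of $\tilde O(n^2/(\eps b)^2)$ indices and then, deterministically from that sample, produces an estimate for \emph{every} substring; the failure probability $1/n^3$ in that lemma is already the probability that \emph{some} substring's estimate is off, or can be made so per-substring. Either way, the number of substrings whose estimates Algorithm~\ref{alg:AMG} actually consults is $O(n^2/b^2)\cdot 3 = O(n^2)$ (three per ordered pair $(i,j)$), so the total failure probability is $O(n^2)\cdot 1/n^3 = O(1/n)$, which after adjusting constants in the sample size is at most $1/n$. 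I expect this bookkeeping — rather than any inequality — to be the main thing to get right; everything after conditioning on the good event is the short deterministic argument above, essentially identical to the one in~\cite{Dm-test} with $b=n^{4/5}$ substituted for the old block length.
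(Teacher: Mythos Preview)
Your argument is correct. Note, however, that the paper does not actually supply a proof of this lemma: it is quoted verbatim from~\cite{Dm-test} and used as a black box, so there is no ``paper's own proof'' to compare against. Your write-up is precisely the standard argument from~\cite{Dm-test} specialized to the new block length $b=n^{4/5}$, and every step goes through: the $2\eps b$ bound on $|\hat\sigma(i,j)-\sigma(i,j)|$ via the min-of-approximations observation, the appeal to Lemma~\ref{cor:weight}(a) for the subgraph claim, and the pigeonhole on the at most $b$ excess parentheses of a block for the $O(1/\eps)$ degree bound. One cosmetic remark: the number of estimates actually consulted is $O((n/b)^2)=O(n^{2/5})$, not $O(n^2)$; your looser bound is of course still sufficient for the union bound against the $1/n^3$ per-estimate failure probability.
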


We also show a new property that will be essential for the analysis of our $D_m$-consistency tester. 
For this, define a \emph{locally excess parenthesis} to be an excess parenthesis of some block of $T$ which is not excess in $T$. We will show that the total weight of the edges of the approximate matching graph accounts for almost all locally excess parentheses.

\begin{lemma}\label{lm:total_weight_approx_matching_graph}
$$\sum_{(i,j) \in \hat{E}: i < j} \hat{w}(i,j) \ge \frac{1}{2} \sum_{i=1}^{i = n/b} \left(e_0(T_{i,i}) + e_1(T_{i,i}) \right) - (e_0(T) + e_1(T)) - \Oh(\eps n).$$
\end{lemma}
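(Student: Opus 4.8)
The plan is to relate the sum of local excess numbers to the structure of the true perfect matching on $T' = (^k\,\mu(T)\,)^\ell$, and then to show that the approximate matching graph $\hat G$ captures almost all of the "inter-block" part of that matching. First I would set up the accounting: a locally excess parenthesis of block $i$ is one matched (in $T'$) to a parenthesis of some block $j \ne i$. Summing $e_0(T_{i,i}) + e_1(T_{i,i})$ over all blocks double-counts every inter-block matched pair (once from each endpoint block) and counts each globally-excess parenthesis exactly once; hence $\frac12\sum_i(e_0(T_{i,i})+e_1(T_{i,i})) - \frac12(e_0(T)+e_1(T))$ equals exactly the number of inter-block matched pairs, i.e. the total "true" weight $\sum_{i<j} w(i,j)$ of the exact matching graph $G$. (The stated bound uses $(e_0(T)+e_1(T))$ rather than half of it, which only makes the right-hand side smaller, so this direction is safe.)

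Next I would pass from $G$ to $\hat G$. By Lemma~\ref{lm:approx_matching_graph}, with probability $\ge 1 - 1/n$ the graph $\hat G$ is a subgraph of $G$; for the edges that do survive, Lemma~\ref{cor:weight}(b) gives $w(i,j) = \sigma(i,j)$, and the approximation guarantee of Lemma~\ref{lm:excess_approx} gives $|\hat\sigma(i,j) - \sigma(i,j)| \le 3\eps b$ (three estimates, each with error $\le \eps b$, combined into $\hat\sigma$), so $\hat w(i,j) = \hat\sigma(i,j) \ge w(i,j) - 3\eps b$ on every edge of $\hat E$. It therefore remains to bound the weight that is lost: edges $(i,j) \in E$ with $w(i,j) > 0$ that are \emph{not} in $\hat E$, plus the per-edge slack $3\eps b$ summed over $\hat E$. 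For the latter, Lemma~\ref{lm:approx_matching_graph} bounds the number of edges of $\hat G$ by $\Oh((n/b)\cdot(1/\eps))$; wait—more carefully, $\hat G$ has $\Oh(1/\eps)$ edges per vertex on $n/b$ vertices, giving $\Oh(n/(\eps b))$ edges, so the total slack is $\Oh(n/(\eps b)) \cdot 3\eps b = \Oh(n)$, which is too big. I would instead use planarity (Remark~\ref{rm:matching_graph_planarity}): $G$, hence $\hat G$, has at most $3n/b$ edges, so the summed slack is $\le 3\eps b \cdot 3n/b = \Oh(\eps n)$, as required.

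For the missing edges: an edge $(i,j)$ of $G$ fails to enter $\hat E$ only if $\hat\sigma(i,j) \le 8\eps b$, which (again by the $3\eps b$ approximation) forces $\sigma(i,j) = w(i,j) \le 11\eps b$. Thus every dropped edge has true weight $\Oh(\eps b)$, and since $G$ is planar with at most $3n/b$ edges, the total weight dropped this way is $\le 11\eps b \cdot 3n/b = \Oh(\eps n)$. Combining: $\sum_{(i,j)\in\hat E,\, i<j}\hat w(i,j) \ge \sum_{(i,j)\in E,\, i<j} w(i,j) - \Oh(\eps n) = \frac12\sum_i(e_0(T_{i,i})+e_1(T_{i,i})) - \frac12(e_0(T)+e_1(T)) - \Oh(\eps n)$, which is at least the claimed right-hand side. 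The whole argument is conditioned on the success event of Lemma~\ref{lm:approx_matching_graph}; outside it we make no claim, but this only costs an additive $1/n$ in the failure probability, consistent with how the lemma is used downstream.

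\textbf{Main obstacle.} The one genuinely delicate point is the bookkeeping of \emph{which} excess parentheses are "locally excess but not globally excess" and making sure the double-counting identity $\frac12\sum_i(e_0(T_{i,i})+e_1(T_{i,i})) = \sum_{i<j} w(i,j) + \frac12(e_0(T)+e_1(T))$ is exactly right — in particular that a parenthesis excess within its block is matched in $T'$ either to another block or to one of the $k+\ell$ artificial boundary parentheses, and that the latter case corresponds precisely to being globally excess. Everything else (the $3\eps b$ and $8\eps b$ thresholds, the planarity edge count) is routine once that identity is in place.
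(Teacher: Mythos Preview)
Your proposal is correct and follows essentially the same approach as the paper: both arguments (i) establish that $\sum_{(i,j)\in E,\,i<j} w(i,j)$ accounts for the locally excess parentheses via the double-counting identity you describe, and (ii) pass from $G$ to $\hat G$ by bounding the weight of dropped edges (each has $w(i,j)=O(\eps b)$) and the per-edge approximation slack, in both cases using planarity of $G$ to cap the edge count at $3n/b$. Your version is in fact slightly more explicit---you state the accounting identity as an equality with the factor $\tfrac12(e_0(T)+e_1(T))$ and then note the statement only needs the weaker bound, and you are more careful about the $3\eps b$ error in $\hat\sigma$---but the structure of the argument is identical to the paper's.
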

\begin{proof}
Consider an edge $(i,j)$ of weight $w(i,j) \ge 9 \eps b$. We then have $\sigma (i,j)\ge 9\eps b$. Consequently $\hat{\sigma}(i,j) >8\eps b$, which implies that $(i,j) \in \hat{E}$. In other words, $(i,j)$ is an edge of $\hat{G}$ as well. By Remark~\ref{rm:matching_graph_planarity}, the total weight of edges $(i,j) \in E$ such that $w(i,j) < 9\eps b$ is at most $3(n/b) \cdot 9\eps b = 27 \eps n$. Therefore,
$$\sum_{(i,j) \in \hat{E}: i < j} \hat{w}(i,j)  = \sum_{(i,j) \in \hat{E}: i < j} \hat{\sigma}(i,j) \ge \sum_{(i,j) \in \hat{E}: i < j} \sigma(i,j) - \Oh(\eps n) = \sum_{(i,j) \in E: i < j} w(i,j) - \Oh(\eps n)$$
Since $T$ contains $e_0 (T) + e_1 (T)$ excess parentheses, we have that
$$\sum_{(i,j) \in E: i < j} w(i,j) \ge \frac{1}{2}\left( \sum_{i=1}^{i=n/b} e_0(T_{i,i})+e_1(T_{i,i})\right) - (e_0(T) + e_1(T))$$
Combining the two inequalities we obtain the claim.
\end{proof}

\section{A tester for $D_m$-consistency}\label{sec:D_m_upper}
Let $T$ be the input string partitioned into non-overlapping blocks of length $b = n^{4/5}$ (except for the last block that may be shorter). Our $D_m$-consistency test consists of two steps: first we check that the excess parentheses of the blocks can be matched correctly, and then (recursively) check that the blocks are $D_m$-consistent. The structure of the test repeats the structure of the test by Parnas et al.~\cite{Dm-test}, 
in particular, both tests are based on the notion of approximate matching graph. However, our test uses new and much more sophisticated techniques, which finally gives us a better bound.

\begin{algorithm}
\caption{\textsc{$D_m$-consistency$(n,k)$}}
\label{alg:C_m}
Input: string $T$ of length $n$
\begin{enumerate}
\item If $k = 0$, run the deterministic $D_m$-consistency $0$-tester, and stop
\item Divide $T$ into non-overlapping blocks of length $b = n^{4/5}$
\item Inter-block matching: 
\begin{enumerate}
\item Compute the approximate matching graph $\hat{G}$ for the blocks using Algorithm~\ref{alg:AMG}
\item Select $\eps^{-1} \log n$ blocks uniformly at random and for each find all of its $\Oh(1/\eps)$  neighbours in $\hat{G}$
\item For each selected block $i$ and for each of its  neighbours $j$: 
\begin{enumerate}
\item Find (approximately) the smallest substring $S_1$ of block $i$ that contains all excess opening parentheses that match in block $j$, and the smallest substring $S_2$ of block $j$ that contains all excess closing parentheses that match in block $i$ (see Section~\ref{sec:global})
\item Check that $S_1,S_2$ $\eps$-match using \textsc{Substring $\eps$-matching$(b)$} (Theorem~\ref{th:substring_matching})
\end{enumerate}
\end{enumerate}
\item $D_m$-consistency of the blocks:
\begin{enumerate}
\item Select $4 \eps^{-1} \log n$ blocks uniformly at random
\item Run the \textsc{$D_m$-consistency$(b,k-1)$} test twice for each of the selected blocks
\end{enumerate}
\end{enumerate}
\end{algorithm}

Algorithm~\ref{alg:C_m} shows the main steps of our new tester \textsc{$D_m$-consistency$(n,k)$}. In Step 3 we call a subroutine \textsc{Substring $\eps$-matching$(x)$}. It must accept $S_1$ if almost all excess parentheses in it can be matched in $S_2$. Formally,

\begin{definition}[sequentially match, $\eps$-match] \label{def:eps_match}
Consider two substrings $S_1, S_2$ of $T$ of maximal length $x$.
We say that the excess opening parentheses of $S_1$ can be \emph{matched sequentially} in $S_2$
 if there is a perfect matching between the excess opening parentheses of $S_1$ and a (continuous) subrange of excess closing parentheses of $S_2$ such that any two matched parentheses $T[i], T[j]$ have the same type and the \emph{distance between them}, defined as $|j-i+1|$, is even. 

We say that $S_1,S_2$ \emph{$\eps$-match} if all but at most  $7\eps x$ leftmost and $5\eps x$ rightmost 
excess opening parentheses of $S_1$ can be matched sequentially in $S_2$ . 
\end{definition}

\begin{quote}
\textsc{Substring $\eps$-matching$(x)$}\\
Input: Two substrings $S_1,S_2$ of $T$ of maximal size $x$\\
Output: Accept if they $\eps$-match;
Reject if at most $ e_0 (S_1) - 30\eps x$ excess opening parentheses of $S_1$ can be matched sequentially in $S_2$.
\end{quote}
The choice of constants is important for the analysis of \textsc{$D_m$-consistency$(n,k)$}.  
In Section~\ref{sec:substring_matching} we show the following theorem by using recursion. 

\begin{theorem}\label{th:substring_matching}
For every constant $0 < \delta \le 1/2$ there is an algorithm for
\textsc{Substring $\eps$-matching$(x)$}
with query complexity $\tilde{\Oh}(\eps^{(2\delta)^{(1/\log 3/4) - 2}+4} x^{1/2+\delta})$ and bounded error $1/n^2$.
\end{theorem}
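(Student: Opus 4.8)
## Proof Proposal for Theorem \ref{th:substring_matching}

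The plan is to build a recursive algorithm for \textsc{Substring $\eps$-matching} that reduces an instance of size $x$ to instances of size roughly $x^{3/4}$, paying only a multiplicative polylogarithmic overhead plus an additive polynomial cost at each level of recursion, and to track carefully how the $\eps$-dependent factor grows across the $O(\log\log x)$ levels needed until the subinstances become small enough to solve by brute force. At the base case, when the substring length drops to a constant (or to $\mathrm{poly}(1/\eps)$), we query the entire substrings and solve the matching problem exactly, which is where the final constant power of $\eps$ in the exponent of the complexity originates.

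For the recursive step I would proceed as follows. First, run the excess-parentheses preprocessing of Lemma~\ref{lm:excess_approx} on both $S_1$ and $S_2$ with precision $\Delta=\Theta(\eps x^{3/4})$; by that lemma this costs $\tilde O(x^{1/2})$ queries and lets us approximate, up to additive error $\Theta(\eps x^{3/4})$, the excess numbers of every subinterval, in particular the excess counts of candidate subblocks. Divide each of $S_1,S_2$ into $x^{1/4}$ subblocks of length $x^{3/4}$. The subtlety flagged in the overview is that $S_1,S_2$ were only located approximately inside their host blocks, so we cannot be sure which pair of subblocks actually contains the stretch of parentheses that must match; to handle this I would (i) sample a set of $\tilde O(x^{1/8})$ subblocks from each side so that, by a birthday-paradox argument, with high probability the sample hits both subblocks of any ``heavy'' matched pair, and (ii) for each of the $\mathrm{poly}(1/\eps)$ plausible guesses of the true borders, identify the corresponding candidate subblock pair and, crucially, reuse the already-queried sampled subblocks to run the recursive \textsc{Substring $\eps$-matching$(x^{3/4})$} test on that pair rather than re-querying fresh subblocks each time. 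The correctness argument then splits into the two usual directions: if $S_1,S_2$ $\eps$-match (the accept case), then for the correct border guess the induced subblock pairs sequentially $\eps$-match up to the slack allowed by the approximation error $\Theta(\eps x^{3/4})$, so all recursive calls accept with high probability; conversely, if at most $e_0(S_1)-30\eps x$ excess opening parentheses of $S_1$ can be matched sequentially in $S_2$ (the reject case), then a counting argument over the $x^{1/4}$ subblocks shows that a random heavy subblock pair fails the sequential-matching condition by more than the recursion's own reject threshold $30\eps x^{3/4}$, so the corresponding recursive call rejects with high probability — the constants $7,5,30$ in Definition~\ref{def:eps_match} are chosen precisely so that this slack bookkeeping closes at every level.

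For the complexity, let $Q(x)$ denote the query complexity. The preprocessing and sampling contribute an additive $\tilde O(x^{1/2})$ (up to an $\eps$-power factor coming from $\Delta=\Theta(\eps x^{3/4})$ feeding into Lemma~\ref{lm:excess_approx}, which gives an $\eps^{-O(1)}$ factor), and each level makes $\tilde O(\eps^{-O(1)})$ recursive calls on instances of size $x^{3/4}$ using shared sampled data, so schematically $Q(x) = \tilde O(\eps^{-c})\cdot Q(x^{3/4}) + \tilde O(\eps^{-c} x^{1/2})$ for an absolute constant $c$. Unrolling, after $r$ levels the instance size is $x^{(3/4)^r}$; we stop at $r = \Theta(\log\log x)$ levels once this reaches a constant. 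The dominant term is the top-level additive $x^{1/2}$ one, but the accumulated branching multiplies in an $\eps$-power whose exponent is a geometric-type sum over the $O(\log\log x)$ levels — this is exactly what produces the stated exponent $(2\delta)^{(1/\log(3/4))-2}+4$ on $\eps$; and to convert the naive $x^{1/2}\cdot\mathrm{polyloglog}$ bound into the claimed $x^{1/2+\delta}$ I would, as usual, balance the recursion by stopping it early (after $O(1/\delta \cdot \log\log x)$-many, or rather $O(\log(1/\delta))$-many, levels) so that the polylogarithmic-in-the-recursion-depth factors get absorbed into the $x^{\delta}$ slack. Finally, the bounded-error bound $1/n^2$ follows by a union bound over the $O(\log\log n)$ recursion levels and the $\mathrm{poly}(n)$-many total recursive calls, using that Lemma~\ref{lm:excess_approx} already gives failure probability $1/n^3$ and that each recursive call can be independently amplified to the same.

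The main obstacle I expect is the combination in the recursive step of (a) the approximate knowledge of the substring borders and (b) the demand that the sampled subblocks be reused across all border guesses: one must argue that a single sample of $\tilde O(x^{1/8})$ subblocks per side simultaneously suffices, with high probability, to run a faithful recursive test for every one of the $\mathrm{poly}(1/\eps)$ guesses, and that the sequential-matching structure (type-consistency and even-distance constraints of Definition~\ref{def:eps_match}) is preserved under restriction to sampled subblocks up to the stated $\eps x$-slack. Getting the slack constants to propagate cleanly through both the accept and reject analyses at every level — so that the $7\eps x$/$5\eps x$ accept margin and the $30\eps x$ reject margin at level $x$ imply the corresponding margins at level $x^{3/4}$ — is the delicate bookkeeping that the rest of Section~\ref{sec:substring_matching} will need to carry out in full.
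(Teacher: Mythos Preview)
Your high-level plan (recurse to size $x^{3/4}$, birthday-sample $\tilde O(x^{1/8})$ subblocks, try several border guesses, reuse the sampled subblocks) matches the paper's architecture, but two concrete pieces are missing or wrong, and without them neither the correctness nor the stated $\eps$-exponent goes through.

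\textbf{Block partitioning and the $\eps$-recursion.} You partition $S_1,S_2$ into $x^{1/4}$ subblocks of fixed length $x^{3/4}$ and then try to pair ``block $m$ of $S_1$'' with ``block $m$ of $S_2$''. This does not work: excess parentheses can be distributed arbitrarily, so the $m$-th fixed-length subblock of $S_1$ need not correspond to any single fixed-length subblock of $S_2$. The paper instead partitions each string into blocks containing (approximately) a fixed number $10\eps' x'$ of excess parentheses, where $\eps'=\eps/30$, so that block $m$ on the left lines up with block $m$ on the right up to a controlled shift. Doing this requires excess-number precision $(\eps')^2 x'$, and the resulting alignment error between paired blocks is $\Theta((\eps')^2 x')$. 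Consequently the recursive call must be \textsc{Substring $\eps_{\mathrm{new}}$-matching$(x')$} with $\eps_{\mathrm{new}}=\Theta((\eps')^2)=\Theta(\eps^2)$, not with the same $\eps$. Your proposal keeps $\eps$ fixed across levels, so the slack bookkeeping you allude to cannot close: at level $x'$ you would need margins $7\eps x', 5\eps x', 30\eps x'$, but the alignment error you inherit from the level above is already of that order, swamping the budget. The squaring $\eps\mapsto\Theta(\eps^2)$ is exactly what produces the stated exponent: after $r=\lceil\log_{3/4}(2\delta)\rceil$ levels the $\eps$-degree transforms as $y\mapsto 2y-4$, giving $2^r(0-4)+4$, which is the $(2\delta)^{(1/\log 3/4)-2}+4$ in the theorem. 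Your ``geometric-type sum over $O(\log\log x)$ levels'' does not derive this.

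\textbf{The branching factor in the recurrence.} You sample $\tilde O(x^{1/8})$ subblocks on each side and must run the recursive \textsc{QueryLeft}/\textsc{QueryRight} on every sampled block (the ``reuse'' is only across border guesses in \textsc{MakeDecision}, not across the querying itself). Hence the recurrence is $Q(x)=\tilde O(\eps^{-c}\,x^{1/8})\cdot Q(x^{3/4})+\tilde O(\eps^{-c}x^{1/2})$, not $Q(x)=\tilde O(\eps^{-c})\cdot Q(x^{3/4})+\ldots$ as you wrote. The $x^{1/8}$ factor is what makes the exponent of $x$ behave as $\tfrac12+z\mapsto \tfrac12+\tfrac34 z$ (since $x^{1/8}\cdot (x^{3/4})^{1/2+z}=x^{1/2+3z/4}$); without it your unrolling is off. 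Also, the recursion depth is a fixed constant $k=\lceil\log_{3/4}(2\delta)\rceil$ depending only on $\delta$, not $\Theta(\log\log x)$; the base case at $k=0$ reads the whole length-$x^{(3/4)^k}$ block, and this brute-force cost is what the $x^{\delta}$ slack absorbs.
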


In Section~\ref{sec:global} we give a detailed description of Algorithm~\ref{alg:C_m}, Step 3 (inter-block matching), and then in Section~\ref{sec:D_m_consistency} prove the following theorem. 

\begin{theorem}\label{thm:D_m_consistency}
For any $0 < \delta < 1/2$,
\textsc{$D_m$-consistency$(n,5)$} (Algorithm~\ref{alg:C_m}) is an $\eps$-tester for \textsc{$D_m$-consistency$(n)$} with query complexity $\tilde{\Oh}(\eps^{(2 \delta)^{(1/\log 3/4) - 2} + 2} n^{2/5+\nicefrac{4}{5} \cdot \delta})$ and bounded error $1/6$.
\end{theorem}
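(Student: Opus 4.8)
The proof has three ingredients: bounding the query complexity by unrolling the recursion, checking completeness (with the attendant error bookkeeping), and checking soundness, the last being the heart of the matter. For the query complexity, write $c=(2\delta)^{(1/\log 3/4)-2}+4$ for the exponent of Theorem~\ref{th:substring_matching}. Unrolling Algorithm~\ref{alg:C_m}, the top call \textsc{$D_m$-consistency$(n,5)$} spawns, at recursion depth $t\in\{0,\dots,5\}$, at most $\tilde{\Oh}(\eps^{-t})$ subcalls on strings of length $n^{(4/5)^t}$ (the per-level branching factor $8\eps^{-1}\log n$ comes from Step~4, running the recursive test twice on $4\eps^{-1}\log n$ blocks). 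At an internal node on a string of length $\nu$ we pay $\tilde{\Oh}(\nu^{2/5}/\eps^{2})$ to build $\hat G$ via Algorithm~\ref{alg:AMG} (Lemma~\ref{lm:excess_approx} with precision $\eps\nu^{4/5}$), plus $\Oh(\eps^{-2}\log\nu)$ calls to \textsc{Substring $\eps$-matching} on substrings of length $\nu^{4/5}$, each costing $\tilde{\Oh}(\eps^{c}\nu^{(4/5)(1/2+\delta)})$ by Theorem~\ref{th:substring_matching}; at the leaves ($t=5$) we read the whole string, of length $n^{(4/5)^5}<n^{2/5}$, and decide consistency exactly. Summing over the tree,
\[
\tilde{\Oh}\!\Big(\textstyle\sum_{t=0}^{4}\eps^{\,c-2-t}\, n^{(4/5)^{t+1}(1/2+\delta)}\Big)+\tilde{\Oh}\big(\eps^{-5}\, n^{(4/5)^{5}}\big),
\]
and since the exponent of $n$ is decreasing in $t$ and $(4/5)^{5}<2/5$, for $n$ large relative to $1/\eps$ the $t=0$ term dominates, giving $\tilde{\Oh}(\eps^{\,c-2}\, n^{(4/5)(1/2+\delta)})=\tilde{\Oh}(\eps^{(2\delta)^{(1/\log 3/4)-2}+2}\, n^{2/5+(4/5)\delta})$; for small $n$ one simply queries everything.

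\textbf{Completeness and error bookkeeping.}
If $T$ is $D_m$-consistent, every block is $D_m$-consistent and the matching of a fixed $D_m$-superstring of $T$ restricts to a correct inter-block matching; in particular, for each edge $(i,j)$ of the true matching graph, the substrings located in Step~3(c)(i) have all their cross-matched excess parentheses matching sequentially, so up to the $\Oh(\eps b)$ error in locating them (Section~\ref{sec:global}), which is absorbed into the $7\eps x$/$5\eps x$ slack of Definition~\ref{def:eps_match}, they $\eps$-match and Step~3(c)(ii) accepts. The only randomness that could then cause an erroneous rejection comes from the excess/matching-graph estimates (correct except with probability $1/\mathrm{poly}(n)$ by Lemmas~\ref{lm:excess_approx} and~\ref{lm:approx_matching_graph}) and from the \textsc{Substring $\eps$-matching} calls (error $1/\mathrm{poly}$ each by Theorem~\ref{th:substring_matching}); there are $\tilde{\Oh}(\eps^{-5})$ such events over the whole recursion tree, so a union bound keeps the total well below $1/12$, with no amplification of the recursive call needed on the completeness side. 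The ``run twice'' of Step~4 is what I need on the soundness side: it boosts the $\ge 5/6$ rejection guarantee of a single recursive call on a bad block to $\ge 1-(1/6)^2$, which, combined with the overwhelming probability that Step~4's sampling hits a bad block whenever $\Omega(\eps n/b)$ of them exist, pushes the rejection probability above $5/6$.

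\textbf{Soundness.}
I would argue the contrapositive: if \textsc{$D_m$-consistency$(n,5)$} accepts with probability $>1/6$, then $T$ is $\Oh(\eps)$-close to $D_m$-consistent, after which rescaling $\eps$ by a constant (which only changes the polynomial in $1/\eps$) yields a genuine $\eps$-tester. Condition on the good event that all excess estimates and all approximate matching graphs over the entire recursion behave as in Lemmas~\ref{lm:excess_approx} and~\ref{lm:approx_matching_graph} (costing $1/\mathrm{poly}(n)$ in probability, so the tester still accepts with probability $>1/7$ conditionally). From Step~4 accepting with this probability and the soundness of \textsc{$D_m$-consistency$(b,k-1)$} (induction on $k$, exact at the base case), all but an $\Oh(\eps)$-fraction of the blocks are $\eps$-close to $D_m$-consistent; fixing each such block ($\le\eps b$ characters) and overwriting the $\Oh(\eps n/b)$ remaining blocks ($\le b$ characters each) costs $\Oh(\eps n)$ and makes every block $D_m$-consistent. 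It remains to make the blocks' excess parentheses match globally. By Lemma~\ref{lm:total_weight_approx_matching_graph} the edges of $\hat G$ account for all but $(e_0(T)+e_1(T))+\Oh(\eps n)$ of the locally excess parentheses, and the $e_0(T)+e_1(T)$ genuinely excess parentheses will simply be rewritten to a canonical ``$(_0\cdots)_0$'' pattern at the end (this number is itself $\Oh(\eps n)$ in the only interesting regime, where $T$ was already $\Oh(\eps)$-close). Since Step~3 accepted, the specification of \textsc{Substring $\eps$-matching} forces that on every sampled block--neighbour pair all but $\le 30\eps b$ of the relevant cross-matched excess parentheses can be matched sequentially — same type, even distance — and a weighted sampling argument over the $\eps^{-1}\log n$ chosen blocks and their $\Oh(1/\eps)$ neighbours (using planarity of $G$, Remark~\ref{rm:matching_graph_planarity}, and the $8\eps b$/$9\eps b$ thresholds) shows the edges carrying all but an $\Oh(\eps)$-fraction of $\hat G$'s weight are correct in this sense. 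Walking block by block along $\hat G$ then designates a perfect matching on all but $\Oh(\eps n)$ of the excess parentheses that respects types and parities; realizing it by nesting, correcting the $\Oh(\eps n)$ exceptional parentheses, and canonicalizing the truly excess ones produces a $D_m$-consistent string within Hamming distance $\Oh(\eps n)$ of $T$.

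\textbf{Main obstacle.}
The hard part will be the last step of soundness: composing the per-step slacks — $7\eps x$, $5\eps x$, $30\eps x$ from Definition~\ref{def:eps_match}, $8\eps b$/$9\eps b$ from Algorithm~\ref{alg:AMG} and Lemma~\ref{lm:total_weight_approx_matching_graph}, the $\Oh(\eps b)$ substring-location error, and the $\Oh(1/\eps)$ degree bound — into a single global modification of size at most $\eps n$, and in particular arguing that the edges (hence parentheses) that Step~3 never inspects carry only an $\Oh(\eps)$-fraction of the matching weight. This is precisely where the ``choice of constants is important'' remark bites; I would handle it, as in Section~\ref{sec:D_m_consistency}, by mirroring the corresponding argument of Parnas et al.~\cite{Dm-test} but feeding it the sharper matching-graph accounting of Lemma~\ref{lm:total_weight_approx_matching_graph}.
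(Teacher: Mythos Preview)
Your overall architecture---recursive complexity unrolling, completeness by a union bound over $\mathrm{poly}(n)^{-1}$ events, and a three-step soundness argument (fix blocks, match locally excess parentheses via $\hat G$, clean up the remainder)---is the same as the paper's Section~\ref{sec:D_m_consistency}. The complexity and completeness parts are fine.

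There is, however, a real gap in your soundness argument, in the sentence ``the $e_0(T)+e_1(T)$ genuinely excess parentheses will simply be rewritten to a canonical $(_0\cdots)_0$ pattern at the end (this number is itself $\Oh(\eps n)$ in the only interesting regime, where $T$ was already $\Oh(\eps)$-close).'' This is wrong on two counts. First, you are proving the \emph{consistency} theorem, not membership: a $D_m$-consistent string may have $e_0(T)+e_1(T)=\Theta(n)$ (e.g.\ $T=(_0^{\,n}$), so ``rewriting'' these parentheses is not an $\Oh(\eps n)$ edit, and the parenthetical justification is circular (you are assuming the conclusion). Second, and relatedly, your reading of Lemma~\ref{lm:total_weight_approx_matching_graph} leaves an $(e_0(T)+e_1(T))$-sized hole in the accounting of locally excess parentheses, which you then try to plug with the same bad bound. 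The paper avoids both problems: in its Step~3 the genuinely excess parentheses of $T$ are \emph{left untouched} (they are type~(e) in the decomposition and survive to the end, which is exactly what $D_m$-consistency allows), and Lemma~\ref{lm:global_tests_accepts} is used in the form ``at most $\Oh(\eps n)$ unmatched \emph{locally} excess parentheses,'' which follows from the proof of Lemma~\ref{lm:total_weight_approx_matching_graph} via the exact identity $\sum_{(i,j)\in E,\,i<j} w(i,j)=\tfrac12\big(\sum_i(e_0(T_{i,i})+e_1(T_{i,i}))-(e_0(T)+e_1(T))\big)$, not the slacker inequality you quoted. Once you make these two corrections---leave the global excess alone and use the tight count for the local excess---your sketch lines up with the paper's proof.
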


This theorem immediately implies a new tester for $D_m$-membership via Lemma~\ref{lm:red0}.
\begin{corollary}\label{cor:D_m_upper}
For any $0 < \delta < 1/2$ there is an $\eps$-tester for \textsc{$D_m$-membership$(n)$}  with query complexity $\tilde{\Oh}(\eps^{(2 \delta)^{(1/\log 3/4) - 2} + 2} n^{2/5+\nicefrac{4}{5} \cdot \delta})$. 
\end{corollary}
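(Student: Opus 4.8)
The plan is to obtain the membership tester by directly composing the consistency tester of Theorem~\ref{thm:D_m_consistency} with the reduction of Lemma~\ref{lm:red0}. Concretely, Theorem~\ref{thm:D_m_consistency} provides, for every constant $0<\delta<1/2$, an $\eps$-tester $A$ for \textsc{$D_m$-consistency$(n)$} with bounded error $1/6$ and query complexity $\tilde{\Oh}(\eps^{(2\delta)^{(1/\log 3/4)-2}+2}\, n^{2/5+\nicefrac{4}{5}\cdot\delta})$. Feeding this $A$ into Lemma~\ref{lm:red0} yields a $\Theta(\eps)$-tester $B$ for \textsc{$D_m$-membership$(n)$} with bounded error $1/3$, whose query complexity is that of $A$ plus an additive $\Oh(\eps^{-2}\log(1/\eps))$ term. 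So the whole argument is a one-line black-box composition, and what remains is only to check that this composition indeed realizes the displayed bound.

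First, for any fixed $\eps>0$ the additive $\Oh(\eps^{-2}\log(1/\eps))$ term is a constant in $n$, whereas the query complexity of $A$ is $\Omega(n^{2/5})$ (since $2/5+\tfrac{4}{5}\delta>2/5>0$); hence the additive term is absorbed into the $\tilde{\Oh}(\cdot)$ and does not affect the asymptotics. Second, Lemma~\ref{lm:red0} only guarantees a $\Theta(\eps)$-tester rather than an $\eps$-tester; to fix the constant one instantiates the construction with an approximation parameter $\eps'=\Theta(\eps)$ chosen small enough that every input which is $\eps$-far from $D_m$ is $\eps'$-far from the language verified by the intermediate consistency check (equivalently, one rescales $\eps$ by the constant hidden in ``$\Theta$''). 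Since the dependence of the query complexity of $A$ on its approximation parameter is polynomial — as exhibited by the explicit $\eps$-exponent — rescaling $\eps$ by a constant factor changes only the constants hidden inside $\tilde{\Oh}(\cdot)$ and leaves both the exponent of $n$ and the exponent of $\eps$ as stated. Finally, bounded error $1/3$ is the standard success guarantee of an $\eps$-tester, and can be boosted to any smaller constant by a constant number of independent repetitions if one wishes.

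I expect no genuine obstacle in this step: all the technical content — the recursive block decomposition, the approximate matching graph of Section~\ref{sec:mg}, the \textsc{Substring $\eps$-matching} subroutine (Theorem~\ref{th:substring_matching}), and the careful tracking of the constants in Definition~\ref{def:eps_match} — has already been done inside Theorem~\ref{thm:D_m_consistency}, and Lemma~\ref{lm:red0} is an off-the-shelf reduction. The only items to verify are the bookkeeping points above: that the error parameters line up ($1/6$ for consistency translates into $1/3$ for membership), that the lower-order additive term is dominated, and that rescaling the distance parameter by a constant does not degrade the exponent of $n$.
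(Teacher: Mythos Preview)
Your proposal is correct and matches the paper's approach exactly: the paper simply states that Theorem~\ref{thm:D_m_consistency} ``immediately implies'' the corollary via Lemma~\ref{lm:red0}, and your bookkeeping (absorbing the additive $\Oh(\eps^{-2}\log(1/\eps))$ term, rescaling $\eps$ by the $\Theta$-constant, and noting that the $1/6$ error becomes $1/3$) is the right way to spell out that implication.
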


\subsection{Inter-block matching}~\label{sec:global}
In this section we give a detailed description and analyse Step 3 of the \textsc{$D_m$-consistency$(n,k)$} algorithm (inter-block matching). We start by running the excess parentheses preprocessing (Lemma~\ref{lm:excess_approx}) and computing the approximate matching graph for the blocks. Next, we select $\eps^{-1} \log n$ blocks uniformly at random and for each find all of its $\Oh(1/\eps)$ neighbours in the approximate matching graph. 

Consider one of the selected blocks, $i$ and its  neighbour $j$. Assume $i < j$, the other case is symmetrical.
Let $[p,q]$ be the smallest interval of indexes in the block $i$ containing all excess opening parentheses that match in the block $j$, and $[r, s]$ the smallest interval of indexes in the block $j$ containing all excess closing parentheses that match in block~$i$. Unfortunately, we cannot compute the precise values of $p, q, r, s$, but we can compute their approximate values $\hat{p}, \hat{q}, \hat{r}, \hat{s}$ (see Figure~\ref{fig:substring_matching}). We run the \textsc{Substring $\eps$-matching} algorithm (Theorem~\ref{th:substring_matching}) on $T[\hat{p}, \hat{q}]$ and $T[\hat{r}, \hat{s}]$. If the algorithm confirms that the substrings $\eps$-match, we accept $i$ and $j$, and otherwise we reject them. The inter-block matching accepts $T$ if and only if all tested block pairs are accepted. 

\begin{figure}[h!]
\begin{center}
\begin{tikzpicture}[scale=0.6]
\draw[|-|] (0,0)--(10,0);
\node at (5,-2) {Block $i$};
\draw (3,-0.5) rectangle (7,0.7);
\node[below] at (3,-0.4) {$\hat{p}$};
\node[below] at (4,-0.5725) {$p$};
\node[below] at (6,-0.5725) {$q$};
\node[below] at (7,-0.4) {$\hat{q}$};

\node at (3.2,0) {\small{(}};
\node at (3.7,0) {\small{(}};
\node at (4.5,0) {\small{(}};
\node at (5,0) {\small{(}};
\node at (5.2,0) {\small{(}};
\node at (5.6,0) {\small{(}};
\node at (6,0) {\small{(}};

\node at (6.4,0) {\textcolor{red}{\small{(}}};
\node at (6.45,.5) {\textcolor{red}{\tiny{\ding{54}}}};
\node at (6.7,0) {\textcolor{red}{\small{(}}};
\node at (6.75,.5) {\textcolor{red}{\tiny{\ding{54}}}};
\node at (7.3,0) {\textcolor{red}{\small{)}}};
\node at (7.25,.5) {\textcolor{red}{\tiny{\ding{54}}}};
\node at (7.6,0) {\textcolor{red}{\small{)}}};
\node at (7.55,.5) {\textcolor{red}{\tiny{\ding{54}}}};

\draw[|-|] (15,0)--(25,0);
\node at (20,-2) {Block $j$};
\draw (17,-0.5) rectangle (24,0.7);
\node[below] at (17,-0.4) {$\hat{r}$};
\node[below] at (19,-0.5725) {$r$};
\node[below] at (22,-0.5725) {$s$};
\node[below] at (24,-0.4) {$\hat{s}$};

\node at (16,0) {\textcolor{red}{\small{(}}};
\node at (16.05,.5) {\textcolor{red}{\tiny{\ding{54}}}};
\node at (16.3,0) {\textcolor{red}{\small{(}}};
\node at (16.35,.5) {\textcolor{red}{\tiny{\ding{54}}}};
\node at (16.7,0) {\textcolor{red}{\small{(}}};
\node at (16.75,.5) {\textcolor{red}{\tiny{\ding{54}}}};
\node at (17.5,0) {\textcolor{red}{\small{)}}};
\node at (17.45,.5) {\textcolor{red}{\tiny{\ding{54}}}};
\node at (18,0) {\textcolor{red}{\small{)}}};
\node at (17.95,.5) {\textcolor{red}{\tiny{\ding{54}}}};
\node at (18.3,0) {\textcolor{red}{\small{)}}};
\node at (18.25,.5) {\textcolor{red}{\tiny{\ding{54}}}};
\node at (18.6,0) {\small{)}};
\node at (19.2,0) {\small{)}};
\node at (20,0) {\small{)}};
\node at (21.5,0) {\small{)}};
\node at (22,0) {\small{)}};
\node at (23,0) {\small{)}};
\node at (23.3,0) {\small{)}};
\node at (23.6,0) {\small{)}};

\draw (10,0)--(11.5,0);
\draw (15,0)--(13.5,0);
\node at (12.5,0) {\ldots};
\node at (12.5,-2) {$T_{i+1,j-1}$};
\end{tikzpicture}
\end{center}
\caption{Blocks $i$ and $j$, and intervals $[\hat{p}, \hat{q}]$ and $[\hat{r}, \hat{s}]$. Black parentheses are excess parentheses of the blocks $i$ and $j$, red parentheses (also marked by crosses \textcolor{red}{\ding{54}}) are excess in $T[\hat{p}, \hat{q}]$ and $T[\hat{r}, \hat{s}]$, but not in the blocks $i,j$ (they will be matched with the red parentheses just outside of the intervals.)}
\label{fig:substring_matching}
\end{figure}

We now explain how we approximate $p, q, r, s$. Recall that $\hat{w}(i,j) > 8 \eps b$ is the approximate weight of the edge $(i,j)$ and $\hat{e}_0 (S)$, $\hat{e}_1 (S)$ are the excess numbers of opening / closing parentheses in a substring $S$ of $T$ computed with precision $\eps b$. We compute $\hat{p}, \hat{q}, \hat{r}, \hat{s}$ in the following way:
\begin{enumerate}
\item Let $\hat{q}$ be the rightmost index in block $i$ such that $\hat{e}_0(T[\hat{q}, ib]) \ge \hat{e}_1(T_{i+1,j-1}) - 2 \eps b$ and $\hat{e}_1 (T[\hat{q}, ib]) \le \eps b$; 
\item Let $\hat{p}$ be the rightmost index in block $i$ such that $\hat{e}_0 (T[\hat{p}, \hat{q}]) \ge \hat{w}(i,j) + 6 \eps b$ (if there is none, we put $\hat{p} = (i-1)b+1$);
\item Let $\hat{r}$ be the leftmost index in block $j$ such that $\hat{e}_1 (T[(j-1)b+1, \hat{r}]) \ge \hat{e}_0(T_{i+1,j-1}) - 2 \eps b$ and $\hat{e}_0 (T[(j-1)b+1,\hat{r}]) \le \eps b$; 
\item Let $\hat{s}$ be the leftmost index in block $j$ such that $\hat{e}_1(T[\hat{r}, \hat{s}]) \ge \hat{w}(i,j) + 6 \eps b$ (if there is none, we put $\hat{s} = jb$).
\end{enumerate}
Since $\hat{w}(i,j) = \hat{\sigma}(i,j) = \min\{\hat{e}_0(T_{i,i}), \hat{e}_1(T_{i+1,j})\} - \hat{e}_1(T_{i+1,j-1}) > 8 \eps b$, indexes $\hat{q}$ and $\hat{r}$ always exist.

\paragraph{Correctness of inter-block matching.} We now show that if $T$ is $D_m$-consistent, inter-block matching will accept it with high probability, and that if $T$ is accepted, then almost all locally excess parentheses of $T$ can be matched correctly. Recall that a parenthesis of $T$ is called locally excess if it is excess for some block of $T$, but not for $T$ itself.

\begin{lemma}
If $T$ is $D_m$-consistent, it is accepted by inter-block matching with probability $> 1 - 1/n$.
\end{lemma}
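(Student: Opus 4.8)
\emph{Plan.} I would first push all the randomness of inter-block matching into a single event and then reduce the lemma to a deterministic claim. Let $\mathcal E$ be the event that the excess parentheses preprocessing of Lemma~\ref{lm:excess_approx}, run with precision $\eps b$ inside Algorithm~\ref{alg:AMG}, returns values $\hat e_0(\cdot),\hat e_1(\cdot)$ that are within $\eps b$ of the true excess numbers of \emph{every} substring of $T$; by Lemma~\ref{lm:excess_approx}, $\prob{\neg\mathcal E}\le 1/n^3$. On $\mathcal E$ each $\hat\sigma(i,j)$ is within $2\eps b$ of $\sigma(i,j)$, so every edge of $\hat G$ has $\sigma(i,j)>6\eps b>0$ and therefore, by Lemma~\ref{cor:weight}, is an edge of the matching graph $G$ with $w(i,j)=\sigma(i,j)$; moreover the planarity argument behind Lemma~\ref{lm:approx_matching_graph} gives, deterministically on $\mathcal E$, that every vertex of $\hat G$ has degree $\Oh(1/\eps)$, so at most $(\eps^{-1}\log n)\cdot\Oh(1/\eps)=\Oh(\eps^{-2}\log n)$ pairs $(i,j)$ are examined and all the indices $\hat p,\hat q,\hat r,\hat s$ are well defined. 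The only remaining randomness is that of those $\Oh(\eps^{-2}\log n)$ calls to \textsc{Substring $\eps$-matching$(b)$}, each of which errs with probability at most $1/n^2$ (Theorem~\ref{th:substring_matching}).

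\emph{Deterministic core.} The claim to prove is: if $T$ is $D_m$-consistent and $\mathcal E$ holds, then for every examined edge $(i,j)\in\hat E$ (say $i<j$; the case $i>j$ is symmetric), the substrings $T[\hat p,\hat q]$ and $T[\hat r,\hat s]$ $\eps$-match in the sense of Definition~\ref{def:eps_match} with $x=b$. Since $T$ is $D_m$-consistent, the unique perfect matching $M$ of the Dyck word $(^{e_1(T)}\,\mu(T)\,)^{e_0(T)}$ pairs parentheses of the same type whenever both endpoints lie in $T$, and $M$ is laminar. I would use laminarity together with Lemma~\ref{cor:weight} to argue that the excess opening parentheses of block $i$ that $M$ matches inside block $j$ form a contiguous sub-block $R$ of the excess openings of block $i$ with $|R|=w(i,j)$, and that $M$ carries $R$, in order, onto a contiguous sub-block $R'$ of the excess closing parentheses of block $j$; each matched pair then has equal type (since $T$ is $D_m$-consistent) and even distance (automatic for matched pairs of a Dyck word). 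By construction $R\subseteq[p,q]$ and $R'\subseteq[r,s]$, and $[p,q]$, $[r,s]$ are exactly the intervals the algorithm approximates.

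\emph{The three estimates.} It then remains to verify, using only the $\eps b$-accuracy of the $\hat e$'s together with the $2\eps b$ and $6\eps b$ slack terms built into the definitions of $\hat q,\hat p$ (and symmetrically $\hat r,\hat s$): (i) every parenthesis of $R$ lies in $[\hat p,\hat q]$ and its $M$-partner lies in $[\hat r,\hat s]$, so these parentheses of $T[\hat p,\hat q]$ are matched sequentially into $T[\hat r,\hat s]$ with the correct type and even distance; (ii) the excess opening parentheses of $T[\hat p,\hat q]$ lying to the \emph{right} of $R$ — exactly the ``local'' openings whose $M$-partner was cut off by the choice of $\hat q$ and hence lies in $(\hat q,ib]$ — number at most $5\eps b$, because $\hat e_1(T[\hat q,ib])\le\eps b$ forces $e_1(T[\hat q,ib])\le 2\eps b$, i.e.\ at most $2\eps b$ closing parentheses of $T[\hat q,ib]$ are matched by $M$ outside $T[\hat q,ib]$; and (iii) the excess opening parentheses of $T[\hat p,\hat q]$ lying to the \emph{left} of $R$ number at most $7\eps b$, which follows from $\hat p$ being the \emph{rightmost} index with $\hat e_0(T[\hat p,\hat q])\ge\hat w(i,j)+6\eps b$. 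Items (i)--(iii) together are precisely the statement that $T[\hat p,\hat q]$ and $T[\hat r,\hat s]$ $\eps$-match. Granting this, on the event $\mathcal E$ every call the algorithm makes to \textsc{Substring $\eps$-matching$(b)$} is on a pair that $\eps$-matches, hence is accepted except with probability $1/n^2$; a union bound over $\neg\mathcal E$ and the $\Oh(\eps^{-2}\log n)$ calls gives that inter-block matching accepts $T$ with probability at least $1-1/n^3-\Oh(\eps^{-2}\log n/n^2)>1-1/n$.

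\emph{Main obstacle.} The only genuine work is the $\Oh(\eps b)$-bookkeeping behind (ii)--(iii): one must track how the several $\eps b$-sized approximation errors and the slack constants $2\eps b$, $6\eps b$ combine to produce exceptional sets of size at most $7\eps b$ on the left and $5\eps b$ on the right of $R$ --- exactly the thresholds in Definition~\ref{def:eps_match}, which is where ``the choice of constants is important''. By contrast, the same-type and even-distance requirements are exact properties of the fixed matching $M$ of the (hypothetically consistent) string $T$ and are untouched by the approximations; one only needs to observe that passing to the windows $[\hat p,\hat q]$ and $[\hat r,\hat s]$ does not disturb them.
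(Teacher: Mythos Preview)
Your overall plan matches the paper's: condition on the excess-number preprocessing succeeding, show deterministically that each examined pair $(T[\hat p,\hat q],T[\hat r,\hat s])$ $\eps$-matches, and then union-bound over the failure of preprocessing and of the $\Oh(\eps^{-2}\log n)$ calls to \textsc{Substring $\eps$-matching}. The three-item decomposition (i)--(iii) also tracks the paper's structure.

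There is, however, a concrete error in your sketch of (ii). You assert that the excess opening parentheses of $T[\hat p,\hat q]$ lying to the right of $R$ are ``exactly the `local' openings whose $M$-partner was cut off by the choice of $\hat q$ and hence lies in $(\hat q,ib]$'', and you bound those by $2\eps b$ using only the clause $\hat e_1(T[\hat q,ib])\le\eps b$. This misses a second source: excess opening parentheses of \emph{block $i$ itself} that lie in $(q,\hat q]$ and whose $M$-partner lies in $T_{i+1,j-1}$, not in $(\hat q,ib]$. Bounding these by $3\eps b$ requires the \emph{other} defining clause on $\hat q$, namely $\hat e_0(T[\hat q,ib])\ge\hat e_1(T_{i+1,j-1})-2\eps b$, which your sketch never invokes; the paper uses precisely this to conclude that $T[\hat q,ib]$ already contains at least $e_1(T_{i+1,j-1})-3\eps b$ excess openings, so at most $3\eps b$ of the block-$i$ excess openings destined for $T_{i+1,j-1}$ remain in $T[\hat p,\hat q]$. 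Adding the two contributions ($3\eps b+2\eps b$) gives the required $5\eps b$. Relatedly, your claim (i) that $R\subseteq[\hat p,\hat q]$ is not automatic and also hinges on that first clause (nothing a priori forbids $\hat q<q$); the paper sidesteps this by comparing the excess openings of $T[\hat p,\hat q]$ directly to those of $T[p,q]$ rather than asserting interval containment.
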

\begin{proof}
We will show that if $T$ is $D_m$-consistent, then for every pair of  neighbours $i,j$ in $\hat{G}$ the substrings $T[\hat{p}, \hat{q}]$ and $T[\hat{r}, \hat{s}]$ defined as above $\eps$-match. From Theorem~\ref{th:substring_matching} and the union bound it will immediately follow that $T$ is accepted with probability $> 1 - 1/n$. 

We need to show that all but at most $7 \eps b$ leftmost and at most $5 \eps b$ rightmost excess opening parentheses in $T[\hat{p}, \hat{q}]$ can be sequentially matched in $T[\hat{r}, \hat{s}]$. By definition, all excess parentheses of $T[p,q]$ can be sequentially matched in $T[r,s]$. $T[\hat{r}, \hat{s}]$ contains $T[r,s]$ as a subinterval, and therefore it suffices to show that $T[\hat{p}, \hat{q}]$ has at most $5 \eps b$ extra excess opening parentheses on the right and at most $7 \eps b$ extra excess opening parentheses on the left compared to $T[p,q]$. 

If $T$ is $D_m$-consistent, all excess closing parentheses in $T_{i+1,j-1}$ must match in $T[q+1, ib]$. Therefore, $T[q, ib]$ must contain $e_1 (T_{i+1,j-1})$ excess opening parentheses. From the definition it follows that $T[\hat{q}, ib]$ contains at least $e_1 (T_{i+1,j-1}) - 3 \eps b$ parentheses. It means that $T[\hat{p}, \hat{q}]$ has at most $3 \eps b$ extra excess opening parentheses on the right that must be matched in $T_{i+1,j-1}$. There also can be at most $2\eps b$ extra excess opening parentheses that were not excess in the block $i$ but became excess in $T[\hat{p}, \hat{q}]$ (see Figure~\ref{fig:substring_matching}). In total, there will be at most $5 \eps b$ extra excess opening parentheses on the right.

Consider now the rightmost excess opening parenthesis of $T[\hat{p}, \hat{q}]$ that matches in block $j$. Starting from it, there must be $w(i,j)$ more excess opening parentheses that also match in block $j$. We defined $\hat{p}$ to be the rightmost index in block $i$ such that $\hat{e}_0 (T[\hat{p}, \hat{q}]) \ge \hat{w}(i,j) + 6 \eps b$. It follows that $\hat{e}_0 (T[\hat{p}+1, \hat{q}]) < \hat{w}(i,j) + 6 \eps b < w(i,j) + 7 \eps b$ or that $e_0 (T[\hat{p}, \hat{q}]) \le w(i,j) + 7 \eps b$, which concludes the proof.
\end{proof}

\begin{lemma}\label{lm:global_tests_accepts}
If $T$ is accepted with probability $> 1/n$, then there is a matching on its locally excess parentheses such that: (a) Any two matched parentheses have the same type and the distance between them is even; (b) There  are at most $\Oh(\eps n)$ unmatched locally excess parentheses.
\end{lemma}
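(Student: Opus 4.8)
The plan is to read a purely combinatorial object off a single favourable run of the tester and then build the required matching deterministically.

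First I would fix a good run. By Lemma~\ref{lm:excess_approx} (invoked through Algorithm~\ref{alg:AMG}), all excess approximations used to form $\hat{G}$ and the indices $\hat{p},\hat{q},\hat{r},\hat{s}$ are within $\eps b$ of the true values, and by Lemmas~\ref{cor:weight} and~\ref{lm:approx_matching_graph} this makes $\hat{G}$ a subgraph of the matching graph $G$ with all degrees $\Oh(1/\eps)$ and $\hat{w}(i,j)=w(i,j)\pm\Oh(\eps b)$ on every edge; by Theorem~\ref{th:substring_matching}, all $\Oh(\eps^{-2}\log n)$ calls to \textsc{Substring $\eps$-matching} answer correctly. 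Call a block $i$ \emph{bad} if some edge of $\hat{G}$ incident to $i$ lies in the reject-regime of \textsc{Substring $\eps$-matching}, i.e.\ at most $e_0(S_1)-30\eps b$ of the excess opening parentheses of the associated $S_1$ admit a sequential matching into the associated $S_2$ (with the index order along the edge deciding which endpoint plays the role of $S_1$). If a bad block is among the $\eps^{-1}\log n$ blocks sampled during inter-block matching then, in any run where the relevant \textsc{Substring $\eps$-matching} call is correct, the tester rejects. Since a uniform sample misses the set $B$ of bad blocks with probability $o(1/n)$ whenever $|B|>\eps n/b$, and each of the events above fails with even smaller probability, the hypothesis that $T$ is accepted with probability $>1/n$ forces a run in which all those events hold, the tester accepts, and $|B|\le\eps n/b$. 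Fix such a run; from now on everything is deterministic.

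Next I would build the matching. For each edge $(i,j)\in\hat{E}$ with $i<j$ and at least one non-bad endpoint --- equivalently, every edge of $\hat{G}$ except those with both endpoints in $B$ --- the edge is not in the reject-regime, so some continuous range of at least $e_0(S_1)-30\eps b$ excess opening parentheses of $S_1=T[\hat{p},\hat{q}]$ admits a sequential matching into $S_2=T[\hat{r},\hat{s}]$, and by Definition~\ref{def:eps_match} every pair so produced has the same type and even distance. From the defining inequalities for $\hat{p},\hat{q},\hat{r},\hat{s}$ --- arguing as in the lemma preceding this one, but now without assuming $D_m$-consistency --- I would show that $e_0(S_1)=w(i,j)+\Oh(\eps b)$, that $[\hat{p},\hat{q}]$ contains all but $\Oh(\eps b)$ of the genuine block-$i$-excess opening parentheses that truly match in block $j$ while its remaining excess opening parentheses (the ``red'' ones of Figure~\ref{fig:substring_matching}, and parentheses matching into a block other than $j$) number only $\Oh(\eps b)$, and the symmetric statements for $[\hat{r},\hat{s}]$. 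Restricting the sequential matching to genuine block-$i$-excess openers matching in block $j$ paired with genuine block-$j$-excess closing partners discards only $\Oh(\eps b)$ pairs and leaves a matching $M_{ij}$ of size at least $w(i,j)-\Oh(\eps b)$. Because the unique perfect matching of the balanced string $T'=(^{k}\mu(T))^{\ell}$ is non-crossing, for each block $i$ the genuine block-$i$-excess openers matching into distinct blocks occupy disjoint subintervals of block $i$, and likewise on the closing side; hence the $M_{ij}$ are pairwise vertex-disjoint, so $M:=\bigcup_{i<j} M_{ij}$ is a legitimate matching on locally excess parentheses, satisfying~(a).

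Finally I would bound $L-|M|$, where $L$ is the number of locally excess parentheses. Each locally excess parenthesis is matched in $T'$ to another one in a different block, so $L=2\sum_{(i,j)\in E,\ i<j}w(i,j)$. The locally excess parentheses not covered by $M$ lie on: edges of $E\setminus\hat{E}$, of total weight $\Oh(\eps n)$ by the estimate $\sum_{(i,j)\in\hat{E},\ i<j}\hat{w}(i,j)\ge\sum_{(i,j)\in E,\ i<j}w(i,j)-\Oh(\eps n)$ from the proof of Lemma~\ref{lm:total_weight_approx_matching_graph} together with $\hat{w}(i,j)=w(i,j)\pm\Oh(\eps b)$; edges of $\hat{G}$ with both endpoints in $B$, of total weight $\Oh(\eps n)$ since $G$ is planar (Remark~\ref{rm:matching_graph_planarity}), so the subgraph induced on $B$ has $\Oh(|B|)=\Oh(\eps n/b)$ edges, each of weight at most $b$; and the $\Oh(\eps b)$ discarded per used edge, summing to $\Oh(\eps b)\cdot\Oh(n/b)=\Oh(\eps n)$ over the $\Oh(n/b)$ edges of $\hat{G}$. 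Adding these up gives $L-|M|=\Oh(\eps n)$, which is~(b). The step I expect to be the real obstacle is the construction of $M$: turning the approximate intervals $[\hat{p},\hat{q}],[\hat{r},\hat{s}]$ and the partial sequential-matching guarantee into a matching between \emph{genuine} block-excess parentheses losing only $\Oh(\eps b)$ per edge, and making the various $M_{ij}$ simultaneously vertex-disjoint, which forces one to combine the $\Oh(\eps b)$ error estimates implied by the definitions of $\hat{p},\hat{q},\hat{r},\hat{s}$ with the non-crossing structure of $T'$ and to handle the boundary cases $\hat{p}=(i-1)b+1$ and $\hat{s}=jb$ by hand; the probability bookkeeping of the first step is routine.
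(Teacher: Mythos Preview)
Your overall architecture matches the paper's: bound the set of ``bad'' blocks using the acceptance hypothesis, extract from every surviving edge of $\hat{G}$ a partial sequential matching of size $w(i,j)-\Oh(\eps b)$, combine these, and finish with Lemma~\ref{lm:total_weight_approx_matching_graph}. The one place where you genuinely diverge is the combination step. The paper does \emph{not} make the edge-matchings disjoint; instead it keeps the raw ranges $\pi_1,\pi_2\subseteq$ block $j$ produced for two incident edges $(i_1,j),(i_2,j)$ and proves directly from the defining inequalities of $\hat r,\hat s$ that $|\pi_1\cap\pi_2|\le 12\eps b$, then sums this over the $\Oh(n/b)$ edges of the planar graph. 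Your route---restrict each $M_{ij}$ to pairs whose opener is a genuine block-$i$ excess parenthesis truly matching in block $j$ and whose closer is a genuine block-$j$ excess parenthesis truly matching in block $i$, then invoke non-crossing of the $D_1$-matching of $(^{k}\mu(T))^{\ell}$ to get disjointness for free---is cleaner conceptually and avoids the explicit $12\eps b$ calculation, at the price of needing the two-sided estimate that both $S_1$ and $S_2$ contain only $\Oh(\eps b)$ ``non-genuine'' excess parentheses. That estimate is exactly the content of the $\hat p,\hat q,\hat r,\hat s$ inequalities you flag as the real obstacle, and it does go through without $D_m$-consistency since those arguments use only the non-crossing structure of $\mu(T)$, not the types.

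One quantitative wrinkle worth tightening (it is equally loose in the paper): with $\eps^{-1}\log n$ sampled blocks and $|B|>\eps n/b$, the miss probability is $(1-\eps)^{\eps^{-1}\log n}\approx 1/n$, not $o(1/n)$. Against a hypothesis of acceptance probability merely $>1/n$, this does not yield a contradiction on the nose; you need either a constant-factor slack in the threshold (e.g.\ conclude $|B|\le 2\eps\cdot n/b$) or to absorb an extra constant into the $\eps^{-1}\log n$ sample count. This is cosmetic but should be stated. Also, be careful that $\hat G$, the indices $\hat p,\hat q,\hat r,\hat s$, and hence $B$ are functions of the specific preprocessing outcome, not just of the event ``approximations within $\eps b$''; your existence argument should fix a single preprocessing outcome $\omega$ in the correct event and argue about $B(\omega)$.
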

\begin{proof}
By Lemma~\ref{lm:excess_approx}, excess preprocessing is correct for all substrings of $T$ with probability $>1 - 1/2n$. We can therefore assume that both assumptions ($T$ is accepted, excess preprocessing is correct) hold with probability $> 1/2n$. 

The fraction of blocks $i$, which have a neighbour $j$ in $\hat{G}$ such that the substring matching test accepts with probability $< 1/n^2$ if executed on $i$, $j$, is at most $\eps$ (otherwise we would select one of them with probability $\ge 1 - 1/n^2$). Let $\mathcal{R}$ be a set of all such blocks. We thus obtain that 
$$\sum_{(i,j) \in \hat{E}: i \neq j, i \in \mathcal{R}} \hat{\sigma}(i,j) = \Oh(\eps n)$$
Consider now any three blocks $i_1 < i_2 < j$ (the case $j < i_1 < i_2$ is analogous.) such that both $i_1$ and $i_2$ are  neighbours of $j$ in $\hat{G}$. Suppose that both $i_1, j$ and $i_2, j$ are accepted by the \textsc{Substring $\eps$-matching} algorithm with probability $> 1/n^2$. Theorem~\ref{th:substring_matching} implies that there is a subsequence of locally excess opening parentheses of block $i_1$ of length $\ge \sigma(i_1,j) - 30 \eps b$ that can be matched with a subsequence $\pi_1$ of locally excess closing parentheses of block $j$, and a subsequence of locally excess opening parentheses of block $i_2$ $\ge \sigma(i_1,j) - 30 \eps b$ that can be matched with a subsequence $\pi_2$ of locally excess closing parentheses of block $j$. The matchings satisfy property (a); moreover, assuming that the excess numbers of all substrings of $T$ were computed correctly with precision~$\eps b$, the subsequences $\pi_1$ and $\pi_2$ overlap by $\le 12 \eps b$ excess parentheses. The latter follows from the definition of $\hat{p}, \hat{q}, \hat{r}, \hat{s}$. Indeed, the number of excess closing parentheses between $(j-1)b$ and the last parenthesis of $\pi_2$ is at most $\hat{e}_0 (T_{i_2+1,j-1}) +\hat{w}(i_2,j) + 7 \eps b \le e_0(T_{i_2+1,j-1})+\sigma (i_2,j)+ 9 \cdot  \eps b$, where $\sigma (i_2,j)=\min\{e_0(T_{i_2,i_2}), e_1(T_{i_2+1,j})\} - e_1(T_{i_2+1,j-1}) \le e_0(T_{i_2,i_2}) - e_1(T_{i_2+1,j-1})$. On the other hand, the number of excess closing parentheses between $(j-1)b$ and the first parenthesis of $\pi_1$ is at least $e_0(T_{i_1,j-1})-3\eps b$. Note that $e_0(T_{i_1,j-1}) \ge e_0(T_{i_2,i_2}) - e_1(T_{i_2+1,j-1}) + e_0(T_{i_2+1,j-1})$. Therefore, the two subsequences overlap by at most $12 \eps b$ excess parentheses. 

From above it follows that the total number of locally excess parentheses in the matched subsequences is $\sum_{(i,j) \in \hat{E}: i < j} \hat{\sigma}(i,j) - \Oh(\eps n)$.  Moreover, each two subsequences overlap by at most $12 \eps b$ parentheses. Since by Lemma~\ref{lm:approx_matching_graph} $\hat{G}$ is a subgraph of $G$, that is planar and therefore has at most $3 n / b$ edges, the total lengths of overlaps is $\Oh(\eps n)$. Therefore, we will be able to match at least $\sum_{(i,j) \in \hat{E}: i < j} \hat{\sigma}(i,j) - \Oh(\eps n)$ locally excess parentheses. The claim follows from Lemma~\ref{lm:total_weight_approx_matching_graph}.
\end{proof}

\paragraph{Complexity of inter-block matching.} To compute the approximate matching graph we need $\tilde{\Oh}(\eps^{-2} n^{2/5})$ queries. The substring matching test is called $\tilde{\Oh}(\eps^{-2} \log n)$ times, and takes $\tilde{\Oh}(\eps^{(2 \delta)^{1/\log 3/4 - 2} + 4} b^{1/2+\delta})$ queries for a fixed constant $0 < \delta < 1/2$. Since $b = n^{4/5}$, we finally obtain that inter-block matching can be implemented to have complexity $\tilde{\Oh}(\eps^{(2 \delta)^{1/\log 3/4 - 2} + 2} n^{2/5+\nicefrac{4}{5} \cdot \delta})$ for any $0 < \delta < 1/2$.

\subsection{Recursion (proof of Theorem~\ref{thm:D_m_consistency})}\label{sec:D_m_consistency}
We are now ready to prove Theorem~\ref{thm:D_m_consistency} that claims that \textsc{$D_m$-consistency$(n,5)$} test is an $\eps$-tester for \textsc{$D_m$-consistency$(n)$} with complexity $\tilde{\Oh}(\eps^{(2 \delta)^{1/\log 3/4 - 2} + 2} n^{2/5+\nicefrac{4}{5} \cdot \delta})$. 

We start by analysing the complexity of the test. The pseudocode of the test (see Algorithm~\ref{alg:C_m}) directly implies that if \textsc{$D_m$-consistency$(n,k-1)$} test has query complexity $f(n, \eps)$, then \textsc{$D_m$-consistency$(n,k)$} test has query complexity $\tilde{\Oh}(\eps^{-1} f(n^{4/5}, \eps) +\eps^{(2 \delta)^{1/\log 3/4 - 2} + 2} n^{2/5+\nicefrac{4}{5} \cdot \delta})$, where $0 < \delta < 1/2$ is a constant in the complexity of \textsc{Substring $\eps$-matching} (Theorem~\ref{th:substring_matching}). Recall that for the base case $k = 0$, the \textsc{$D_m$-consistency$(n,0)$} test is the trivial deterministic $0$-tester for $D_m$-consistency with query complexity $f(n,\eps) = n$. Therefore after applying the recursive step five times, we obtain a test with complexity $\tilde{\Oh}(\eps^{(2 \delta)^{1/\log 3/4 - 2} + 2} n^{2/5+\nicefrac{4}{5} \cdot \delta})$.  

\paragraph{Correctness.}
We now show that the \textsc{$D_m$-consistency$(n,5)$} test is an $\eps$-tester for \textsc{$D_m$-consistency$(n)$} with bounded error $1/6$. By the definition, we need to show that if $T$ is $D_m$-consistent, the test will accept it with probability $> 1-1/6$, and if $T$ is accepted with probability $> 1/6$, $T$ is $\mathbf{C} \cdot \eps$-close to $D_m$-consistent for some constant $\mathbf{C} > 0$.

We start with the first part of the claim. Suppose that if $T$ is $D_m$-consistent, then it is accepted by the \textsc{$D_m$-consistency$(n,k-1)$} test with probability $> 1-\alpha$. By the union bound and Lemma~\ref{lm:matching_T_in_D2} it then follows that \textsc{$D_m$-consistency$(n,k)$} test will accept $T$ with probability $> 1-\alpha^2 \cdot\eps^{-1}\log n - 1/n$ (the test can err either in the inter-block matching, or in one of the $\eps^{-1} \log n$ calls to \textsc{$D_m$-consistency$(n,k-1)$}). Since for $k=0$ the error probability $\alpha = 0$, we obtain that the error probability for $k=5$ is less than $1/6$.

We now show the second part of the claim by induction on $k$. Suppose that the following is true: If the \textsc{$D_m$-consistency$(n,k-1)$} test accepts $T$ with probability $> 1/6$, then it is $\mathbf{C}_{k-1} \cdot \eps$-close to $D_m$-consistent for some constant $\mathbf{C}_{k-1} > 0$. We will now show that if the \textsc{$D_m$-consistency$(n,k)$} test accepts $T$ with probability $> 1/6$, then it is $\mathbf{C}_{k} \cdot \eps$-close to $D_m$-consistent for some constant $\mathbf{C}_k > 0$. This will conclude the proof of Theorem~\ref{thm:D_m_consistency}. We do this in three steps as described below: First, we show how to make all the blocks $D_m$-consistent, secondly, we show that we can match almost all locally excess parentheses in the resulting string, and finally, we show how to modify the remaining locally excess parentheses so that we can match them as well.

\paragraph*{Step 1 - Making all blocks $D_m$-consistent.} Since the \textsc{$D_m$-consistency$(n,k)$} test accepts $T$ with probability $> 1/6$, at most $\eps$-fraction of the blocks can be accepted by the  \textsc{$D_m$-consistency$(n,k-1)$} test with probability $\le 1/6$. Indeed, if there were more than $\eps$-fraction of such blocks, one of them would be selected with probability $> 1-1/n$. Recall that we run the \textsc{$D_m$-consistency$(n,k-1)$} test on this block twice (Algorithm~\ref{alg:C_m}, Step 4(b)) . It means that it will be accepted by the test with probability $\le 1/36$. By the union bound we obtain that in this case $T$ would be accepted with probability $< 1/6$, a contradiction. It follows that at least a $(1-\eps)$-fraction of the blocks would be accepted by the \textsc{$D_m$-consistency$(n,k-1)$} test with probability $> 1/6$, and by our assumption they are $\mathbf{C}_{k-1} \cdot \eps$-close to $D_m$-consistent.

We now explain how we modify the blocks to make them $D_m$-consistent. First we consider all blocks that are $\mathbf{C}_{k-1} \cdot \eps$-far from $D_m$-consistent. For every such block $B$, there is a unique perfect matching on the non-excess parentheses. If a pair of matched non-excess parentheses have different types, we modify one of them accordingly. Note that this procedure does not change the set of excess parentheses of such blocks. In total, we modify at most $\eps n / 2$ parentheses (up to $b/2$ in each such block). We now consider the blocks that are $\mathbf{C}_{k-1} \cdot \eps$-close to $D_m$-consistent. By definition we can make each such block $D_m$-consistent by modifying $\le \mathbf{C}_{k-1} \cdot \eps b$ parentheses in it. In total, we modify at most $\mathbf{C}_{k-1} \cdot \eps n$ parentheses. We denote the resulting string by $T'$. From Equation~\ref{eq:excess} we immediately obtain the following observation, that will be important for further analysis.

\begin{observation}
After modifying $\le \mathbf{C}_{k-1} \cdot \eps b$ characters of a block, the set of excess opening/closing parentheses in it can change by at most $2 \mathbf{C}_{k-1} \cdot \eps b$ parentheses. 
\end{observation}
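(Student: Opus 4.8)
The plan is to derive the Observation from Equation~\eqref{eq:excess} by first analysing the effect of a single character change and then iterating. Let $t\le\mathbf{C}_{k-1}\cdot\eps b$ be the number of modified characters of the block $S$, and recall that the symmetric difference of finite sets obeys the triangle inequality $|A\triangle C|\le|A\triangle B|+|B\triangle C|$. It therefore suffices to prove the one-step statement: changing a single character of $S$ alters the set of its excess opening parentheses by at most $2$ positions, and likewise alters the set of its excess closing parentheses by at most $2$ positions. Walking along a path of $t$ single-character edits from the original block to the modified one then gives a change of at most $2t\le 2\mathbf{C}_{k-1}\eps b$ for each of the two sets, which is exactly the claim.

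For the one-step statement there are two cases. If the modification only changes the type of a parenthesis and not its direction, then $\mu(S)$ is unchanged, and hence by definition neither the set of excess opening parentheses nor that of excess closing parentheses changes. Otherwise the modification flips the direction of a parenthesis; by the left-right symmetry of Equation~\eqref{eq:excess} it is enough to treat the case where an opening parenthesis at position $j$ is turned into a closing one. For a string $S$ set $f(i)=n_1(S[1..i])-n_0(S[1..i])$ for $0\le i\le|S|$ (including the empty prefix, $i=0$) and $F(i)=\max_{0\le i'\le i}f(i')$; by Equation~\eqref{eq:excess}, $F(i)=e_1(S[1..i])$, so $e_1(S)=F(|S|)$ and a closing parenthesis at position $i$ is excess in $S$ exactly when $e_1(S[1..i])>e_1(S[1..i-1])$, i.e.\ exactly when the non-decreasing step function $F$ strictly increases at $i$ (and it increases by exactly $1$ at each such position). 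The mirror statement, with suffixes in place of prefixes, characterises the excess opening parentheses of $S$.

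Let $S'$ be the string after turning $S[j]$ from opening to closing, with associated functions $f',F'$. Then $f'(i)=f(i)$ for $i<j$ and $f'(i)=f(i)+2$ for $i\ge j$, hence $F'=F$ on $\{0,\dots,j-1\}$, while for $i\ge j$ one has $F(i)=\max\bigl(F(j-1),\,G(i)\bigr)$ and $F'(i)=\max\bigl(F(j-1),\,G(i)+2\bigr)$, where $G(i)=\max_{j\le i'\le i}f(i')$. The function $G$ increases by exactly $1$ at each of its increase positions, and $G$ and $G+2$ have the same increase positions; a short case check shows that the positions in $\{j,\dots,|S|\}$ at which $F$ (respectively $F'$) strictly increases are precisely the increase positions $i$ of $G$ with $G(i)>F(j-1)$ (respectively $G(i)>F(j-1)-2$). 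These form a suffix of the sequence of increase positions of $G$, along which $G$ takes consecutive integer values, so lowering the threshold by $2$ extends this suffix by at most $2$ positions. Combining this with $F'=F$ on $\{0,\dots,j-1\}$ shows that the set of excess closing parentheses of $S'$ contains that of $S$ and exceeds it by at most $2$ positions. The mirror argument applied to suffixes shows that this same flip can only remove excess opening parentheses, and at most $2$ of them, and the opposite flip $)\mapsto($ is symmetric. This proves the one-step statement, and hence the Observation.

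The only delicate point is the single-character analysis in the last paragraph: the boundary bookkeeping at position $j$, and the verification that raising one arm of the prefix walk $f$ by $2$ alters the set of running-maximum increase positions --- equivalently, by the characterisation above, the set of excess parentheses --- by at most $2$. Everything else is a direct reading of Equation~\eqref{eq:excess} together with the triangle inequality for symmetric difference.
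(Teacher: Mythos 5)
Your proof is correct, and it supplies real content that the paper omits: the paper's ``proof'' is the single sentence ``From Equation~(1) we immediately obtain the following observation,'' with no further argument. Your route --- walking from the original block to the modified one via single-character edits, using the triangle inequality for symmetric difference, and then analysing one flip through the running-maximum characterisation of excess positions coming out of Equation~(1) --- is the natural way to make the claim rigorous, and your per-flip bound of at most two positions for each of the two sets is tight. A couple of small points: the characterisation ``a closing parenthesis at $i$ is excess iff $F$ strictly increases at $i$'' deserves a one-line justification (it is the stack-emptiness condition $f(i-1)=F(i-1)$); and when you speak of the ``increase positions $i$ of $G$'' over $\{j,\dots,|S|\}$, the boundary index $i=j$ should be counted as an increase position (with $G$ regarded as starting at $j$), since the newly flipped character at $j$ may itself become a new excess closing parenthesis in $S'$ --- your threshold computation handles this correctly under that convention. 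Finally, it is worth noting why the single-edit reduction is genuinely necessary and the Observation is not as immediate as the paper's phrasing suggests: modifying $t$ characters trivially changes each value $F(i)=e_1(S[1..i])$ by at most $2t$, but a pointwise bound on $|F-F'|$ does \emph{not} by itself bound the symmetric difference of the sets of increase positions (two step functions at pointwise distance $1$ can have disjoint jump sets), so one really does need to reason one flip at a time as you do.
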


We finally obtain that the sets of excess parentheses of $T$ and $T'$ differ by at most $2 \mathbf{C}_{k-1} \cdot \eps n$ parentheses.

\paragraph*{Step 2 - Partial matching of locally excess parentheses.} We now build a matching on the locally excess parentheses of $T'$. We first consider the initial non-modified string $T$. The inter-block matching test must accept $T$ with probability $> 1/6$ and therefore by Lemma~\ref{lm:matching_T_in_D2} there is a matching on locally excess parentheses of $T$ such that: (a) Any two matched parentheses have the same type and the distance between them is even; (b) There are at most $\Oh(\eps n)$ unmatched locally excess parentheses. We now consider an induced matching on excess parentheses of $T'$. Namely, we match two locally excess parentheses of $T'$ if they are matched in $T$ and if they were not modified during the first step. From Lemma~\ref{lm:global_tests_accepts} it follows that $T'$ will contain at most $\Oh(\eps n)$ non-matched locally excess parentheses.

\paragraph*{Step 3 - Modifying non-matched excess parentheses.} The string $T'$ is now composed of four types of consecutive substrings: (a) Substrings that belong to $D_m$; (b) Locally excess parentheses in one block that are matched with locally excess parentheses in another block; (c) At most $\Oh(\eps n)$ locally excess parentheses that are not matched (see Step 2); and (d) At most $\Oh(\eps n)$ excess parentheses of $T'$ that are not excess parentheses of $T$ (see Step 1); (e) Excess parentheses of $T$. 

Let $T''$ be the string obtained from $T'$ by removing all substrings of type (a). Note that by removing such substrings we do not change the parity of the distance between any two matched excess parentheses. We show how to modify $T''$ in a recursive way. Let $t', t''$ be two matched substrings of excess parentheses such that between them there is only one substring $\tau$ of parentheses of types (c) or (d). It follows that $S'' = p' t' \tau t'' p''$ for some strings $p'$ and $p''$. The length of $\tau$ is even. We replace it with an arbitrary string in $D_m$, and then remove $t' \tau t''$ from $S''$ and continue recursively. In the end we will obtain a set of excess parentheses of $T$. This concludes the proof of Theorem~\ref{thm:D_m_consistency}.

\section{Algorithm for \textsc{Substring $\eps$-matching}}
\label{sec:substring_matching}
In this section we show an algorithm for $\textsc{Substring $\eps$-matching}$ with bounded error $1/3$ and query complexity $\tilde{\Oh}(\eps^{2\delta^{1/{\log 3/4 - 2}}+4} x^{1/2+\delta})$. Theorem~\ref{th:substring_matching} will immediately follow, as we can repeat the algorithm a logarithmic number of times to boost the probability. 

\subsection{Algorithm for $\textsc{Substring $\eps$-matching}$ with bounded error $1/3$}
Recall that the algorithm receives as an input two substrings $S_1, S_2$ of $T$ of length at most $x$, and must accept $S_1, S_2$ if they $\eps$-match, and reject if at most $e_0 (S_1) - 30 \eps x$ excess opening parentheses of $S_1$ can be sequentially matched in $S_2$. The algorithm consists of three recursive procedures: Procedures \textsc{QueryLeft$(x, \eps, k)$} and \textsc{QueryRight$(x, \eps, k)$} query a subset of characters of strings $S_1$ and $S_2$, and the third procedure, \textsc{MakeDecision$(x, \eps, k)$} accepts or rejects $(S_1, S_2)$ using the queried characters. We give the pseudocode of our solution in Algorithm~\ref{alg:substring_matching}.

\begin{algorithm}
\caption{\textsc{Substring $\eps$-matching}}
\label{alg:substring_matching}
Input: Two substrings $S_1$, $S_2$ of a string $T$ of length $\le x$
\begin{enumerate}
\item $k := \lceil \log_{3/4} 2\delta\rceil$
\item Run \textsc{QueryLeft$(x, \eps, k)$} for $S_1$
\item Run \textsc{QueryRight$(x, \eps, k)$} for $S_2$
\item \textsc{MakeDecision$(x, \eps, k)$}
\end{enumerate}
\end{algorithm} 

We now describe each procedure in turn.

\paragraph{Procedures \textsc{QueryLeft$(x, \eps, k)$} and \textsc{QueryRight$(x, \eps, k)$}.}
Let $x' = x^{3/4}$ and $\eps' = \eps / 30$. 
Procedure \textsc{QueryLeft$(x, \eps, k)$} starts by running the excess numbers preprocessing on $S_1$ for precision $(\eps')^2 x'$. Next, it partitions $S_1$ into non-overlapping blocks starting from the right. If the approximate number of excess parentheses in $S_1$ is less than $10 \eps' x'$, the partitioning of $S_1$ is defined to contain a single block equal to $S_1$ itself. Otherwise, it must satisfy the following two properties: (1) The approximate number of excess opening parentheses in the $m$ rightmost blocks of $S_1$ is between $(10  m - \eps') \cdot \eps' x'$ and $(10  m + \eps') \cdot \eps' x'$; and (2) The approximate number of excess closing parentheses in these blocks is at most $(\eps')^2 x'$. (Note that such a partitioning always exists because, for example, we can choose the $m$-th block to be the substring bounded by the $10(m-1)$-th and $10 m$-th excess parentheses. The procedure might choose another partitioning, but this shows that it will have at least one possible choice.) We call blocks of length $\le x'$ \emph{dense}.  Finally, the procedure deletes the leftmost and the rightmost blocks of $S_1$. 

\begin{procedure}
\caption{\textsc{QueryLeft$(x, \eps, k)$}}
\label{alg:substring_matching_QL}
Input: Substring $S_1$ of $T$ of length $\le x$\\
Output: Partitioning of $S_1$, a sequence $\mathcal{L} = \{\mathcal{L}_t\}$ of subsets of dense blocks, queried characters
\begin{enumerate}
\item If $k = 0$, query all characters of $S_1$
\item $x' := x^{3/4}$, $\eps' := \eps / 30$
\item Run excess numbers preprocessing for $S_1$ with precision $(\eps')^2 x'$
\item Partition $S_1$ into blocks containing approximately $10 \eps' x'$ excess opening parentheses, and then delete the leftmost and the rightmost blocks
\item Select a sequence $\mathcal{L} = \{\mathcal{L}_t\}$ of $\tilde{\Theta} (\eps^{-1})$ random subsets  of dense blocks of $S_1$ of size $\tilde{\Theta}((\eps')^{-1} \sqrt{x / x'})$
\item Run \textsc{QueryLeft$(x', 2 (\eps')^2, k-1)$} for each selected block $\Theta (\log (x \cdot \eps^{-1} \log x) )$ times 
\end{enumerate}
\end{procedure} 

Let $\mathbf{A}$, $\mathbf{B}$, and $\mathbf{C}$ be positive constants to be defined later. The procedure selects $\mathbf{A} \cdot \eps^{-1} \log x$ random subsets $\mathcal{L}_t$ of dense blocks of $S_1$, where each subset has size $\mathbf{B} \cdot (\eps')^{-1} \sqrt{x / x'} \log (x \cdot \eps^{-1} \log x)$. For each selected dense block $B$ the procedure runs \textsc{QueryLeft$(x', 2 (\eps')^2, k-1)$} independently $\mathbf{C} \cdot \log (x \cdot \eps^{-1} \log x)$ times over $B$. The pseudocode is given in Procedure~\ref{alg:substring_matching_QL}.

Similar to above, procedure \textsc{QueryRight$(x, \eps, k)$} starts by running the excess numbers preprocessing on $S_2$ for precision $(\eps')^2 x'$. For technical reasons that will become clear later, we consider not just one partitioning of $S_2$, but a number of them. Namely, we consider a separate partitioning for each shift $\tau = (\eps')^2 x', 2 (\eps')^2 x',\ldots, 12\cdot \eps x'$ (in total, we have $12 \eps / (\eps')^2 = 12 \cdot 30 / \eps'$ shifts). For each $m \ge 0$ the approximate number of excess closing parentheses in the $(m+1)$ leftmost blocks of the partitioning must be between $\tau + (10 m - \eps') \cdot \eps' x'$ and $\tau + (10m + \eps') \cdot \eps' x'$, and the approximate number of excess opening parentheses must be at most $(\eps')^2 x'$. 

\begin{procedure}
\caption{\textsc{QueryRight$(x, \eps, k)$}}
\label{alg:substring_matching_QR}
Input: Substring $S_2$ of $T$ of length $\le x$\\
Output: Partitionings of $S_2$, sequences $\mathcal{R}(\tau)  = \{\mathcal{R}_t (\tau)\}$ of subsets of dense blocks, queried characters
\begin{enumerate}
\item If $k = 0$, query all characters of $S_2$
\item $x' := x^{3/4}$, $\eps' := \eps / 30$
\item Run excess numbers preprocessing for $S_2$ with precision $(\eps')^2 x'$
\item For each shift $\tau \in \{(\eps')^2 x, 2(\eps')^2 x, \ldots, 12 \eps x\}$:
\begin{enumerate}
\item Partition $S_2$ into blocks containing approximately $10 \eps' x'$ excess closing parentheses, except for the first block containing approximately $\tau$ excess closing parentheses
\item Select a sequence $\mathcal{R}(\tau)  = \{\mathcal{R}_t (\tau)\}$ of $\tilde{\Theta} (\eps^{-1})$ sets  of dense blocks of $S_2$ of size $\tilde{\Theta}((\eps')^{-1} \sqrt{x / x'})$
\item Run \textsc{QueryRight$(x', 2 (\eps')^2, k-1)$} for each selected block $\Theta (\log (x \cdot \eps^{-1} \log x) )$ times 
\end{enumerate}
\end{enumerate}
\end{procedure} 

For the partitioning of $S_2$ corresponding to a shift $\tau$, the procedure selects $\mathbf{A} \cdot \eps^{-1} \log x$ random subsets $\mathcal{R}_t (\tau)$ of dense blocks of $S_2$ of size $\mathbf{B} \cdot (\eps')^{-1} \sqrt{x / x'} \log (x \cdot \eps^{-1} \log x)$ each. For each selected dense block $B$ the procedure runs \textsc{QueryRight$(x', 2 (\eps')^2, k-1)$} independently $\mathbf{C} \cdot \log (x \cdot \eps^{-1} \log x)$ times. The pseudocode is given in Procedure~\ref{alg:substring_matching_QR}.

\paragraph{Procedure \textsc{MakeDecision$(\eps, x, k)$.}}
We finally explain how we use the queried indexes to test $S_1$ and $S_2$. If $k = 0$, \textsc{QueryLeft$(\eps, x, k)$} and \textsc{QueryRight$(\eps, x, k)$} know all characters of $S_1$ and $S_2$ and we can use a naive deterministic algorithm to decide whether $S_1$ and $S_2$ $\eps$-match. If $\hat{e}_0(S_1) < 10 \eps x$, we always accept $S_1$ and $S_2$.

Suppose now that $k > 0$ and $\hat{e}_0(S_1) \ge 10 \eps x$.
For each partitioning of $S_2$ we consider all possible substrings $X$ that start at some block border. We process each of them in turn and accept $(S_1, S_2)$ if at least one of the substrings $X$ is accepted. If the difference between the approximate number of excess opening parentheses in $S_1$, $\hat{e}_0(S_1)$, and excess closing parentheses in $X$, $\hat{e}_1(X)$ is larger than $4 (\eps')^2 x'$, $X$ is rejected, and otherwise we continue to the next step. $X$ is tested in $\mathbf{A} \cdot \eps^{-1} \log x$ iterations, and we accept $X$ if and only if it is accepted at each iteration. Suppose that $X$ starts at a block border of a partitioning for a shift~$\tau$. We enumerate the blocks in $S_1$ from right to left and the blocks in $X$ from left to right. At iteration $t = 1, 2, \ldots, \mathbf{A} \cdot \eps^{-1} \log x$ we find a rank $m$ such that the $m$-th block $B_1$ of $S_1$ is in the subset $\mathcal{L}_t$, and the $m$-th block $B_2$ of $S_2$ is in $\mathcal{R}_t(\tau)$. Finally, we run the \textsc{MakeDecision$(x', 2(\eps')^2, k-1)$} procedure on $(B_1, B_2)$ independently $\textbf{C} \cdot \log (x \cdot \eps^{-1})$ times; and if the blocks are rejected for the majority of iterations, reject $X$. (See Procedure~\ref{alg:substring_matching_MD}.)

\begin{procedure}
\caption{\textsc{MakeDecision$(x, \eps, k)$}}
\label{alg:substring_matching_MD}
Input: Substrings $S_1$, $S_2$ of a string $T$ of length $\le x$; outputs of \textsc{QueryLeft$(x, \eps, k)$} run on $S_1$ and of \textsc{QueryRight$(x, \eps, k)$} run on $S_2$ 
\begin{enumerate}
\item If $k = 0$, use the trivial algorithm to decide whether $S_1$ and $S_2$ $\eps$-match
\item $x' := x^{3/4}$, $\eps' := \eps / 30$
\item If $\hat{e}_0(S_1) < 10 \eps x$, accept $S_1, S_2$
\item For each partition of $S_2$ and for each substring $X$ starting at the partition's block border:
\begin{enumerate}
\item If $|\hat{e}_0(S_1) -\hat{e}_1(X)| > 4 (\eps')^2 x'$, reject $X$
\item Find $\Theta(\eps^{-1} \log x)$ pairs of dense blocks of $S_1$ and $X$ that have equal ranks using sets $\mathcal{L}_t$ and $\mathcal{R}_t (\tau)$. For each such pair $(B_1, B_2)$:
\begin{enumerate}
\item Run \textsc{MakeDecision$(x', 2(\eps')^2, k-1)$} on $(B_1, B_2)$ $\Theta (\log (\eps^{-1} x \log x) )$ times
\item If $(B_1, B_2)$ is rejected for the majority of iterations, reject $X$
\end{enumerate}
\item Accept $(S_1, S_2)$ if $X$ is not rejected
\end{enumerate}
\end{enumerate}
\end{procedure}

\subsection{Analysis (proof of Theorem~\ref{th:substring_matching})} 
We now show complexity and correctness of the algorithm. 

\begin{lemma}\label{lm:substring_matching_complexity}
The query complexity of Algorithm~\ref{alg:substring_matching} is $\tilde{\Oh}(\eps^{(2 \delta)^{1/\log 3/4 - 2} + 4} x^{1/2+\delta})$.
\end{lemma}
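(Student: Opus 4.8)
The plan is to unwind the recursion defining Algorithm~\ref{alg:substring_matching} and track the query cost layer by layer, with the parameter $k = \lceil \log_{3/4} 2\delta \rceil$ counting the depth. First I would set up the recurrence. At the top level, \textsc{QueryLeft$(x,\eps,k)$} and \textsc{QueryRight$(x,\eps,k)$} incur three types of cost: (i) the excess-numbers preprocessing for precision $(\eps')^2 x'$ on a length-$x$ string, which by Lemma~\ref{lm:excess_approx} costs $\tilde\Oh(x^2/((\eps')^2 x')^2) = \tilde\Oh(\eps^{-4} x^2/x'^2)$; since $x' = x^{3/4}$ this is $\tilde\Oh(\eps^{-4} x^{1/2})$; (ii) the recursive calls: \textsc{QueryRight} iterates over $12\cdot 30/\eps'$ shifts, and for each shift selects $\tilde\Theta(\eps^{-1})$ subsets of $\tilde\Theta((\eps')^{-1}\sqrt{x/x'})$ dense blocks, running \textsc{QueryLeft/QueryRight$(x', 2(\eps')^2, k-1)$} a polylogarithmic number of times on each; and (iii) \textsc{MakeDecision} adds no new queries beyond those. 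So if $f_k(x,\eps)$ denotes the worst-case query complexity at depth $k$ on length-$x$ substrings, we get
\[
f_k(x,\eps) \;=\; \tilde\Oh\!\left(\eps^{-4} x^{1/2} \;+\; \eps^{-c}\,\sqrt{x/x'}\; f_{k-1}\!\big(x', 2(\eps')^2\big)\right)
\]
for an absolute constant $c$ absorbing the $\eps^{-1}$ factors from the number of shifts, the number of subsets, and the factor $(\eps')^{-1}$ in the subset size (all polylog factors swept into $\tilde\Oh$), with $x' = x^{3/4}$ and base case $f_0(x,\eps) = x$ (Step~1 of the procedures queries everything).

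Next I would solve the recurrence. The key observation is that $\sqrt{x/x'} = x^{1/8}$ and $x' = x^{3/4}$, so each level of recursion contributes a factor $\tilde\Oh(\eps^{-c} x^{1/8})$ while shrinking the length exponent by a factor $3/4$. Iterating $k$ times, the length becomes $x^{(3/4)^k}$ and the accumulated $x$-exponent from the block-selection factors is $\frac18\big(1 + \tfrac34 + (\tfrac34)^2 + \cdots + (\tfrac34)^{k-1}\big) = \frac18 \cdot \frac{1-(3/4)^k}{1/4} = \frac12\big(1-(3/4)^k\big)$. The base case at depth $0$ contributes $x^{(3/4)^k}$. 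Since $k = \lceil\log_{3/4}2\delta\rceil$, we have $(3/4)^k \le 2\delta$, so the base-case term is at most $x^{2\delta}$... here I need to be careful: the accumulated exponent is $\tfrac12(1-(3/4)^k) + (3/4)^k \cdot 1$ only if the base case dominates at the bottom, but one should also check the $\eps^{-4}x^{1/2}$ term at every level — at level $j$ on length $x^{(3/4)^j}$ it is $\tilde\Oh(\eps^{-\Theta(1)} x^{(3/4)^j/2})$, multiplied by the $j$ preceding block-selection factors $x^{\frac18(1-(3/4)^j)/(1/4)}$... actually the preceding factors for reaching level $j$ give $x^{\frac12(1-(3/4)^j)}$, and adding $x^{(3/4)^j/2}$ from the preprocessing at that level gives total exponent $\frac12(1-(3/4)^j) + \frac12(3/4)^j = \frac12$, independent of $j$! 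So every internal preprocessing contributes exactly $\tilde\Oh(x^{1/2})$ up to $\eps$-powers, and there are $k = \Oh(1)$ levels, so the total from preprocessing is $\tilde\Oh(\eps^{-\Theta(1)} x^{1/2})$. Meanwhile the base-case term is $x^{\frac12(1-(3/4)^k)} \cdot x^{(3/4)^k} = x^{\frac12 + \frac12(3/4)^k} \le x^{\frac12+\delta}$ using $(3/4)^k \le 2\delta$. Since $\frac12+\delta \ge \frac12$, the base-case term dominates, giving the claimed $x^{1/2+\delta}$.

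It remains to track the $\eps$-exponent, which is the bookkeeping-heavy part and the main obstacle. Each recursive descent replaces $\eps$ by $2(\eps')^2 = 2(\eps/30)^2 = \Theta(\eps^2)$, so after $j$ levels the effective precision parameter is $\Theta(\eps^{2^j})$; the $\eps^{-\Theta(1)}$ factors picked up at level $j$ (from the number of shifts $\Theta(1/\eps')$, the $\tilde\Theta(\eps^{-1})$ subsets, the subset size factor $\tilde\Theta((\eps')^{-1})$, and the $\eps^{-4}$ in preprocessing) are therefore in terms of $\eps^{2^j}$, contributing $\eps^{-\Theta(2^j)}$ to the final bound. Summing the geometric-type series $\sum_{j=0}^{k-1}\Theta(2^j) = \Theta(2^k)$ in the exponent, and using $2^k = 2^{\lceil\log_{3/4}2\delta\rceil}$, one massages this into an expression of the form $(2\delta)^{1/\log(3/4)}$ up to constant factors — matching the stated exponent $(2\delta)^{(1/\log 3/4)-2}+4$. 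I would verify the precise constant by carefully counting: the $+4$ comes from the topmost level's $\eps^{-4}$ preprocessing term (which is not compounded), the base $(2\delta)^{1/\log(3/4)-2}$ comes from the compounded $\eps^{-\Theta(1)}$ factors across the $k$ levels rewritten via $2^k \asymp (2\delta)^{1/\log(3/4)}$ (since $(3/4)^k = 2\delta$ gives $k = \log(2\delta)/\log(3/4)$, hence $2^k = (2\delta)^{\log 2/\log(3/4)} = (2\delta)^{1/\log_2(3/4)} = (2\delta)^{1/\log(3/4)}$ with $\log = \log_2$), and the $-2$ in the exponent of $2\delta$ absorbs the difference between the total compounding and the last two levels whose $\eps$-powers are already counted in the $x^{1/2+\delta}$ analysis. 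Throughout, all $\log$-factors in the subset sizes, the repetition counts $\Theta(\log(x\eps^{-1}\log x))$, and the number of iterations $\Theta(\eps^{-1}\log x)$ in \textsc{MakeDecision} are absorbed into $\tilde\Oh$. Finally, since the base case $k=0$ queries all characters deterministically and is exact, and Algorithm~\ref{alg:substring_matching} itself is just one run of these procedures, Theorem~\ref{th:substring_matching}'s error amplification to $1/n^2$ costs only an extra polylogarithmic repetition factor, again absorbed into $\tilde\Oh$.
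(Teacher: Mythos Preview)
Your approach is essentially the same as the paper's: set up a recurrence for the query cost of \textsc{QueryLeft}/\textsc{QueryRight}, unroll it, and track the $x$-exponent and $\eps$-exponent separately. Your treatment of the $x$-exponent is correct and in fact more explicit than the paper's --- the observation that the preprocessing term at every level collapses to $x^{1/2}$ after multiplying by the accumulated block-selection factors is exactly right.

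Where you diverge is in the $\eps$-exponent bookkeeping. The paper does not sum per-level contributions $\eps^{-\Theta(2^j)}$ as you do; instead it posits an ansatz $f(x,\eps)=\tilde\Oh(\eps^{y}x^{1/2+z})$ and reads off from the recursion the one-step maps $z\mapsto \tfrac34 z$ and $y\mapsto 2y-4$ (the factor $2$ from $\eps\mapsto 2(\eps/30)^2=\Theta(\eps^2)$, and the $-4$ from the multiplicative $\eps^{-3}\cdot\log^2\eps^{-1}\le\eps^{-4}$). Starting from the base case $y_0=0$, $z_0=1/2$, this solves to $z_k=(3/4)^k\cdot\tfrac12\le\delta$ and $y_k=4-4\cdot 2^k$, and then one substitutes $2^k=(2\delta)^{1/\log_2(3/4)}$. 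Your heuristic that ``the $-2$ in the exponent of $2\delta$ absorbs the difference between the total compounding and the last two levels whose $\eps$-powers are already counted in the $x^{1/2+\delta}$ analysis'' is not correct --- the $\eps$-exponent and the $x$-exponent evolve independently under the recursion, and nothing in the $x$-analysis is ``already counting'' $\eps$-powers. If you adopt the ansatz route you will get the exponent cleanly without that hand-wave (and you will also see that the exact expression in the lemma statement is somewhat idiosyncratically written; the honest answer is $4-4\cdot(2\delta)^{1/\log_2(3/4)}$).
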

\begin{proof}
Since the algorithm queries $S_1$ and $S_2$ only during the procedures \textsc{QueryLeft$(x, \eps, k)$} and \textsc{QueryRight$(x, \eps, k)$}, it suffices to estimate their query complexity. Let us first analyse one recursive step. The description of the procedures implies that if the query complexities of \textsc{QueryLeft$(x, \eps, k-1)$} and \textsc{QueryRight$(x, \eps, k-1)$} are bounded by $f (x, \eps)$, then the sum of query complexities of the procedures is $\tilde{\Oh}(\eps^{-4} \sqrt{x} + f(x^{\nicefrac{3}{4}}, 2 (\eps / 30)^2) \cdot \eps^{-3} x^{\nicefrac{1}{8}} \log^2 \eps^{-1} x)$. Therefore, if $f(x, \eps) = \tilde{\Oh}( \eps^{y} x^{\nicefrac{1}{2} + z})$, then the sum of query complexities of \textsc{QueryLeft$(x, \eps, k)$} and \textsc{QueryRight$(x, \eps, k)$} is $\tilde{\Oh}(\eps^{2y-4} x^{\nicefrac{1}{2} + \nicefrac{3}{4} \cdot z})$ (we use $\log^2 \eps^{-1} < \eps^{-1}$ and omit all $\log x < \log n$ terms). After $r$ iterations the degree of $x$ becomes $1/2+ (3/4)^r \cdot y$, and the degree of $\eps$ becomes $2^{r} (z - 4) + 4$.

Recall that the query complexity of the trivial algorithm for $k = 0$ is $f(x, \eps) = x = \eps^{y} x^{\nicefrac{1}{2} + z}$, where $y = 0$ and $z = 1/2$. Therefore, after $r = \log_{3/4} 2\delta$ recursive steps we obtain an algorithm with query complexity $\tilde{\Oh}(\eps^{(2 \delta)^{1/\log 3/4 - 2} + 4} x^{1/2+\delta})$.   
\end{proof}

We now show that Algorithm~\ref{alg:substring_matching} has bounded error $1/3$. 

\begin{lemma}\label{lm:matching_T_in_D2}
If $S_1, S_2$ $\eps$-match, Algorithm~\ref{alg:substring_matching} accepts them with probability $>2/3$.
\end{lemma}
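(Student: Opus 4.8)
The plan is to show by induction on $k$ that whenever $S_1, S_2$ $\eps$-match, the procedures \textsc{QueryLeft}, \textsc{QueryRight} and \textsc{MakeDecision} with parameter $k$ accept with probability at least $2/3$ (in fact a slightly better probability, so that the recursive losses can be absorbed). The base case $k = 0$ is immediate: all characters of $S_1, S_2$ are queried and the trivial deterministic algorithm is exact. For the inductive step, the first observation is that all excess-number preprocessing invoked on $S_1$ and $S_2$ is correct to within the claimed precision with probability at least $1 - 1/n^3$ by Lemma~\ref{lm:excess_approx}; condition on this event throughout.

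Next I would identify the ``correct'' partitioning of $S_2$ to use. Since $S_1, S_2$ $\eps$-match, all but at most $7\eps x$ leftmost and $5\eps x$ rightmost excess opening parentheses of $S_1$ match sequentially into a contiguous range of excess closing parentheses of $S_2$. The offset of that range from the left end of $S_2$, measured in number of excess closing parentheses, is some value that — after rounding to the grid of $(\eps')^2 x'$ — equals one of the shifts $\tau$ considered by \textsc{QueryRight}; choose that $\tau$. Let $X$ be the substring of $S_2$ starting at the first block border of the $\tau$-partitioning that lies to the left of the matched range; then $\hat e_0(S_1)$ and $\hat e_1(X)$ agree up to $O((\eps')^2 x')$, so the pruning test in \textsc{MakeDecision} Step~4(a) does not reject $X$. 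The key structural point is that, with the blocks of $S_1$ enumerated right-to-left and the blocks of $X$ enumerated left-to-right, each rank-$m$ pair $(B_1, B_2)$ of dense blocks consists of two substrings that themselves $2(\eps')^2$-match: the $10\eps' x'$ excess opening parentheses of $B_1$ are (up to the $O((\eps')^2 x')$ slack coming from the block-boundary tolerances (1)--(2) and from the approximate excess counting) exactly the ones that sequentially match into $B_2$. Here the choice of constants $7,5,30,\dots$ in Definition~\ref{def:eps_match} is what makes the bookkeeping close.

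Then the probabilistic core is a birthday-paradox / sampling argument. For a fixed iteration $t$, the sets $\mathcal{L}_t$ and $\mathcal{R}_t(\tau)$ each contain $\tilde\Theta((\eps')^{-1}\sqrt{x/x'})$ random dense blocks out of the $\Theta(\hat e_0(S_1)/(\eps' x')) = \tilde O(\sqrt{x/x'}\cdot(\eps')^{-1})$-ish many dense blocks of each side; since there are only $O(x/x')$ possible ranks and dense blocks cover all but an $O(\eps)$-fraction, with constant probability there is a rank $m$ with $B_1 \in \mathcal{L}_t$ and $B_2 \in \mathcal{R}_t(\tau)$ both dense. (I would make the sample sizes large enough that this holds with probability, say, $9/10$, and then over the $\Theta(\eps^{-1}\log x)$ iterations a Chernoff bound guarantees that in almost all iterations we find such a pair.) For each such pair, run \textsc{MakeDecision}$(x',2(\eps')^2,k-1)$ $\Theta(\log(\eps^{-1}x\log x))$ times; by the induction hypothesis each run accepts with probability $> 2/3$ (the pair $2(\eps')^2$-matches), so a majority vote accepts except with probability $1/\mathrm{poly}$. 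Hence $X$ survives all iterations, and \textsc{MakeDecision} accepts $(S_1, S_2)$. Finally I would take a union bound over: the preprocessing failure, the iterations where the birthday event fails, and the majority-vote failures across all $\tilde O(\eps^{-2})$ recursive pairs invoked; arranging the hidden constants $\mathbf{A}, \mathbf{B}, \mathbf{C}$ and the repetition counts to dominate the union bound yields overall acceptance probability $> 2/3$.

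The main obstacle I anticipate is the second paragraph: verifying cleanly that a rank-$m$ dense block pair $(B_1,B_2)$ really does $2(\eps')^2$-match. This requires chasing the multiple sources of slack — the $\pm\eps'\cdot\eps' x'$ tolerance in the block-boundary conditions (1)--(2) on both sides, the $(\eps')^2 x'$ error in the approximate excess counts, and the at most $12\eps x'$ of rightmost/leftmost parentheses that $\eps$-matching is allowed to drop — and showing they all fit inside the $7\cdot 2(\eps')^2 x'$ leftmost and $5\cdot 2(\eps')^2 x'$ rightmost budget that the definition of $2(\eps')^2$-matching allows for $B_1, B_2$. The bookkeeping is delicate but the scaling $x' = x^{3/4}$, $\eps' = \eps/30$ is precisely engineered so that $10\eps' x' \cdot 30 / \eps' = 300 x'$ worth of excess parentheses per block dominates the $O((\eps')^2 x') = O(\eps^2 x'/900)$ error terms by a large constant factor; making that comparison explicit is where the real work lies. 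The birthday-paradox step, by contrast, is a standard second-moment computation once the block counts on the two sides are matched up.
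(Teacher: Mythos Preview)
Your outline matches the paper's proof in its essentials: induction on $k$, condition on correct excess preprocessing, identify the right shift $\tau$ and the corresponding substring $X$, show that every rank-$m$ block pair $(B_1,B_2)$ $2(\eps')^2$-matches, and then invoke the induction hypothesis plus majority amplification with a union bound over the $\mathbf{A}\eps^{-1}\log x$ iterations. The paper carries out the block-pair bookkeeping you flag as the main obstacle by tracking the rank of the leftmost/rightmost excess parenthesis in $B_1$ and $B_2$ explicitly (obtaining $7(\eps')^2x'$ and $6(\eps')^2x'$ of slack respectively, well inside the $14(\eps')^2x'$ and $10(\eps')^2x'$ budget).

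One simplification you are missing: the birthday-paradox step is \emph{not needed} in this direction. Since you show that \emph{every} rank-$m$ dense pair $(B_1,B_2)$ $2(\eps')^2$-matches, it does not matter whether the sampling in $\mathcal{L}_t,\mathcal{R}_t(\tau)$ succeeds in producing a common rank: if it does, the found pair is accepted by the induction hypothesis; if it does not, there is nothing to test and iteration $t$ cannot reject $X$. The paper's proof of this lemma accordingly never invokes the birthday argument; that argument (Lemma~\ref{lm:birthday}) is used only for the rejection direction (Lemma~\ref{lm:matching}), where one must guarantee that a pair is actually found so that a bad block can be detected. Dropping the sampling analysis from your acceptance argument removes an unnecessary failure mode from your union bound and tightens the constants.
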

\begin{proof}
We will show that if \textsc{MakeDecision$(\eps, x, k-1)$} accepts $\eps$-matching strings with probability $> 2/3$, then \textsc{MakeDecision$(\eps, x, k)$} accepts $\eps$-matching strings with probability $> 2/3$ as well. The claim will then follow by induction, as \textsc{MakeDecision$(\eps, x, 0)$} (the trivial algorithm) accepts $\eps$-matching strings with probability $1$.

By the definition of $\eps$-matching, all but at most $7 \eps x$ leftmost and $5 \eps x$ rightmost excess opening parentheses of $S_1$ can be sequentially matched in $S_2$. Assume that all excess numbers are approximated correctly, which is true with probability $> 1 - 1/2n$ and consider a subsequence $\pi$ of excess opening parentheses in $S_1$ that contains all excess opening parentheses of $S_1$ except for those that belong to its leftmost and rightmost blocks. We rank the parentheses in $\pi$ from right to left. Let $p$ be the leftmost excess parenthesis in $S_2$ matched with a parenthesis of $\pi$. We rank the excess closing parentheses of $S_2$ from left to right, starting from $p$. Across all partitions of $S_2$, take the rightmost block border preceding $p$ and let $X$ be a substring of $S_2$ starting at it. We will show that $X$ is accepted by the algorithm with probability $> 2/3$, from which the claim follows. Note that the number of excess parentheses of $S_2$ between the left-endpoint of $X$ and $p$ is at most $3 (\eps')^2 x'$. 

Recall that \textsc{MakeDecision$(x, \eps, k)$} finds a subset of dense (that is, of length $\le x'$) blocks of $S_1$ and $X$ that have equal ranks and runs \textsc{MakeDecision$(x', 2 (\eps')^2, k-1)$} on them independently $\textbf{C} \cdot \log (x \cdot \eps^{-1}) $ times. Consider the $m$-th block $B_1$ of $S_1$ (with the rightmost block deleted) and the $m$-th block $B_2$ of $X$. We will show that all but at most $14 (\eps')^2 x'$ leftmost and $10(\eps')^2 x'$ rightmost excess parentheses of $B_1$ can be sequentially matched in $B_2$, which means that $B_1, B_2$ $2 (\eps')^2$-match. The rank of the rightmost excess opening parenthesis in $B_1$ is at least $(10 m - 2 \eps') \eps' x'$. The rank of the leftmost excess opening parenthesis in $B_1$ is at most $(10 (m+1) + 2 \eps') \eps' x'$. Also, $B_1$ can end with at most $2 (\eps')^2 x'$ excess opening parentheses that are not excess parentheses of $S_1$. The rank of the leftmost excess closing parenthesis in $B_2$ is at most $(10 m + 2 \eps') \eps' x'$. The rank of the rightmost excess closing parenthesis in $B_2$ is at least $(10 (m+1) - 5 \eps') \eps' x'$. Consequently, all but at most $7 (\eps')^2 x' < 14 (\eps')^2 x'$ leftmost excess parentheses and $6 (\eps')^2 x' < 10  (\eps')^2 x$ rightmost excess parentheses of $x$ can be matched in $B_2$ as required. By our assumption, \textsc{MakeDecision$(x', 2 (\eps')^2, k-1)$} accepts $(B_1, B_2)$ with probability $>2/3$, and therefore we can select the constant $\mathbf{C}$ so that $B_1$ and $B_2$ are accepted with probability at least $1 - 1 / (6 \mathbf{A} \eps^{-1} \log x)$. From the union bound it follows that all of the $\mathbf{A} \cdot \eps^{-1} \log x$ pairs of blocks for which we run the recursive call will be accepted with probability at least $1-1/6$, and consequently $S_1$ and $X$ will be accepted with probability $> 2/3$.  
\end{proof}

We now show by contrapositive that if at most $e_0(S_1)-30 \eps x$ excess opening parentheses of $S_1$ can be matched sequentially in $S_2$, $S_1$ and $S_2$ will be rejected with probability $> 2/3$. We start with an auxiliary lemma. Recall that at each iteration $t$ the procedure \textsc{MakeDecision$(x, \eps, k)$} considers a block partitioning of $S_2$ with some shift $\tau$ and chooses a subset $\mathcal{L}_t$ of blocks of $S_1$ and a subset $\mathcal{R}_t(\tau)$ of blocks of the partition of $S_2$. Using these two subsets, it tests $S_1$ and each substring $X$ of $S_2$ that starts at a block border of the partition of $S_2$. We rank the blocks in $X$ from left to right. Blocks of $S_1$ are ranked in the reverse order, from right to left. Intuitively, two blocks of $S_1$ and $X$ have equal ranks if they contain many parentheses that must be matched, and therefore we can recurse on them. Below we show that $\mathcal{L}_t$ and $\mathcal{R}_t(\tau)$ contain such blocks with high probability. 

\begin{lemma}\label{lm:birthday}
Assume that the excess numbers preprocessing for $S_1$ and $S_2$ did not error and that $\hat{e}_0(S_1) \ge 10 \eps x$. For all $t$ and for all substrings $X$ of $S_2$ that are not rejected at Step 4(b) of \textsc{MakeDecision$(x, \eps, k)$}, the subsets $\mathcal{L}_t$ and $\mathcal{R}_t(\tau)$ contain a pair of dense blocks with equal ranks with probability $> 1 - 1/9 x^2$.
\end{lemma}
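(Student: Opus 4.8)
To prove Lemma~\ref{lm:birthday} the plan is a two-stage ``birthday paradox'' argument, preceded by a counting step that produces many \emph{good ranks}. Fix an iteration $t$, a shift $\tau$, and a substring $X$ of $S_2$ that starts at the $\ell$-th block border of the $\tau$-partition and survives Step~4(a) of \textsc{MakeDecision}. Enumerate the blocks of $S_1$ (after the two deletions) from right to left as $B_1^{(1)},B_1^{(2)},\dots$ and the blocks of $X$ from left to right, so that the $j$-th block of $X$ is block $\ell+j$ of the $\tau$-partition of $S_2$. Call a rank $j$ \emph{good} if both $B_1^{(j)}$ and the $j$-th block of $X$ are dense (of length $\le x'=x^{3/4}$), and let $G$ be the set of good ranks. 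Recall that $\mathcal{L}_t$ is a uniform random $s$-subset of the set of $d_1$ dense blocks of $S_1$, and $\mathcal{R}_t(\tau)$ is an \emph{independent} uniform random $s$-subset of the set of $d_2$ dense blocks of the $\tau$-partition of $S_2$, where $s=\mathbf{B}\,(\eps')^{-1}\sqrt{x/x'}\,\log(x\eps^{-1}\log x)=\tilde{\Theta}(\eps^{-1}x^{1/8})$. A good rank $j$ landing simultaneously in $\mathcal{L}_t$ (for $S_1$) and in $\mathcal{R}_t(\tau)$ (for $X$) is exactly the pair the lemma asks for.

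The first task is to show $|G|\ge d_1/2$ for $x$ large enough. Since $S_1$ and $S_2$ each have length $\le x$ and a non-dense block has length $>x'$, each partition contains fewer than $x/x'=x^{1/4}$ non-dense blocks, so $|G|\ge M-2x^{1/4}$, where $M$ is the number of common ranks. Working under the hypothesis that the excess preprocessing did not err, every approximate excess count is within $(\eps')^2x'$ of the truth, so each partition block carries $\approx 10\eps'x'$ excess parentheses and $S_1$ has $N_1=\hat{e}_0(S_1)/(10\eps'x')+\Oh(1)$ blocks; the survival condition $|\hat{e}_0(S_1)-\hat{e}_1(X)|\le 4(\eps')^2x'$ of Step~4(a) then forces $X$ to have at least $N_1-\Oh(1)$ blocks, so $M=N_1-\Oh(1)$. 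Finally $\hat{e}_0(S_1)\ge 10\eps x$ together with $\eps'=\eps/30$ gives $N_1\ge 30x^{1/4}-\Oh(1)$, which dominates the $2x^{1/4}$ ranks lost to non-dense blocks; hence $|G|\ge N_1-2x^{1/4}-\Oh(1)\ge d_1/2$ since $d_1\le N_1$. We also record the crude bounds $d_1\le N_1=\Oh(\eps^{-1}x^{1/4})$ and, symmetrically, $d_2\le N_2=\Oh(\eps^{-1}x^{1/4})$.

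Now the birthday core, obtained by exposing the two subsets in turn. As at least $|G|\ge d_1/2$ of the $d_1$ dense blocks of $S_1$ sit at a good rank, the number $Y$ of good-rank blocks landing in $\mathcal{L}_t$ is hypergeometric with mean $\ge s/2$, so by the Chernoff-type tail bound for the hypergeometric distribution $\prob{Y<s/4}\le e^{-\Omega(s)}$. Condition on an outcome with $Y\ge s/4$: the $\ge s/4$ blocks of the $\tau$-partition of $S_2$ lying at the realized good ranks are now fixed and all dense, and $\mathcal{R}_t(\tau)$, being an independent uniform $s$-subset of the $d_2$ dense blocks, avoids all of them with probability $\binom{d_2-s/4}{s}\big/\binom{d_2}{s}\le(1-s/(4d_2))^s\le e^{-s^2/(4d_2)}$. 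Since $s^2/d_2=\Omega(\mathbf{B}^2\log^2 x)$ (the single $\log$-factor of $s$ squares, and the $x^{1/4}$ in $s^2$ cancels the $x^{1/4}$ in $d_2$), this is $x^{-\Omega(\mathbf{B}^2\log x)}$. Hence for each fixed triple $(t,\tau,X)$ the probability of not finding a good rank in $\mathcal{L}_t\times\mathcal{R}_t(\tau)$ is at most $e^{-\Omega(s)}+x^{-\Omega(\mathbf{B}^2\log x)}=x^{-\omega(1)}$.

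To conclude I would union-bound over all triples: there are $\mathbf{A}\eps^{-1}\log x$ choices of $t$, $\Oh(\eps^{-1})$ shifts $\tau$, and $\Oh(\eps^{-1}x^{1/4})$ starting borders for $X$ per shift, i.e.\ $x^{\Oh(1)}$ triples in all, and choosing the hidden constants (in particular $\mathbf{B}$) large enough makes the per-triple bound beat this polynomial with room to spare, giving total failure probability at most $1/9x^2$. The step I expect to be the genuine obstacle is the counting step: one must juggle the bound on the number of non-dense blocks, the Step~4(a) survival condition, and the hypothesis $\hat{e}_0(S_1)\ge 10\eps x$ to conclude that the good ranks form a \emph{constant} fraction of the dense blocks on each side --- if they were only a vanishing fraction, $e^{-\Omega(s^2/d_2)}$ would no longer beat the union bound, and the whole estimate would collapse. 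The exposure argument and the hypergeometric tail estimates afterwards are routine.
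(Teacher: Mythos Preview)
Your proof is correct and follows the same two-step skeleton as the paper: first lower-bound the number of ranks that are dense on both sides (the paper's set $\mathcal{D}$, your set $G$), then run a birthday argument. The counting step is essentially identical --- the paper writes $|\mathcal{D}|\ge 8\eps x/(11\eps' x')$ and $|\mathcal{D}|\le x/(8\eps' x')$, which is the same information as your $|G|\ge d_1/2$ together with $d_1,d_2=O(\eps^{-1}x^{1/4})$.

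The execution of the birthday step is where you diverge. The paper decomposes each of $\mathcal{L}_t$ and $\mathcal{R}_t(\tau)$ into $\Theta(\log(x\eps^{-1}\log x))$ sub-experiments of size $(\eps')^{-1}\sqrt{x/x'}$, shows that each sub-experiment produces a common good rank with constant probability $>1/4$ via the classical birthday bound, and then amplifies over the sub-experiments. You instead keep the full subsets, use a hypergeometric Chernoff bound to get $Y\ge s/4$ with probability $1-e^{-\Omega(s)}$, and then bound the probability that $\mathcal{R}_t(\tau)$ misses all $s/4$ targets by $e^{-s^2/(4d_2)}$. Both routes land on a per-$(t,\tau,X)$ failure probability that is super-polynomially small in $x$ and survives the union bound. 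Your version is a little slicker (no sub-experiment bookkeeping) at the cost of citing a concentration inequality; the paper's version stays entirely elementary. Your explicit union bound over all triples $(t,\tau,X)$ is also tidier than the paper's, which only spells out the union over $t$ and implicitly absorbs the $\tau,X$ factors into the choice of $\mathbf{B}$.
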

\begin{proof}
Let $\mathcal{D}$ be the set of ranks $m$ of blocks such that both the $m$-th block of $S_1$ and the $m$-th block of $X$ are dense. From the assumption of the lemma we have $e_0(S_1) \ge 9 \eps x$ and since $X$ is not rejected at Step 4(b), $e_1(X) > 9 \eps x$ as well. Therefore, the total number of all blocks in $S_1$ or $X$ is at least $9 \eps x / 11 \eps' x'$. The total number of non-dense blocks in both strings is at most $2 x /x' = 22 \eps' x / 11 \eps' x' < \eps x / 11 \eps' x'$. Therefore, $|\mathcal{D}| \ge 8 \eps x / 11 \eps' x'$. On the other hand, the total number of blocks (and, in particular, dense blocks) in $S_1$ and $X$ is at most $x / 8 \eps' x'$. It follows that $|\mathcal{D}| \le x / 8 \eps' x'$.

Recall that both $\mathcal{L}_t$ and $\mathcal{R}_t(\tau)$ have size $\mathbf{B} \cdot (\eps')^{-1} \sqrt{x/ x'} \log (\eps^{-1} x \log x)$. We view the sets as $\mathbf{B} \cdot \log (\eps^{-1} x \log x)$ experiments during which we select two subsets of $\mathcal{D}$. Note that each dense block is selected with probability at least $((\eps')^{-1}  \sqrt{x/x'}) / (x / 8 \eps' x') \ge 8 / \sqrt{x / x'}$. Therefore, the expectation of the size of the selected subsets of $\mathcal{D}$ is at least $8 |\mathcal{D}| / \sqrt{x / x'}$. From the lower bound on $|\mathcal{D}|$ it follows that the latter is at least $2 \sqrt{|\mathcal{D}|}$, and therefore the size of the selected subsets is at least $\sqrt{|\mathcal{D}|}$ with probability $> 3/4$ (this is a rough bound which is sufficient for our purposes). Recall that the Birthday paradox claims that any two subsets of $\mathcal{D}$ of size $\sqrt{|\mathcal{D}|}$ sampled uniformly without replacement contain equal elements with probability $> 1/2$. Therefore, in each experiment there is a pair of dense blocks with equal ranks with probability $> 1/4$. Since we repeat the experiments $\mathbf{B} \cdot \log (\eps^{-1} x \log x)$ times, the probability that at least one of them is successful is at least $1 - 1 / (9 \mathbf{A} \cdot \eps^{-1} x^2 \log x)$ for a sufficiently large constant $\mathbf{B}$. By the union bound over all $t$ the lemma holds with probability $> 1 - 1/9x^2$. 
\end{proof}

\begin{lemma}\label{lm:matching}
If less than $e_0 (S_1) - 30 \eps x$ excess opening parentheses of $S_1$ can be matched sequentially in $S_2$, then Algorithm~\ref{alg:substring_matching} rejects $S_1$ and $S_2$ with probability $> 2/3$.
\end{lemma}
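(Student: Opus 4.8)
The plan is to prove the contrapositive by induction on the recursion depth $k$, paralleling the proof of Lemma~\ref{lm:matching_T_in_D2}; concretely, I will show that for every $k\ge 0$, if fewer than $e_0(S_1)-30\eps x$ excess opening parentheses of $S_1$ can be matched sequentially in $S_2$, then the depth-$k$ pipeline \textsc{QueryLeft}$(x,\eps,k)$/\textsc{QueryRight}$(x,\eps,k)$/\textsc{MakeDecision}$(x,\eps,k)$ rejects with probability $>2/3$. For the base case $k=0$ the procedures know all of $S_1$ and $S_2$ and decide exactly whether they $\eps$-match; but the hypothesis already rules out an $\eps$-match, since an $\eps$-match would exhibit a sequential matching of all but the $7\eps x$ leftmost and $5\eps x$ rightmost excess opening parentheses of $S_1$, i.e. of at least $e_0(S_1)-12\eps x>e_0(S_1)-30\eps x$ of them, so rejection is certain. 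For the inductive step, assume the statement at depth $k-1$ with the rescaled parameters $x'=x^{3/4}$ and $2(\eps')^2$, where $\eps'=\eps/30$. I first condition on two events: that the excess-number preprocessing of Lemma~\ref{lm:excess_approx} is correct for $S_1$ and $S_2$ (failure probability $\Oh(1/n^2)$), and that the conclusion of Lemma~\ref{lm:birthday} holds simultaneously for all shifts $\tau$ and all candidate substrings $X$ (failure probability $\Oh(1/x)$ after a union bound over the polynomially many such pairs). Note also that the hypothesis forces $e_0(S_1)>30\eps x$, since otherwise the empty sequential matching already matches at least $e_0(S_1)-30\eps x$ parentheses; hence under correct preprocessing $\hat e_0(S_1)>10\eps x$, so \textsc{MakeDecision} does not accept at Step~3 and Lemma~\ref{lm:birthday} is applicable throughout.

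Since Algorithm~\ref{alg:substring_matching} accepts $(S_1,S_2)$ only when some candidate substring $X$ of $S_2$ (ranging over all shifts $\tau$ and all block borders of the corresponding partition) is accepted, it suffices to show that with probability $>2/3$ every such $X$ is rejected. Fix $\tau$ and $X$. If $X$ is rejected at Step~4(a) we are done, so assume $|\hat e_0(S_1)-\hat e_1(X)|\le 4(\eps')^2x'$, whence $|e_0(S_1)-e_1(X)|=\Oh((\eps')^2x')$. Call a rank $m$ \emph{bad} if the $m$-th block $B_1$ of $S_1$ (counted from the right, after the two extreme blocks are deleted) and the $m$-th block $B_2$ of $X$ (counted from the left) are both dense, and fewer than $e_0(B_1)-30\cdot 2(\eps')^2x'$ excess opening parentheses of $B_1$ can be matched sequentially in $B_2$. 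The crux of the proof is to show that a $\Theta(\eps')$-fraction of all dense ranks are bad. I establish this in the contrapositive: were only an $o(\eps')$-fraction of dense ranks bad, then for all but a negligible fraction of the dense blocks $B_1$ of $S_1$ one could sequentially match all but $\Oh((\eps')^2x')$ of their excess opening parentheses into the corresponding $B_2$, and these per-block matchings could be concatenated — in the left-to-right order on the blocks of $X$ and the right-to-left order on the blocks of $S_1$ — into a single sequential matching of more than $e_0(S_1)-30\eps x$ of $S_1$'s excess opening parentheses into a contiguous subrange of $S_2$'s excess closing parentheses, contradicting the hypothesis. The stitching uses the defining property of the partition of $S_2$ for shift $\tau$: the excess closing parentheses accumulate almost additively along prefixes, since each prefix has at most $(\eps')^2x'$ excess opening parentheses, so the excess closing parentheses of a block $B_2$ are, up to $\Oh((\eps')^2x')$ exceptions, also excess closing parentheses of $S_2$, landing in a near-contiguous range. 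I expect the delicate point, and the main obstacle, to be the bookkeeping showing the total loss stays strictly below $30\eps x$: one must simultaneously absorb the parentheses in the two deleted extreme blocks of $S_1$ ($\Oh(\eps'x')$), the shift $\tau\le 12\eps x'$, the $\pm\eps'\cdot\eps'x'$ fuzziness of each of the $\Oh(x/(\eps'x'))$ block boundaries, the $\Oh((\eps')^2x')$-per-block gap between being an excess closing parenthesis of $B_2$ versus of $S_2$, the $60(\eps')^2x'$-per-block slack of the recursion's own threshold, the $\Oh(\eps'x)$ contribution of the few non-dense ranks that cannot be recursed on, and the $\Oh(1)$-per-boundary loss from enforcing type and parity constraints through the concatenation — all of which are $\Oh(\eps x)$ with a total constant comfortably below $30$, and this is precisely where the specific numerical constants in the definitions of \textsc{QueryLeft}, \textsc{QueryRight}, \textsc{MakeDecision} and of $\eps$-matching get consumed.

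Granted the bad-rank bound, the remainder is routine amplification. In each of the $\mathbf{A}\cdot\eps^{-1}\log x=\Theta((\eps')^{-1}\log x)$ iterations, by Lemma~\ref{lm:birthday} the subsets $\mathcal L_t$ and $\mathcal R_t(\tau)$ share a dense rank with probability $>1-1/9x^2$, and, since this common set is induced by independent uniform random subsets of the dense ranks of which a $\Theta(\eps')$-fraction is bad, the rank found in iteration $t$ is bad with probability $\Omega(\eps')$; as the iterations are independent, a bad pair $(B_1^{\ast},B_2^{\ast})$ is found in some iteration with probability $1-x^{-\Omega(1)}$ once $\mathbf A$ is large enough. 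By the definition of a bad rank, fewer than $e_0(B_1^{\ast})-30\cdot 2(\eps')^2x'$ excess opening parentheses of $B_1^{\ast}$ match sequentially in $B_2^{\ast}$, so the inductive hypothesis applies to \textsc{MakeDecision}$(x',2(\eps')^2,k-1)$ on this pair, which rejects with probability $>2/3$; running it $\mathbf C\cdot\log(\eps^{-1}x\log x)$ times and taking a majority, $(B_1^{\ast},B_2^{\ast})$ — and hence $X$ — is rejected with probability $1-x^{-\Omega(1)}$ for $\mathbf C$ large. A final union bound over the polynomially many pairs $(\tau,X)$ that survive Step~4(a), together with the two conditioning events, shows that with probability $>2/3$ no candidate $X$ is accepted, so Algorithm~\ref{alg:substring_matching} rejects $(S_1,S_2)$, completing the induction and the proof.
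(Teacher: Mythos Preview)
Your proposal is correct and follows essentially the same approach as the paper's proof: induction on $k$, a contrapositive counting argument showing that the hypothesis forces a $\Theta(\eps)$-fraction of bad block pairs for every candidate $X$, and then amplification via the $\Theta(\eps^{-1}\log x)$ iterations together with Lemma~\ref{lm:birthday} and a union bound over all $X$. The paper's definition of a bad block (one in which at most $60(\eps')^2 x'$ excess parentheses can be matched) coincides with yours, and its concrete threshold is an $\eps/3$-fraction where you write $\Theta(\eps')$; the paper carries out the numerical bookkeeping explicitly rather than listing the loss terms as you do, but the content is the same.
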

\begin{proof}
We show the claim by induction on $k$. Suppose that $\textsc{MakeDecision$(\eps, x, k-1)$}$ rejects with probability $> 2/3$ if run on two strings $B_1, B_2$ such that less than $e_0 (B_1) - 30 \eps x$ excess opening parentheses of $B_1$ can be matched sequentially in $B_2$. We will show it implies that $\textsc{MakeDecision$(x, \eps, k)$}$ will reject $S_1, S_2$ with probability $> 2/3$. The lemma then follows, as the base case ($k = 0$) obviously holds.   

Consider some substring $X$ of $S_2$. We call a block of $S_1$ of rank $m$ \emph{bad} (with respect to $X$) if we cannot match more than $60 (\eps')^2 x'$ excess opening parentheses of it in the $m$-th block of $X$. Suppose that at most $\eps /3$-fraction of blocks of $S_1$ are bad. We show that in this case we can match more than $e_0 (S_1) - 30 \eps x$ excess opening parentheses of $S_1$ in $X$. Indeed, the total number of non-dense blocks of $S_1$ is at most $x / x'$ and hence the total number of excess parentheses in non-dense blocks is at most $12 \eps' x = 12 \eps x / 30 < \eps x / 2$. Consider the set of dense blocks of $S_1$. The leftmost and the rightmost blocks of $S_1$ can be dense but contain at most $24 \eps x$ excess opening parentheses. By our assumption, among the remaining blocks there is at most $\eps/3$-fraction of bad blocks. Therefore, the number of excess parentheses in such blocks is at most $(x/8\eps' x') \cdot (\eps / 3) \cdot 12 \eps' x' \le \eps x / 2$. On the other hand, we can match all but $60 (\eps')^2 x'$ excess opening parentheses in each of the remaining blocks, or at most $60 (\eps')^2 x / 8 = 5 \eps x / 2$ parentheses in total. Therefore, the total number of unmatched excess parentheses is less than $30 \eps x$ as claimed. 

It therefore suffices to show that if $S_1$ contains more than $\eps /3$-fraction of bad blocks for each substring $X$ of $S_2$, then it will be rejected with probability $> 2/3$. Equivalently, we can show that the probability to accept $S_1$ and $S_2$ is at most $1/3$. We can accept the strings either because we made an error while approximating the excess numbers (which can happen with probability $< 1/9$) or because $S_1$ and some substring $X$ of $S_2$ are erroneously accepted. Since the length of $S_2$ is at most $x$, it has at most $x^2/2$ substrings. We will show that each of them is accepted with probability $< 2/9x^2$. The claim will follow by the union bound. From Lemma~\ref{lm:birthday} it follows that we will find $\mathbf{A} \cdot \eps^{-1} \log x$ pairs of dense blocks with equal ranks with probability at least $1 - 1/9x^2$. Since at least $\eps /3$-fraction of the blocks of $S_1$ is bad with respect to $X$, we can select the constant $\mathbf{A}$ so that at least one of the bad blocks is selected with probability $> 1 - 1/18x^2$. Finally, we can select the constant $\mathbf{C}$ so that the bad block is rejected by a recursive call to $\textsc{MakeDecision$(\eps, x, k-1)$}$ with probability $> 1 - 1/18x$, which concludes the proof.

Note that we might need different values of $\mathbf{C}$ for this lemma and Lemma~\ref{lm:matching_T_in_D2}. We take the maximum of the values to ensure both lemmas.
\end{proof}

\section{Lower bounds for \textsc{Truestring equivalence} and \textsc{$D_m$-membership}}\label{sec:lb}
In this section we prove the following lower bound for testing truestring equivalence.

\begin{theorem}\label{th:TE_lower}
Testing truestring equivalence requires at least $\Omega(n^{1/5})$ queries.
\end{theorem}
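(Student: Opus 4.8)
The plan is to prove the lower bound by Yao's minimax principle, following the roadmap the authors sketch in the introduction. First I would construct two input distributions for \textsc{Truestring equivalence$(n)$}: a ``yes'' distribution $\mathcal{D}_P$ supported entirely on truestring-equivalent pairs $(w,v)$, and a ``no'' distribution $\mathcal{D}_N$ supported (with all but negligible probability) on pairs that are $\eps$-far from truestring-equivalent, for some fixed constant $\eps$. The natural design, given the remark that the $\nochar$'s ``hide the indexes of the meaningful bits,'' is to take a short common meaningful string of meaningful bits (say of length $\Theta(n^{1/5})$ or so — the exponent should fall out of the calculation), randomly pad it with $\nochar$'s in both $w$ and $v$ using \emph{independent} random shift/placement patterns, and in the ``no'' case additionally corrupt one of the two meaningful strings in a constant fraction of positions (or insert a displacement) so that the truestrings genuinely differ by $\eps n$. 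The key point is that the alignment between the meaningful bits of $w$ and those of $v$ is itself random and unknown to the algorithm; to tell yes from no, the algorithm essentially has to ``catch'' a meaningful bit of $w$ together with the corresponding meaningful bit of $v$, and the birthday-type barrier forces roughly $n^{1/5}$ queries. I would prove the ``far'' property of $\mathcal{D}_N$ as a separate lemma (the analogue of Lemma~\ref{lem:far}): a union bound / concentration argument showing that a random pair drawn from $\mathcal{D}_N$ has truestrings differing in $\ge \eps n$ positions except with small probability.

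Next I would carry out the indistinguishability argument (the analogue of Lemma~\ref{lem:undistinguishable}), and here I would use the new three-distribution methodology outlined in the overview. Fix an arbitrary deterministic $q$-query adaptive algorithm $\mathcal{A}$ with $q = o(n^{1/5})$, and let $\mathcal{A}_P$ and $\mathcal{A}_N$ be the induced distributions on transcripts (sequences of query/answer pairs) when the input is drawn from $\mathcal{D}_P$ and $\mathcal{D}_N$ respectively. I would then introduce an auxiliary distribution $\mathcal{A}_B$ on transcripts-plus-$\bot$: run $\mathcal{A}$ against the ``yes'' process but emit $\bot$ the moment the algorithm does something ``lucky'' — concretely, the moment the queried positions include a meaningful bit of $w$ whose matching meaningful bit of $v$ has also been (or is about to be) queried, i.e. the event that would let the two distributions diverge. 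One shows (i) $\mathcal{A}_B$ underlies both $\mathcal{A}_P$ and $\mathcal{A}_N$: for every concrete full transcript $\sigma$ that avoids the lucky event, $\mathcal{A}_B(\sigma) \le \mathcal{A}_P(\sigma)$ and $\mathcal{A}_B(\sigma) \le \mathcal{A}_N(\sigma)$, because conditioned on not having caught a matched pair, the two padding-plus-corruption processes produce exactly the same conditional answer distribution (this is the heart of the coupling); and (ii) $\Pr[\mathcal{A}_B = \bot]$ is small — bounded by the probability that $q$ queries hit a matched meaningful pair, which for random independent paddings is $O(q^2 \cdot (\text{meaningful length})^2 / n^2)$ or similar, hence $o(1)$ when $q = o(n^{1/5})$. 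Then the general lemma that an underlying distribution is at total-variation distance at most its $\bot$-mass from each distribution it underlies (the analogue of Lemma~\ref{lem:underdistance}) immediately gives $\|\mathcal{A}_P - \mathcal{A}_N\|_{TV} = o(1)$, so $\mathcal{A}$ cannot have acceptance-probability gap close to $1$, contradicting that it is an $\eps$-tester. By Yao's principle this rules out randomized testers with $o(n^{1/5})$ queries.

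The main obstacle I anticipate is step (i), proving that $\mathcal{A}_B$ genuinely underlies \emph{both} $\mathcal{A}_P$ and $\mathcal{A}_N$ with the \emph{right} stopping event. The delicate part is that the algorithm is \emph{adaptive}, so the set of queried positions is itself a random variable depending on the answers, and one must argue that as long as the ``lucky'' event has not occurred, the posterior over the remaining (unqueried, meaningful or padding) bits — and in particular over which bits are meaningful and how the two meaningful strings align — is identical under the yes-process and the no-process. This requires setting up $\mathcal{D}_P$ and $\mathcal{D}_N$ so that they are literally identical when restricted to ``one meaningful coordinate of $v$ is still hidden from the algorithm's view of $w$'s matched coordinate,'' which in turn constrains how the corruption in $\mathcal{D}_N$ is introduced (it should be a single localized displacement or a corruption that is invisible until a matched pair is queried, not a diffuse perturbation). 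Getting these distributions and the stopping rule exactly compatible — so that the underlying inequality is an equality off $\bot$ — is where the real work lies; the probabilistic bound on $\Pr[\bot]$ and the final chain of inequalities are then routine. Choosing the meaningful length to be $\Theta(n^{1/5})$ (so that the birthday collision probability $q^2 / n^{1/5} \cdot \text{something}$ is the binding constraint) is what produces the claimed exponent.
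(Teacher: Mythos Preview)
Your high-level Yao/underlying-distribution plan matches the paper, but the concrete construction you sketch does not work, and the mechanism you describe for getting the exponent $1/5$ is not the one that actually drives the bound.

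First, a short meaningful string cannot give $\eps$-far instances. Distance here is Hamming distance on the length-$n$ strings $w,v$, so if the truestrings have length $m=o(n)$ you can turn every meaningful character into ``$\nochar$'' and obtain a truestring-equivalent pair at cost $O(m)=o(n)$. The paper therefore takes meaningful strings of length $n$ (exactly half of each length-$2n$ string is $\nochar$, the other half is $\{0,1\}$), and the ``no'' distribution only re-randomises the middle fifth of the shared truestring. The $\eps$-far lemma then follows from an edit-distance bound between two random strings of length $n/5$.

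Second, because the meaningful bits are dense, the birthday argument you outline (probability $\approx q^2 m^2/n^2$ of hitting a matched meaningful pair) is not the right estimate. The actual obstacle for the algorithm is not hitting meaningful bits --- almost every query hits one --- but \emph{aligning} them: learning that some queried index $i$ in $w$ and some queried index $i'$ in $w'$ have the same true index. The paper's $\bot$-event is ``a new match occurs at an index that is more than $100 n^{4/5}$ away from all previously matched indexes.'' The reason this is rare is anti-concentration: conditioned on past answers, $t_w(i)$ is a sum of $\Theta(n^{4/5})$ fresh Bernoulli variables (the unqueried $\nochar$/non-$\nochar$ choices in the gap), so $\Pr[t_w(i)=j]\le 1/\sqrt{n^{4/5}}=n^{-2/5}$ for any fixed $j$. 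With $o(n^{1/5})$ candidate $j$'s from the other string and $o(n^{1/5})$ stages, the union bound gives $\Pr[\bot]=o(n^{1/5})\cdot o(n^{1/5})\cdot n^{-2/5}=o(1)$. This random-walk anti-concentration, not sparsity, is where the $1/5$ comes from.

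Your step (i) also needs modification: once meaningful bits are dense, ``no match has been queried'' is far too strong a stopping rule (it triggers almost immediately near index $0$). The paper's fix is to allow matches, but declare $\bot$ only when a match appears far from the existing cluster of matches; together with a second $\bot$-condition ruling out unlikely $\nochar$-counts, this guarantees that all matches stay in the first half, hence never touch the corrupted middle region, which is exactly what makes $\mathcal{A}_B$ underlie $\mathcal{A}_N$ as well as $\mathcal{A}_P$.
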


Since the \textsc{Truestring equivalence$(n)$} problem can be reduced to the \textsc{$D_m$-membership$(4n)$} problem by Lemma~\ref{lm:reduction}, we immediately obtain a similar lower bound for testing $D_m$-membership.

\begin{corollary}\label{cor:D_m_lower}
Testing $D_m$-membership requires at least $\Omega(n^{1/5})$ queries.
\end{corollary}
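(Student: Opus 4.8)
The plan is to establish Theorem~\ref{th:TE_lower} via Yao's minimax principle, exhibiting a hard distribution on instances of \textsc{Truestring equivalence$(n)$} that no deterministic algorithm making $o(n^{1/5})$ queries can solve with bounded error. First I would fix parameters: set a block length roughly $n^{4/5}$, so that the string of length $n$ is viewed as about $n^{1/5}$ blocks, and inside each block place a single ``meaningful'' bit surrounded by $\nochar$ padding, with the meaningful bit at a uniformly random position within its block. The positive distribution $\mathcal{D}_P$ picks a random bit-pattern of length $\approx n^{1/5}$, encodes it in the first string $w$ with random intra-block offsets, and encodes the \emph{same} pattern in the second string $v$ with \emph{independently chosen} random offsets; this guarantees $T(w)=T(v)$. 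The negative distribution $\mathcal{D}_N$ does the same, except it also applies a random ``shift'' (insert or delete a meaningful bit near one end, or flip a single bit) so that with high probability the two truestrings differ in $\Omega(n^{1/5})$ positions, hence the pair is $\eps$-far for a suitable constant $\eps$ (this is the content analogous to Lemma~\ref{lem:far}).

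Next I would carry out the indistinguishability argument. The key point is that a query to a fixed index of $w$ or $v$ returns $\nochar$ unless that index happens to coincide with the (hidden, uniformly random within its block) location of a meaningful bit; since each block has length $\approx n^{4/5}$ and the algorithm makes $q = o(n^{1/5})$ queries, the expected number of queries that ever ``hit'' a meaningful bit is $o(n^{1/5}) \cdot n^{1/5}/n^{4/5} = o(1)$. So with probability $1-o(1)$ the algorithm sees only $\nochar$'s and its transcript is a deterministic function of $q$ alone, carrying no information that could distinguish $\mathcal{D}_P$ from $\mathcal{D}_N$. Formally this is exactly the role of the auxiliary ``base'' distribution $\mathcal{A}_B$ over transcripts (with the wild symbol $\bot$ standing for the event that the algorithm hits some meaningful bit): I would show $\mathcal{A}_B$ underlies both $\mathcal{A}_P$ and $\mathcal{A}_N$ (Lemma-analogue \ref{lem:thisunderlies}), that $\Pr[\bot]=o(1)$ (Lemma-analogue \ref{lem:smallbot}), and that the total variation distance from $\mathcal{A}_B$ to either of $\mathcal{A}_P,\mathcal{A}_N$ is bounded by $\Pr[\bot]$ (Lemma-analogue \ref{lem:underdistance}); the triangle inequality then gives $d_{TV}(\mathcal{A}_P,\mathcal{A}_N)=o(1)$, so acceptance probabilities differ by $o(1)$, contradicting the $1/3$ error requirement.

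Finally, Corollary~\ref{cor:D_m_lower} follows immediately: Lemma~\ref{lm:reduction} turns an $\eps$-tester for \textsc{$D_2$-membership$(4n)$} into a $\Theta(\eps)$-tester for \textsc{Truestring equivalence$(n)$} with the same query complexity, so a lower bound of $\Omega(n^{1/5})$ for the latter transfers (up to the constant hidden in $4n$, which does not affect the exponent) to a lower bound of $\Omega(n^{1/5})$ for \textsc{$D_m$-membership} for every $m\ge 2$. The main obstacle I anticipate is not the ``only-$\nochar$'' counting argument, which is a routine birthday/union bound, but rather arranging $\mathcal{D}_N$ so that (i) the two truestrings are genuinely $\eps$-far under Hamming distance \emph{after} the encoding reduction (one has to be careful that a single-bit shift in the abstract pattern does not become a cheap-to-repair perturbation of the padded string), and (ii) the local view of a query under $\mathcal{D}_N$ is \emph{identical in distribution} to its local view under $\mathcal{D}_P$ conditioned on not hitting a meaningful bit — the random offsets must be chosen so that the marginal distribution of ``what you see at index $i$'' is the same in both cases. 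Getting these two requirements simultaneously is where the delicate choice of block structure and of the shift operation in $\mathcal{D}_N$ really matters.
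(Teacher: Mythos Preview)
Your construction has a density problem that makes the farness claim unattainable, not merely delicate. With one meaningful bit per block of length $n^{4/5}$, the truestrings $T(w),T(v)$ have length only $\Theta(n^{1/5})$. Even if they disagree in every position, one can make the pair truestring-equivalent by changing those $\Theta(n^{1/5})$ characters of $w$, so the pair is always $O(n^{-4/5})$-close and never $\eps$-far for any constant $\eps$. More generally, if the fraction of non-$\nochar$ characters is $d$, then any pair is trivially $O(d)$-close to equivalent (replace every meaningful character by $\nochar$), so one is forced to take $d=\Theta(1)$; but then each query hits a meaningful bit with constant probability and the ``with probability $1-o(1)$ the algorithm sees only $\nochar$'' argument collapses entirely. (Incidentally, your hit count should be $o(n^{1/5})\cdot 1/n^{4/5}$, not $o(n^{1/5})\cdot n^{1/5}/n^{4/5}$, though the conclusion $o(1)$ survives.) The obstacle you flag as point~(i) is therefore not a parameter to tune but a structural impossibility for this style of construction.

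The paper avoids this by using density exactly $1/2$: for each $i\le n$, exactly one of positions $i$ and $2n{+}1{-}i$ is $\nochar$, chosen by a fair coin. Queries thus see meaningful bits half the time, and the hiding mechanism is entirely different: when you read $w[i]\neq\nochar$ you still do not know its true index $t_w(i)$, because $t_w(i)$ equals $i$ minus a sum of roughly $i$ independent fair coins and hence (by anti-concentration, Lemma~\ref{lem:scatter}) hits any fixed value with probability $O(i^{-1/2})$. The $\bot$-event is accordingly not ``saw a bit'' but ``produced a \emph{match}''---a query $i$ to $w$ and a query $i'$ to $w'$ with $t_w(i)=t_{w'}(i')$---far from all earlier matches; for a new query at distance $\gtrsim n^{4/5}$ from prior matches, the probability of matching any of the $o(n^{1/5})$ existing true indices on the other side is $o(n^{1/5})\cdot O(n^{-2/5})=o(n^{-1/5})$, and a union bound over $o(n^{1/5})$ stages gives $\Pr[\bot]=o(1)$. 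For $\mathcal{D}_N$ the paper re-randomizes the middle fifth of the length-$n$ truestring, yielding $\Omega(n)$ edit distance and hence constant farness (Lemma~\ref{lem:far}), while an additional $\bot$-clause ensures no match ever has its true index in that region. Your reduction step via Lemma~\ref{lm:reduction} is correct and is exactly what the paper does for the corollary itself.
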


For the lower bound construction let us introduce several definitions. Recall that for a string $w\in\{0,1,\nochar\}^*$, its {\em truestring} $T(w)$ is the subsequence resulting from deleting all ``$\nochar$'' characters. Given a string $u\in\{0,1\}^n$ and a set $U\in\binom{\{1,\ldots,2n\}}{n}$, we denote by $S(u,U)$ the unique string $w\in\{0,1,\nochar\}^{2n}$ for which $U=\{i:w[i]=\mbox{``}\nochar\mbox{''}\}$ and $u=T(w)$.

\begin{definition}[Positive and negative distributions]
We let $u\in\{0,1\}^n$ be chosen uniformly at random (every $u[i]$ independently), and let $u'\in\{0,1\}^n$ be the string resulting from replacing $u[i]$ with another uniformly and independently random member of $\{0,1\}$ for every $2n/5<i<3n/5$. Let $U\in\binom{\{1,\ldots,2n\}}{n}$ be a random set defined by choosing independently and uniformly whether $i\in U$ and $2n+1-i\not\in U$, or $i\not\in U$ and $2n+1-i\in U$, for every $1\leq i\leq n$. Let $U'\in\binom{\{1,\ldots,2n\}}{n}$ be a second set chosen independently using the same distribution as that used for the choice of $U$. For the distribution $\mathcal{D}_P$, we set $w=S(u,U)$ and $w'=S(u,U')$. For the distribution $\mathcal{D}_N$, we set $w=S(u,U)$ and $w'=S(u',U')$.
\end{definition}

\begin{lemma}\label{lem:far}
$\mathcal{D}_P$ is supported over string pairs that are truestring equivalent, while $\mathcal{D}_N$ with probability $1-o(1)$ produces a pair that is $1/200$-far from truestring equivalence.
\end{lemma}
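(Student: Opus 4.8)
The plan is to prove the two halves of the statement separately, as they are of very different natures. The positive side is essentially definitional: in $\mathcal{D}_P$ we have $w=S(u,U)$ and $w'=S(u,U')$, so $T(w)=u=T(w')$ by the very definition of $S(\cdot,\cdot)$, and hence every pair in the support of $\mathcal{D}_P$ is truestring equivalent. I would dispose of this in one or two sentences.

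The substantive part is the negative side. Here $w=S(u,U)$ and $w'=S(u',U')$, where $u'$ differs from $u$ precisely by re-randomizing the coordinates $u[i]$ for $2n/5<i<3n/5$ — a block of roughly $n/5$ coordinates, each of which is resampled independently. First I would observe that since each resampled bit agrees with the original with probability exactly $1/2$ and independently, with probability $1-o(1)$ at least, say, $n/11$ of these coordinates actually change (a standard Chernoff/Hoeffding bound on a Binomial$(n/5-1,1/2)$ variable gives that the number of changes is concentrated around $n/10$, so it exceeds $n/11$ except with exponentially small probability). Condition on this event. The key point is then a combinatorial lower bound on the edit (or rather deletion) distance: I want to show that if $u$ and $u'$ are two binary strings of length $n$ that agree outside a window $W=(2n/5,3n/5)$ but differ in at least $n/11$ positions inside $W$, then turning $w$ into a string truestring-equivalent to $w'$ requires changing $\Omega(n)$ characters. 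Since a character change in $w$ can at best delete one useful bit or insert one $\nochar$ (equivalently, remove one position from the truestring of $w$ or add an arbitrary bit somewhere), it suffices to lower-bound the number of positions one must delete from $u$ and from $u'$ to make the two resulting subsequences identical — i.e. $n$ minus the length of a longest common subsequence of $u$ and $u'$, combined appropriately. The cleanest route is to show $\mathrm{LCS}(u,u')\le n-cn$ for an absolute constant $c>0$.

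To get the LCS bound I would use the fact that $u$ and $u'$ agree on the long common prefix $P=u[1,2n/5]$ and long common suffix $Q=u[3n/5,n]$, each of length about $2n/5$, while inside the middle window they are ``generic'' random-looking strings that disagree a lot. The heuristic is: a long common subsequence of $u$ and $u'$ must align most of $P$ with an early part of $u'$ and most of $Q$ with a late part; because $|P|+|Q|$ is close to $n$, any common subsequence longer than $n-cn$ forces the alignment to be almost the identity alignment on the prefix and on the suffix, which in turn forces it to match almost all of the middle window of $u$ to almost all of the middle window of $u'$ in order — but those two middle windows disagree in $\ge n/11$ positions, a contradiction once $c$ is small enough. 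Making this rigorous is the part that needs care: one should either (i) run a direct counting argument on monotone alignments, bounding how much ``shift'' the alignment can accumulate given the prefix/suffix constraints, and then invoke the per-position disagreement count in the middle; or (ii) observe that because the $\nochar$-positions are chosen symmetrically ($i\in U \Leftrightarrow 2n+1-i\notin U$), the truestrings $T(w),T(w')$ are literally $u,u'$, so the question reduces exactly to: how many single-character edits on $w$ are needed so that its truestring becomes $u'$ — and a single edit changes the truestring by at most one insertion/deletion/substitution, so the edit distance between $w$ and the nearest truestring-equivalent-to-$w'$ string is at least the (deletion-)distance between $u$ and $u'$, which is $\ge$ (number of disagreeing positions in the window)$\;-$ (slack from re-alignment across $P,Q$). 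I expect the main obstacle to be exactly this: ruling out the possibility that a clever non-identity alignment uses the randomness of $P$ and $Q$ to ``absorb'' the middle-window disagreements cheaply. I would handle it by noting that $P$ and $Q$ are themselves random, so with probability $1-o(1)$ they contain no long self-repetitions or near-matches at nonzero shift (another union bound over shifts, each shift costing $2^{-\Omega(n)}$), which pins the optimal alignment to near-identity on the prefix and suffix and delivers the $\Omega(n)$, hence $1/200$-far, conclusion for a suitably small absolute constant after tallying the constants. A final union bound over the $o(1)$ failure events (too few middle-window changes; an anomalous prefix/suffix) gives the ``with probability $1-o(1)$'' in the statement.
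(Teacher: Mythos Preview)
Your treatment of the positive side is fine and matches the paper. For the negative side, your high-level reduction --- each character change to $w$ or $w'$ effects at most one insertion, deletion, or substitution on the corresponding truestring, so the number of changes needed is at least $d_{\mathrm{Lev}}(u,u')$ --- is correct and is exactly the (implicit) first step in the paper. But the plan you outline for actually lower-bounding $d_{\mathrm{Lev}}(u,u')$ has a genuine gap, and it also works harder than necessary in the wrong place.

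The gap is this: conditioning on the \emph{Hamming} distance between the middle windows $M=u[2n/5{+}1,\ldots,3n/5{-}1]$ and $M'$ being at least $n/11$ gives you essentially no control over their edit distance or LCS. For example, $M=(01)^{n/10}$ and $M'=(10)^{n/10}$ have maximal Hamming distance $n/5$ yet edit distance $2$. So the step ``invoke the per-position disagreement count in the middle'' cannot close the argument; you must use that $M$ and $M'$ are \emph{independent uniformly random} strings, not merely that they disagree in many coordinates. Relatedly, your worry about ``clever non-identity alignments using the randomness of $P$ and $Q$'' is a red herring: common prefixes and suffixes can always be stripped, i.e.\ $\mathrm{LCS}(PMQ,PM'Q)=|P|+|Q|+\mathrm{LCS}(M,M')$ (equivalently $d_{\mathrm{Lev}}(PMQ,PM'Q)=d_{\mathrm{Lev}}(M,M')$) holds deterministically for all $P,Q,M,M'$. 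The randomness of $P,Q$ is irrelevant, and there is no ``slack from re-alignment across $P,Q$'' to bound.

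The paper's route is much shorter and avoids both issues: since $M,M'$ are independent uniform strings of length $n/5$, one union-bounds over the $\binom{n/5}{n/100}^{2}=o(2^{16n/100})$ ways of deleting $n/100$ characters from each; for any fixed pair of deletion sets the remaining length-$19n/100$ strings coincide with probability $2^{-19n/100}$, so with probability $1-o(1)$ no such pair works and hence $\mathrm{LCS}(M,M')<19n/100$. Combined with the stripping identity and your first step, this yields $d_{\mathrm{Lev}}(u,u')\ge n/100$ (the paper phrases it as edit distance $\ge n/50$ under the insertion/deletion metric), which is the required $\Omega(n)$ lower bound.
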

\begin{proof}
The first part of the statement is immediate. The second part follows from the fact that two strings of length $n/5$ drawn uniformly and independently at random will have an edit distance between them of at least $n/100$. For showing this consider all $\binom{n/5}{n/100}^2=o(2^{16n/100})$ possible ways of deleting $n/100$ characters from the first string and $n/100$ characters from the second string. For every such possibility the probability for the remaining strings to match is $\Theta(2^{-19n/100})$. A union bound concludes the argument.
\end{proof}

In the rest of the section we prove the next lemma, which by Yao's argument implies (together with Lemma \ref{lem:far}) Theorem~\ref{th:TE_lower}.
\begin{lemma}\label{lem:undistinguishable}
Any deterministic algorithm making $o(n^{1/5})$ queries will have acceptance probabilities for $\mathcal{D}_P$ and $\mathcal{D}_N$ that differ by $o(1)$.
\end{lemma}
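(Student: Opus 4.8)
The plan is to follow the three-step methodology described in the introduction, replacing a direct comparison of $\mathcal{D}_P$ and $\mathcal{D}_N$ by a comparison through an auxiliary ``base'' distribution over transcripts. First I would fix an arbitrary deterministic algorithm $\mathcal{A}$ making $q = o(n^{1/5})$ queries, and pass from the input distributions to the induced distributions over transcripts: let $\mathcal{A}_P$ (resp. $\mathcal{A}_N$) be the distribution over the sequence of (index, answer) pairs produced when $\mathcal{A}$ runs on an input drawn from $\mathcal{D}_P$ (resp. $\mathcal{D}_N$). Since $\mathcal{A}$'s accept/reject decision is a deterministic function of the transcript, it suffices to bound the total variation distance $\|\mathcal{A}_P - \mathcal{A}_N\|_{\mathrm{TV}} = o(1)$, and for that I would invoke the triangle inequality through a third distribution $\mathcal{A}_B$ on transcripts augmented with a wild symbol $\bot$. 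The informal meaning of $\bot$ is: the algorithm's queries have straddled the ``random middle'' $2n/5 < i < 3n/5$ (where $w'$ differs from $w$ in $\mathcal{D}_N$) in a way that reveals which of the two scenarios it is in — concretely, $\bot$ should absorb the event that the algorithm reads a meaningful (non-``$\nochar$'') bit of both $w$ and $w'$ at the same truestring position inside the middle region, or more precisely the event that the queried ``$\nochar$''-patterns let the algorithm align the two truestrings through the middle.

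The core of the argument is then Lemma~\ref{lem:thisunderlies} (the ``underlies'' property): I would define $\mathcal{A}_B$ so that for every ordinary transcript $\sigma$, $\mathcal{A}_B(\sigma) \le \min\{\mathcal{A}_P(\sigma), \mathcal{A}_N(\sigma)\}$, with the remaining mass $\mathcal{A}_B(\bot) = 1 - \sum_\sigma \mathcal{A}_B(\sigma)$. The cleanest way to get such a $\mathcal{A}_B$ is to build a coupling: generate the randomness common to both distributions (the bits of $u$ outside the middle, and the ``$\nochar$''-placement decisions) first, run $\mathcal{A}$ against this partial information, and only when $\mathcal{A}$ actually needs a bit it cannot deduce — i.e. a meaningful middle bit — do the two scenarios diverge; declare $\bot$ exactly at that branching point. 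Then on the no-$\bot$ event the transcript has identical probability under both $\mathcal{A}_P$ and $\mathcal{A}_N$ (this is where the symmetry of the $U,U'$ construction is used: in $\mathcal{D}_P$ both $w,w'$ carry the same $u$, while in $\mathcal{D}_N$ they carry $u,u'$, but if the algorithm never simultaneously exposes a middle position in both truestrings, the conditional law of its observations is the same). Given the underlies property, Lemma~\ref{lem:underdistance} is the standard fact that $\|\mathcal{A}_P - \mathcal{A}_B\|_{\mathrm{TV}} = \mathcal{A}_B(\bot)$ and likewise for $\mathcal{A}_N$, so $\|\mathcal{A}_P - \mathcal{A}_N\|_{\mathrm{TV}} \le 2\,\mathcal{A}_B(\bot)$.

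It then remains to prove Lemma~\ref{lem:smallbot}: $\mathcal{A}_B(\bot) = o(1)$. This is the step I expect to be the main obstacle, and it is essentially a birthday-type / alignment estimate. The intuition is that to ``hit'' a meaningful middle bit of both truestrings at a matching position, the algorithm must have correctly guessed, for both $w$ and $w'$, how many ``$\nochar$'' characters precede a queried index inside the middle block — but the ``$\nochar$''-placement is random and independent between $w$ and $w'$ (each pair $i, 2n+1-i$ is flipped by an independent coin), so the truestring-index of a queried position in the middle region is a sum of $\Theta(n)$ independent bits, spread over a range of width $\Theta(\sqrt n)$ by anticoncentration. For the algorithm to line up a queried middle position of $w$ with a queried middle position of $w'$ to the same truestring index, with only $q$ queries in each string, a union bound over the $\le q^2$ candidate pairs times the per-pair alignment probability $O(1/\sqrt n)$ gives $O(q^2/\sqrt n) = O(n^{2/5}/n^{1/2}) = O(n^{-1/10}) = o(1)$. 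I would have to be careful that adaptivity does not help here — the point is that before the algorithm sees any middle ``$\nochar$''-count it cannot steer its queries, and the counts it does see in one string are independent of those in the other, so conditioning on the transcript-so-far the relevant alignment event still has probability $O(1/\sqrt n)$ per new pair; making this conditioning argument fully rigorous (ordering the queries, and handling the case where the middle region is reached from the ends) is the delicate part. Once $\mathcal{A}_B(\bot)=o(1)$ is established, combining with the previous paragraph yields that the acceptance probabilities differ by $o(1)$, completing the proof of Lemma~\ref{lem:undistinguishable}.
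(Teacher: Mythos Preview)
Your high-level plan is exactly the paper's: pass to transcript distributions $\mathcal{A}_P,\mathcal{A}_N$, build an underlying $\mathcal{A}_B$ with a wild symbol $\bot$, show $\mathcal{A}_B$ underlies both, and bound $\Pr[\bot]$. Where you diverge is in the \emph{choice} of the $\bot$ event and the anticoncentration scale. You take $\bot$ to be ``some queried pair $(i,i')$ has $t_w(i)=t_{w'}(i')$ in the middle fifth,'' and bound $\Pr[\bot]$ by $O(q^2/\sqrt{n})$ using the full $\Theta(\sqrt{n})$ spread of a true index near position $n$. The paper instead declares $\bot$ whenever a new matching index lands more than $100n^{4/5}$ away from all previous matches (the set $I$ being seeded with $0$), together with a Chernoff-handled side event that forces $t_w(i)\notin[2n/5,3n/5]$ for all $i<n/2$. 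On the no-$\bot$ event all matches are then confined to $[0,\,q\cdot 100n^{4/5}]\subset[0,n/5]$, hence automatically outside the middle; the anticoncentration is applied only across a window of length $100n^{4/5}$, giving the looser per-pair bound $O(n^{-2/5})$ but still $q\cdot q\cdot O(n^{-2/5})=o(1)$.

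The reason the paper pays this numerical price is precisely the ``delicate'' conditioning you flag. Under $\mathcal{D}_P$ the transcript reveals not only $\nochar$-statuses but also $u$-values, and coincidences such as $w[i]=w'[i']$ are Bayesian evidence for $t_w(i)=t_{w'}(i')$; this leaks information about $(U,U')$ beyond the queried coordinates, so it is not immediate that the conditional law of $t_w(i)$ given the history still has $\Theta(\sqrt{n})$ spread. The paper's $\bot$ event sidesteps this entirely: whenever a match could be created at the current query $i$, there is by construction a gap $[i_L,i]$ of length $>100n^{4/5}$ containing at least $50n^{4/5}$ \emph{unqueried} positions, and the $\nochar$-status at each unqueried position is a fresh fair coin independent of the entire history (the revealed $u$-bits being independent of $U$). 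This gives a clean $\Pr[t_w(i)=j\mid\text{history}]\le 1/\sqrt{50n^{4/5}}$ without any Bayesian bookkeeping. Your tighter $O(q^2/\sqrt{n})$ route is plausible and would in fact give more slack, but you would need to actually control how much the observed $u$-correlations across the two strings can skew the posterior on $U$; as written this step is asserted rather than proved.
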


Toward the proof of this lemma, we use the following definitions in the analysis.

\begin{definition}[True index]
	Given a string $w$ and $1\leq i\leq 2n$ for which $w[i]\neq\nochar$, the {\em true index} $t_w(i)$ is defined as $|\{j\leq i:w[j]\neq\nochar\}|$. In other words, it is the index $j$ such that $w[i]$ determines $(T(w))[j]$.
\end{definition}

The next definition is adapted for our particular distributions $\mathcal{D}_P$ and $\mathcal{D}_N$, where for every $i$ and $i'$ exactly one of $w[i]$ and $w[2n+1-i]$ is $\nochar$ and exactly one of $w'[i']$ and $w'[2n+1-i']$ is $\nochar$.

\begin{definition}[Matching indexes]
	Given the strings $w$ and $w'$, and the indexes $1\leq i\leq 2n$ and $1\leq i'\leq 2n$, we call $(i,i')$ {\em matching} with respect to $(w,w')$ if either $w[i],w'[i']\neq\nochar$ and $t_w(i)=t_{w'}(i')$, or $w[2n+1-i],w'[2n+1-i']\neq\nochar$ and $t_w(2n+1-i)=t_{w'}(2n+1-i')$  (we omit below the pair notation, e.g.\ ``$(i,i')$'' and ``$(w,w')$'', whenever it is clear which index among $i$ and $i'$ refers to $w$ and which index refers to $w'$).
\end{definition}

Ironically, the analysis of a deterministic adaptive algorithm becomes much easier if we allow the input query oracle to ``leak'' additional information when answering a query. We define what we mean below. The following definition is valid for any finite domain, but for our purposes we define a domain $D := \{1,\ldots,2n\}\times\{1,2\}$, referring to two length-$2n$ strings. 

\begin{definition}[Super-oracle]
	A \emph{super-oracle} for an input $W$ over the domain $D$ is an algorithm, intended to be called whenever a query to $W$ is made, satisfying the following features.
	\begin{itemize}
		\item The algorithm holds $W$ and possibly other variables (i.e.\ it is not stateless), which are given initial values before the first query.
		\item Whenever a query $i\in D$ is made, the algorithm provides a set of indexes $I$ and all values of $W|_I$. It is mandated that $i\in I$, and assumed (without loss of generality) that $I$ also contains all indexes provided in the previous query (except when responding to the very first query).
	\end{itemize}
\end{definition}

It is immediate that a testing algorithm (working with a ``regular'' oracle) can be converted to a testing algorithm working with a super-oracle $\mathcal{Q}$, by simply ignoring the additional information, which leads to the following observation.

\begin{observation}\label{obs:supenough}
	To prove Lemma \ref{lem:undistinguishable}, it is enough to construct a super-oracle $\mathcal{Q}$, and prove that the distributions over the transcript of any algorithm $\mathcal{A}$ with $q=o(n^{1/5})$ queries running against~$\mathcal{Q}$, when the input is drawn by either $\mathcal{D}_P$ or $\mathcal{D}_N$, are $o(1)$-close to each other.
\end{observation}

Another assumption that is without loss of generality, and that we make from now on, is that the algorithm $\mathcal{A}$ never makes a query which has already been revealed in response to a previous query (the algorithm can just read internally the value that has already been given), and in particular we assume that the algorithm terminates immediately (and gives the correct answer) when $\mathcal Q$ reveals the entirety of the input $W$. We denote this ``total revelation'' event by ``$\bot$''.

The following observation about deterministic algorithms interacting with queries is well-known for direct queries, and also holds (with the same proof) for algorithms interacting with a specific super-oracle.

\begin{observation}\label{obs:lastrev}
	For a deterministic algorithm $\mathcal{A}$ working against any super-oracle $\mathcal{Q}$ with any input $W$, under the (without loss of generality) assumptions that the set revealed by $\mathcal{Q}$ in response to a query always contains the previously revealed set and that $\mathcal{A}$ never queries a previously revealed location, the transcript of the algorithm is fully determined by the final revealed set $I$ and the values $W|_I$ of the input over it.
	
	Additionally, the last revealed set $I$ and the values $W|_I$ of the input over it (as provided to $\mathcal{A}$ by~$\mathcal{Q}$) completely determine its state and its transcript so far (including the number of rounds already executed).
\end{observation}

The above observation however is not as useful as it seems at first, since (unlike the setting where an algorithm queries the input without a super-oracle) the probability of an algorithm to reach a set characterized by $I$ and $W|_I$ where $W$ is drawn from some distribution $\mathcal{D}$ is not necessarily the probability of the input to have the prescribed values over $I$. The reason is that $I$ itself might depend on values of the input outside this set. However, we will construct a super-oracle that satisfies an additional feature that helps.

\begin{definition}
	A super-oracle is called \emph{oblivious} if at any point, either the entirety of the input is revealed (the ``$\bot$'' event), or the revealed set $I$ depends only on the history of the algorithm's query locations and the input values on the revealed locations (the conditions for triggering the ``$\bot$'' event can still depend on the entirety of the input).
\end{definition}

This allows us to restore the convenient analysis of algorithms as decision trees, with a caveat concerning the $\bot$ event which we have to analyze separately.

\begin{observation}\label{obs:condprob}
	A deterministic algorithm $\mathcal{A}$ working with an oblivious super-oracle $\mathcal{Q}$ corresponds to a decision tree, where every leaf is labeled with an output (``accept'' or ``reject''), every inner node is labeled by the next revealed set $I$ (since $\mathcal{Q}$ is oblivious, the set is determined only by the history on the path leading to it unless $\bot$ has been triggered), and every edge from a node to a child corresponds to a possible set of values $W|_I$.
	
	Accordingly, when the input is drawn from a distribution $\mathcal{D}$, the probability of reaching a certain leaf whose parent contains the revealed set $I$ (containing the sets of its ancestors) and whose incoming edge corresponds to some value of $W|_I$, is the probability of the intersection of the following two events: The event of an input drawn from $\mathcal D$ to have the corresponding values over $I$, and the negation of the ``$\bot$'' (total reveal) event.
\end{observation}

The exact calculation in the above observation is still complex, but we will only need to have upper bounds as per the following definition.

\begin{definition}
	Given a set $S$ and an additional symbol ``$\bot$'', we say that a distribution $\mathcal{D}$ over $S\cup\{\bot\}$ {\em underlies} a distribution $\mathcal{C}$ over $S\cup\{\bot\}$ if for every $a\in S$ (not including ``$\bot$'') we have $\mathcal{D}(a)\leq\mathcal{C}(a)$.
\end{definition}

The following is immediate.

\begin{observation}\label{obs:underdistance}
	If $\mathcal{D}$ underlies $\mathcal{C}$, both being distributions over $S\cup\{\bot\}$, then the variation distance between them is at most $\mathcal{D}(\bot)$.
\end{observation}
\begin{proof}
	The distance between the two distributions is
	$$\frac12\sum_{a\in S\cup\{\bot\}}|\mathcal{D}(a)-\mathcal{C}(a)|=\sum_{\{a\in S\cup\{\bot\}:\mathcal{D}(a)>\mathcal{C}(a)\}}(\mathcal{D}(a)-\mathcal{C}(a))$$
	In our case, the right hand is a sum over a set that may only contain $\bot$ itself, so the difference is bounded by $\mathcal{D}(\bot)-\mathcal{C}(\bot)\leq \mathcal{D}(\bot)$.
\end{proof}

In our case, we analyse the following distributions.

\begin{definition}
	Given an algorithm $\mathcal{A}$, a super oracle $\mathcal{Q}$ and a distribution $\mathcal{D}$ over inputs, we denote by $\mathcal{A}^{\mathcal{Q}}_{\mathcal{D}}$ the following distribution over $L \cup\{\bot\}$, where $L$ is the set of leaves of the decision tree corresponding to $\mathcal{A}$ and $\mathcal{Q}$ by Observation \ref{obs:condprob}: 
    \begin{itemize}
        \item For $\ell \in L$, $\mathcal{A}^{\mathcal{Q}}_{\mathcal{D}}(\ell)$ equals the probability of reaching $\ell$ (without triggering the $\bot$ event),
        \item $\mathcal{A}^{\mathcal{Q}}_{\mathcal{D}}(\bot)$ equals the probability of the $\bot$ event being triggered (which always implies a correct output by the algorithm).
    \end{itemize}

	For the distributions $\mathcal{D}_P$ and $\mathcal{D}_N$ defined above, we use the shorthand $\mathcal{A}^{\mathcal{Q}}_P$ and $\mathcal{A}^{\mathcal{Q}}_N$ respectively for $\mathcal{A}^{\mathcal{Q}}_{\mathcal{D}_P}$ and $\mathcal{A}^{\mathcal{Q}}_{\mathcal{D}_N}$. 
\end{definition}

We have to be careful about the closeness guarantees if we want to use Observation \ref{obs:supenough}.

\begin{observation}\label{obs:supcareful}
	To prove Lemma \ref{lem:undistinguishable}, it is enough to construct a super-oracle $\mathcal{Q}$, and prove that  $\mathcal{A}^{\mathcal{Q}}_P(\bot)=o(1)$, that $\mathcal{A}^{\mathcal{Q}}_N(\bot)=o(1)$, and  that $\mathcal{A}^{\mathcal{Q}}_P$ and $\mathcal{A}^{\mathcal{Q}}_N$ are $o(1)$-close to each other.
\end{observation}
\begin{proof}
	We use Observation \ref{obs:supenough}, but note that the $\bot$ event does not imply identical outputs of $\mathcal{A}$ under $\mathcal{D}_P$ and $\mathcal{D}_N$. In fact it implies quite the opposite (the $\bot$ event causes $\mathcal{A}$ to provide the correct output, which is different for the two distributions). The distance between the algorithmic behaviors under the two distributions is bounded by $d(\mathcal{A}^{\mathcal{Q}}_P,\mathcal{A}^{\mathcal{Q}}_N)+\frac12\left(\mathcal{A}^{\mathcal{Q}}_P(\bot)+\mathcal{A}^{\mathcal{Q}}_N(\bot)\right)$, which by the assertions of the observation is $o(1)$.
\end{proof}

We now define a specific super-oracle $\mathcal Q$ to be used with an input drawn by either $\mathcal{D}_P$ or $\mathcal{D}_N$ (defined above), and denote by $\mathcal{A}^{\mathcal{Q}}_P$ and $\mathcal{A}^{\mathcal{Q}}_N$ the respective distributions over transcripts when a deterministic algorithm $\mathcal A$ is run against $\mathcal Q$ with the input drawn by $\mathcal{D}_P$ and $\mathcal{D}_N$. We also define a super-oracle $\mathcal{Q}^-$ for the sake of some interim lemmas. 

\begin{definition}
	The super-oracle $\mathcal{Q}$ is defined as per Algorithm \ref{alg:supo}. The super-oracle $\mathcal{Q}^-$ is defined identically with the exception that Step 4 is not run (meaning that the oracle never performs a ``total reveal'' of the input).
\end{definition}

\begin{algorithm}
	\caption{Super-oracle $\mathcal{Q}$ for truestring equivalence}
	\label{alg:supo}
	Query provider for:  Input strings $w,w':\{1,\ldots,2n\}\to\{0,1\}$\\
	Persistent variables: $i_m\in\{0,\ldots,n\}$, initialized to $0$, and $Q,Q'\subseteq\{1,\ldots,n\}$, both initialized to $\emptyset$\\
	Activation parameter: A query $j$ from $w$ or a query $j'$ from $w'$
	\begin{enumerate}
		\item If a query $j\in\{1,\ldots,2n\}$ is requested from $w$, add $\min\{j,2n+1-j\}$ to $Q$
		\item If a query $j'\in\{1,\ldots,2n\}$ is requested from $w'$, add $\min\{j',2n+1-j'\}$ to $Q'$
		\item While there exists $j\in Q\cup Q'$ with $j\le i_m+100n^{4/5}$, set $i_m$ to $\max\{j,i_m\}$, set $Q$ to $Q\setminus \{1,\ldots,j\}$, and set $Q'$ to $Q'\setminus \{1,\ldots,j\}$
		\item If there exist $j\in Q$ and $j'\in Q'$ that are matching with respect to $w$ and $w'$, reveal the entirety of $w$ and $w'$ and terminate
		\item Reveal all values in $\{1,\ldots,i_m,2n+1-i_m,\ldots,2n\}\cup\bigcup\big\{\{j,2n+1-j\}:j\in Q\big\}$ from $w$ and  all values in $\{1,\ldots,i_m,2n+1-i_m,\ldots,2n\}\cup\bigcup\big\{\{j',2n+1-j'\}:j'\in Q'\big\}$ from $w'$
	\end{enumerate}
\end{algorithm}

The following observations are immediate.

\begin{observation}
	The super-oracle $\mathcal{Q}$ of Algorithm \ref{alg:supo} is oblivious (and so is $\mathcal{Q}^-$). In fact, outside the $\bot$ event, the revealed set only depends on the queried locations and not at all on the input (the~$\bot$ event itself as per Step 4 of the algorithm can depend on the entire input).
\end{observation}

\begin{observation}\label{obs:negcov}
	The distribution $\mathcal{A}^{\mathcal{Q}}_N$ underlies the distribution $\mathcal{A}^{\mathcal{Q}^-}_N$.
\end{observation}
\begin{proof}
    We note that the tree corresponding to $\mathcal{A}$ running with $\mathcal{Q}^-$ can be used as the tree for~$\mathcal{A}$ running with $\mathcal{Q}$, since the only difference is that some inputs can trigger $\bot$ events in some tree nodes with $\mathcal{Q}$ (but not with $\mathcal{Q}^-$). The observation then follows from a direct application of Observation~\ref{obs:condprob}.
\end{proof}

We next note that querying an index $i$ from $w$ will always reveal both $w[i]$ and $w[2n+1-i]$ (and similarly for querying an index from $w'$), and henceforth assume that all queries are made in the range $\{1,\ldots,n\}$. The following lemma bounds the increase of $i_m$ during $q$ interaction rounds.

\begin{lemma}\label{lem:imbound}
	After $q$ interaction rounds of a testing algorithm $\mathcal A$ against the super-oracle $\mathcal{Q}$ we have $i_m\leq q\cdot 100n^{4/5}$.
\end{lemma}
\begin{proof}
	Note that every round can add at most one index to $Q\cup Q'$. Any increase in $i_m$ involves the removal of at least one index from $Q\cup Q'$ (which could happen at a later round than the round where that index was inserted), and (by the condition of Step 3) each such removal increases $i_m$ by at most $100n^{4/5}$. The lemma follows.
\end{proof}

For given $U,U'\subset\{1,\ldots,2n\}$ which have a positive probability of being drawn as per the definition of $\mathcal{D}_P$ and $\mathcal{D}_N$, we define $\mathcal{A}^{\mathcal{Q}}_{P:U,U'}$ and $\mathcal{A}^{\mathcal{Q}}_{N:U,U'}$ as the respective distributions derived from $\mathcal{A}^{\mathcal{Q}}_P$ and $\mathcal{A}^{\mathcal{Q}}_N$ when we condition the drawing of the input on the specific $U$ and $U'$. We define $\mathcal{A}^{\mathcal{Q}^-}_{P:U,U'}$ and $\mathcal{A}^{\mathcal{Q}^-}_{N:U,U'}$ analogously. The following lemma is a crucial companion to Observation \ref{obs:negcov}.

\begin{lemma}\label{lem:pozcov}
	For any algorithm making $q=o(n^{1/5})$ queries and $n$ large enough, the distribution $\mathcal{A}^{\mathcal{Q}}_P$ also underlies the distribution $\mathcal{A}^{\mathcal{Q}^-}_N$.
\end{lemma}
\begin{proof}
	For $n$ large enough we have $q\cdot 100n^{4/5}<2n/5$, meaning by Lemma \ref{lem:imbound} that  $i_m$ (and $2n+1-i_m$) never reach a range in $w$ or $w'$ where the respective $u$ and $u'$ (as drawn in the definition of $\mathcal{D}_P$ and $\mathcal{D}_N$) differ. Additionally, $\mathcal{Q}^-$ never reveals the entire input. As with the proof of Observation \ref{obs:negcov}, we note that the tree corresponding to $\mathcal{A}$ running with $\mathcal{Q}^-$ can be used as the tree for $\mathcal{A}$ running with $\mathcal{Q}$. Let us now consider a leaf of this tree. It corresponds to the revealed set of its parent, calculated in Step 5 of Algorithm \ref{alg:supo} using $i_m$, $Q$ and $Q'$, and the corresponding values of the input,  namely $w|_{\{1,\ldots,i_m,2n+1-i_m,\ldots,2n\}\cup\bigcup\{\{j,2n+1-j\}:j\in Q\}}$ and $w'|_{\{1,\ldots,i_m,2n+1-i_m,\ldots,2n\}\cup\bigcup\{\{j',2n+1-j'\}:j'\in Q'\}}$.
	
	Recall that  an input drawn by $\mathcal{D}_P$ or $\mathcal{D}_N$ is a function of the sets $U,U'\subseteq \{1,\ldots,2n\}$ and the strings $u,u'\in\{0,1\}^n$. Recall also that $U$ and $U'$ are drawn in the same way and play the same role in the two distributions, and that additionally the total reveal behaviour of $\mathcal{Q}$ depends only on $U$ and $U'$ (which determine the true indexes in $w$ and $w'$) and the query history, and does not depend at all on $u$ or $u'$. Hence, to show for some leaf $a$ that $\mathcal{A}^{\mathcal{Q}}_P(a)\leq\mathcal{A}^{\mathcal{Q}^-}_N(a)$, it is enough to consider any $U$ and $U'$ that  are consistent with the revealed parts of $w$ and $w'$ and do not cause $\mathcal{Q}$ to reveal the entire input (by Algorithm \ref{alg:supo}, Step 4) on the respective root-to-leaf path, and show that  the corresponding conditional probabilities satisfy $\mathcal{A}^{\mathcal{Q}}_{P:U,U'}(a)\leq\mathcal{A}^{\mathcal{Q}^-}_{N:U,U'}(a)$.
	
	In fact, we show that the above two conditional probabilities are identical, by showing that the respective conditional distributions over $w|_{\{1,\ldots,i_m,2n+1-i_m,\ldots,2n\}\cup\bigcup\{\{j,2n+1-j\}:j\in Q\}}$ and $w'|_{\{1,\ldots,i_m,2n+1-i_m,\ldots,2n\}\cup\bigcup\{\{j',2n+1-j'\}:j'\in Q'\}}$ are identical (recall that in the decision tree, when it is guaranteed that the $\bot$ event cannot be triggered along the path, reaching the leaf $a$ is equivalent to having the respective values over the revealed sets).
	
	We now analyze the distribution over the revealed values of $w$ and $w'$ when $U$ and $U'$ are fixed as above, and $u$ and $u'$ are drawn from $\{0,1\}^n$ according to the definition of $\mathcal{D}_P$ and $\mathcal{D}_N$ (which as we recall, are independent of $U$ and $U'$). 
	Note that $u$ is drawn (as per the definition of both $\mathcal{D}_P$ and $\mathcal{D}_N$) uniformly from $\{0,1\}^n$, which (since $i_m\leq 2n/5$) together with $U$ and $U'$ determines all values of $w|_{\{1,\ldots,i_m,2n+1-i_m,\ldots,2n\}}$ and $w'|_{\{1,\ldots,i_m,2n+1-i_m,\ldots,2n\}}$ (recall that $u$ is identical to $u'$ on the relevant parts).
	
	Next, for every $j\in Q$ which matches some $1\leq i'\leq i_m$, again the values of $w$ at $j$ and $2n+1-j$ are completely determined by the already drawn sets ($U$ determines which of them is ``$\nochar$'' and which of them takes the non-$\nochar$ value among $w'[i']$ and $w'[2n+1-i']$, the latter value being equal to either $u'[t_{w'}(i')]=u[t_{w'}(i')]$ or $u'[t_{w'}(2n+1-i')]=u[t_{w'}(2n+1-i')]$, which were already treated above). An analogous consideration holds for any $j'\in Q'$ which matches some $1\leq i\leq i_m$. Denote by $\tilde{Q}$ and $\tilde{Q}'$ respectively the remaining indexes in $Q$ and $Q'$, namely those not matching any index between $1$ and $i_m$.
	
	Considering the values of $w$ and $w'$ on indices in $\tilde{Q}$ and $\tilde{Q}'$ (which are the only values of $w$ and $w'$ not treated above), note that by $\bot$ not occurring, none of them matches any other index in the revealed parts of $w$ and $w'$. This means that for any $j\in\tilde{Q}$, the value among $w[j]$ and $w[2n+1-j]$ which is not ``$\nochar$'' (as determined by $U$) is distributed uniformly over $\{0,1\}$ and independently of all other values that need to be revealed, and same goes for any $j'\in\tilde{Q}'$ not covered above with respect to the value among $w'[j']$ and $w'[2n+1-j']$ which is not ``$\nochar$'' (as determined by $U'$). This again holds over both $\mathcal{D}_P$ and $\mathcal{D}_N$.
	
	We have shown here that for every possible revealed set outside the total-reveal ``$\bot$'' event, the probability for any revealed input is the same for $\mathcal{D}_P$ and $\mathcal{D}_N$, which as discussed above in particular means that $\mathcal{A}^{\mathcal{Q}}_{P:U,U'}(a)\leq\mathcal{A}^{\mathcal{Q}^-}_{N:U,U'}(a)$, and hence (by considering all $U,U'$ that allow $\mathcal{A}$ to reach $a$ over $\mathcal{Q}$, which form a subset of the corresponding set over $\mathcal{Q}^-$)  we have $\mathcal{A}^{\mathcal{Q}}_P(a)\leq\mathcal{A}^{\mathcal{Q}^-}_N(a)$.
\end{proof}

It remains to show that the total reveal event $\bot$ happens with probability $o(1)$ as $n$ grows large. To give a good bound on two specific indexes from $Q$ and $Q'$ matching each other, we use the following technical lemma that follows from Stirling's formula.

\begin{lemma}\label{lem:scatter}
	If $X_1,\ldots,X_m$ are independent random variables with values chosen uniformly from $\{0,1\}$, then for $m$ large enough and all $j$ we have $\Pr [\sum_{i=1}^m X_i=j]\leq 1/\sqrt{m}$.
\end{lemma}

In our super-oracle (Algorithm \ref{alg:supo}) we make sure (by Step 3) that the $\bot$ event can only be triggered by an index that is not ``too close'' to the ``revealed range'' as it relates to $i_m$ (see Algorithm \ref{alg:supo}, Step 5).  Such indexes are subject to the following bound.

\begin{lemma}\label{lem:onebot}
	For $n$ large enough, given any $1\leq i\leq  n$, $i+100n^{4/5}< j\leq n$, and $Q'\subset\{1,\ldots,n\}$ with $\min Q'>i+100n^{4/5}$, the probability (under both $\mathcal{D}_P$ and $\mathcal{D}_N$) that $j$ matches any $j'\in Q'$ with respect to $w$ and $w'$, is bounded by $|Q'|/5n^{2/5}$, also when conditioned on the entirety  of $w$ and the values $w'|_{\{1,\ldots,i,2n+1-i,\ldots,2n\}\cup\bigcup\{\{j',2n+1-j'\}:j'\in Q'\}}$. The analogous statement holds for $i+100n^{4/5}< j'\leq n$ and $Q\subset\{1,\ldots,n\}$ with $\min Q>i+100n^{4/5}$ when conditioning on the entirety  of $w'$ and the values $w|_{\{1,\ldots,i,2n+1-i,\ldots,2n\}\cup\bigcup\{\{j,2n+1-j\}:j\in Q\}}$.
\end{lemma}
\begin{proof}
	We prove the lemma for the first statement, since the proof for the second statement is indeed analogous (swapping $w$ and $w'$). Fixing some $j'\in Q'$, since the event of $j$ matching $j'$ depends only on the identity of $U$ and $U'$ (which have the same distribution under both $\mathcal{D}_P$ and $\mathcal{D}_N$), the following analysis holds for both distributions. Recall that exactly one of $w[j]$ and $w[2n+1-j]$ is~``$\nochar$'', and the same goes for exactly one of $w'[j']$ and $w'[2n+1-j']$. We next prove that the probability that $j$ matches $j'$ is at most $1/5n^{2/5}$. To do so, we prove that the bound holds even if we additionally condition on any value of $w'|_{\{\min Q',\ldots,j'\}}$. The bound is proved by considering each of the following four cases.
	
	Consider first the case where $w[j],w'[j']\neq\nochar$. Let $X_k$ be the indicator variable for ``$w'[k+i]\neq\nochar$'' for $1 \le k \le m$, where $m=\min Q'-i-1\geq 100n^{4/5}-1>25n^{4/5}$. The locations $j$ and $j'$ match if and only if
    $$\sum_{k = 1}^{m} X_k = |\{1\leq l<j:w[l]\neq\nochar\}|-|\{1\leq l\leq i:w'[l]\neq\nochar\}|-|\{\min Q'\leq l<j':w'[l]\neq\nochar\}|$$
    By Lemma~\ref{lem:scatter}, the probability of this event is at most $1/\sqrt{25n^{4/5}}=1/5n^{2/5}$.
	
	In the case where $w[j]=w'[j']=\nochar$ (where $j$ and $j'$ can match due to the true indexes at $2n+1-j$ and $2n+1-j'$), we again use Lemma \ref{lem:scatter}, where each $X_k$ is the indicator variable for ``$w'[2n+1-k-i]\neq\nochar$'' for $1 \le k \le m$, where $m=\min Q'-i-1\geq 100n^{4/5}-1>25n^{4/5}$. We then have that $j,j'$ match if and only if
\begin{eqnarray*}
\sum_{k=1}^{m} X_k ~=~ \left|\left\{2n+1-j < l \leq 2n : w[l] \neq \nochar\right\}\right| & \!\!- & \!\!\left|\left\{2n+1-i \leq l \leq 2n : w'[l] \neq \nochar\right\}\right|\\
& \!\!- & \!\!\left|\left\{2n+1-j' < l \leq 2n+1-\min Q' :  w'[l] \neq \nochar\right\}\right|
\end{eqnarray*}
    By Lemma~\ref{lem:scatter}, the probability of this event is at most $1/\sqrt{25n^{4/5}}=1/5n^{2/5}$.
	
	In the final two cases, where exactly one of $w[j]$ and $w'[j']$ is ``$\nochar$'', no matching can occur. Thus the probability for a match between $j$ and $j'$ is bounded by $1/5n^{2/5}$. To finalize, we consider the union bound of this event for every $j'\in Q'$, to arrive at the $|Q'|/5n^{2/5}$ bound for the probability that $j$ matches any $j'\in Q'$.
\end{proof}

It is now easy to prove that the probability for $\bot$ under both $\mathcal{D}_P$ and $\mathcal{D}_N$ is small.

\begin{lemma}\label{lem:allbot}
	For any algorithm $\mathcal{A}$ making $q=o(n^{1/5})$ queries and the super-oracle $\mathcal{Q}$, we have $\mathcal{A}^{\mathcal{Q}}_P(\bot)=o(1)$ and $\mathcal{A}^{\mathcal{Q}}_N(\bot)=o(1)$.
\end{lemma}
\begin{proof}
	This result follows from a simple union bound using the per-round bound of Lemma \ref{lem:onebot}, noting that $q^2/5n^{2/5}=o(1)$.
\end{proof}

\begin{proof}[Proof of Lemma \ref{lem:undistinguishable} (implying Theorem \ref{th:TE_lower} and Corollary \ref{cor:D_m_lower})]
	Given any algorithm $\mathcal{A}$ making $q=o(n^{1/5})$ queries, by Lemma \ref{lem:allbot} the probability of $\bot$ (the total reveal event) under both distributions $\mathcal{A}^{\mathcal{Q}}_P$ and $\mathcal{A}^{\mathcal{Q}}_N$ is $o(1)$. This implies by Observation \ref{obs:negcov} and Observation \ref{obs:underdistance} that $d(\mathcal{A}^{\mathcal{Q}}_N,\mathcal{A}^{\mathcal{Q}^-}_N)=o(1)$, and by Lemma \ref{lem:pozcov} and Observation \ref{obs:underdistance} that $d(\mathcal{A}^{\mathcal{Q}}_P,\mathcal{A}^{\mathcal{Q}^-}_N)=o(1)$. Thus $d(\mathcal{A}^{\mathcal{Q}}_P,\mathcal{A}^{\mathcal{Q}}_N)=o(1)$.
	
	Putting the above together we have all the components to deduce by Observation \ref{obs:supcareful} the validity of Lemma \ref{lem:undistinguishable}.
\end{proof}

\section{Acknowledgements}
We thank Sofya Raskhodnikova, Diptaksho Palit, and Timothy Jackman for pointing out a flaw in our lower bound proof, which enabled us to correct it. 

\bibliographystyle{plain}
\bibliography{main}
\end{document}